\synctex=1
\documentclass[screen,acmsmall,nonacm]{acmart}

\newif{\ifsubmission}           %
\submissionfalse

\acmJournal{PACMPL}
\acmVolume{1}
\acmNumber{CONF} %
\acmArticle{1}
\acmYear{2018}
\acmMonth{1}
\acmDOI{} %
\startPage{1}
\setcopyright{none}
\bibliographystyle{ACM-Reference-Format}
\citestyle{acmauthoryear}   %

\usepackage{main}
\usepackage{mathtools}
\usepackage{mathpartir}
\usepackage{tikz}
\usepackage{tikzit}
\usepackage{accents}
\usepackage{catchfilebetweentags}
\usepackage{agda}
\usepackage{newunicodechar}
\usetikzlibrary{bbox}
\DeclarePairedDelimiter{\ceil}{\lceil}{\rceil}

\usepackage[only,llbracket,rrbracket]{stmaryrd}

\usepackage[color=yellow,textwidth=.75in]{todonotes}
\usepackage{caption}
\usepackage{subcaption}
\usepackage[clock]{ifsym}
\usepackage{stackengine}

\newunicodechar{λ}{\ensuremath{\mathnormal\lambda}}
\newunicodechar{∀}{\ensuremath{\mathnormal\forall}}
\newunicodechar{ℕ}{\ensuremath{\mathnormal{\Nat}}}
\newunicodechar{Π}{\ensuremath{\mathnormal{\Pi}}}
\newunicodechar{≡}{\ensuremath{\mathnormal{\equiv}}}
\newunicodechar{≤}{\ensuremath{\mathnormal{\le}}}
\newunicodechar{∎}{\qed}
\newunicodechar{ʳ}{}
\newunicodechar{ϕ}{\ensuremath{\mathnormal{\phi}}}
\newunicodechar{₀}{\ensuremath{\mathnormal{_{0}}}}

\title{A cost-aware logical framework}

\author{Yue Niu}
\orcid{0000-0003-4888-6042}
\email{yuen@andrew.cmu.edu}

\author{Jonathan Sterling}
\orcid{0000-0002-0585-5564}
\email{jmsterli@cs.cmu.edu}

\author{Harrison Grodin}
\orcid{0000-0002-0947-3520}
\email{hgrodin@andrew.cmu.edu}

\author{Robert Harper}
\orcid{0000-0002-9400-2941}
\email{rwh@cs.cmu.edu}

\affiliation{
  \institution{Carnegie Mellon University}
  \streetaddress{5000 Forbes Ave.}
  \city{Pittsburgh}
  \state{PA}
  \postcode{15213}
  \country{USA}
}

\begin{document}

\begin{abstract}
  We present \textbf{calf}, a \textbf{c}ost-\textbf{a}ware \textbf{l}ogical
  \textbf{f}ramework for studying quantitative aspects of functional programs.
  Taking inspiration from recent work that reconstructs traditional aspects of
  programming languages in terms of a modal account of \emph{phase
  distinctions}, we argue that the cost structure of programs motivates a phase
  distinction between \emph{intension} and \emph{extension}.  Armed with this
  technology, we contribute a synthetic account of cost structure as a
  computational effect in which cost-aware programs enjoy an internal
  noninterference property: input/output behavior cannot depend on
  cost.
  As a full-spectrum dependent type theory, \textbf{calf} presents a unified language
  for programming and specification of both cost and behavior that can be
  integrated smoothly with existing mathematical libraries available in type
  theoretic proof assistants.

  We evaluate \textbf{calf} as a general framework for cost analysis by implementing
  two fundamental techniques for algorithm analysis:
  the \emph{method of recurrence relations} and \emph{physicist's method for amortized analysis}.
  We deploy these techniques on a variety of case studies:
  we prove a tight, closed bound for
  Euclid's algorithm, verify the amortized complexity of batched queues,
  and derive tight, closed bounds for the sequential and \emph{parallel}
  complexity of merge sort, all fully mechanized in the Agda proof assistant.
  Lastly we substantiate the soundness of quantitative reasoning in \textbf{calf} by means of a model construction.
\end{abstract}

\maketitle

\section{Introduction}\label{sec:intro}

Resource usage is an important \emph{intensional} property of program behavior.
With a rich enough type system, extensional properties of programs can be
investigated in the same language as the program is written --- an approach to
verification that has seen much application in type theoretic tools such as
Nuprl, Coq, Agda, and
Idris~\citep{constable:1986,coq:reference-manual,norell:2009,brady:2013}.
Intensional properties such as cost are not typically amenable to such an
internal analysis, in essence because one cannot conventionally have a function
$\mathsf{cost} : \mathsf{bool} \to \nat{}$ that computes the cost of its input
(such a ``function'' could not respect $\beta$-equivalence). To address this
problem, one could instrument programs with their cost, but this
instrumentation must not be allowed to interfere with the input/output behavior
of programs.

\paragraph{A logical framework for cost}

We contribute \calf{}, a \textbf{c}ost-\textbf{a}ware \textbf{l}ogical
\textbf{f}ramework for studying quantitative aspects of functional programs,
combining recent work on cost recurrence
extraction~\citep{kavvos-morehouse-licata-danner:2019} and the
call-by-push-value decomposition of effects in dependent type
theory~\citep{pedrot-tabareau:2020} with recent modal account of phase
distinctions and non-interference of \citet{sterling-harper:2021}.  \calf{}
evinces a phase distinction between extensional and intensional aspects of code
(analogous to the static--dynamic phase distinction of ML languages); then the
incurrence of \emph{cost} is treated as a computational effect that has force
only in the intensional fragment, ensuring that the extensional behavior of a
program does not depend on the costs of its arguments. In particular \calf{}
ensures that one cannot write a function whose \emph{extension} depends
on the cost component of its input.

\paragraph{Evaluation and implementation}

We evaluate the efficacy of \calf{} by formulating two widely used algorithm analysis techniques --
the method of recurrence relations and the physicist's method for amortized analysis --
and deploying them on a variety of case studies. We have also developed an implementation of \calf{}
in the Agda proof assistant that may be accessed through the supplementary materials.
The following results highlight the central contributions
of our case studies, all \emph{fully mechanized} in the Agda proof assistant:
\begin{enumerate}
  \item We prove an asymptotically tight and closed upper bound on the number of primitive arithmetic operations used in Euclid's algorithm for gcd.
  \item We present an amortized analysis of the cost of sequences of operations on batched queues.
  \item We prove asymptotically tight and closed upper bounds on both the sequential and \emph{parallel} complexity of
  insertion sort and merge sort under the comparison cost model.
\end{enumerate}
It is worth emphasizing that the presented case studies all require nontrivial mathematical reasoning,
which usually presents a significant hurdle for fledgling implementations of type theories that
do not come equipped with the vast number of the necessary but well-known theorems.
Our implementation of \calf{} alleviates this pain point by allowing one to directly \emph{import} data types from Agda
whenever they are required for an algorithm, a mechanism that we explain in \cref{sec:implementation}.
At a high level, this design evinces an embedding of the Agda universe of ``pure data types'' into
the effectful metalanguage of \calf{} that enables one to take advantage of Agda's well-developed mathematical library.

\paragraph{Notation}
In this paper we display all mechanized theorems as defined in the implementation
using the typewriter font, \eg \Mech{Calf.Types.Bounded.bound/relax}.

\paragraph{Metatheory and soundness}

In order to be used to study the cost of programs it is important that \calf{}
not derive an equivalence between two programs $M,N:\mathsf{bool}$ that take a
different number of steps to compute. We verify by means of a model
construction that \calf{} does \emph{not} identify computations that incur
different numbers of steps, the first step toward a stronger adequacy theorem
that would establish the equivalence of \calf{}-encodings with traditional
operational cost dynamics \`a la \citet{blelloch-greiner:1995}.

\paragraph{Parallel complexity}

\calf{} is compatible with many interpretations of the cost structure of programs,
among which is the \emph{cost graph} that encodes the \emph{work} (sequential cost)
and \emph{span} (parallel cost) of a program.
Thus \calf{} also supports
reasoning about the \emph{parallel} complexity of programs through an equational
presentation of the profiling semantics of Blelloch and Greiner \citep{blelloch-greiner:1995}.
By focusing on the verification of functional programs, we position \calf{} to
take advantage of the elegant theory of language-based parallelism \citep{blelloch-greiner:1996, greiner-blelloch:1999, spoonhower-blelloch-harper-gibbons:2008}
developed over the past three decades without descending into the space of
imperative, concurrent programs in which the analogous notions are much more complex.

\subsection{Synthetic cost analysis via computational effects}

Although many cost verification frameworks work with the deep embedding of an
object language in an ambient type theory, we take a synthetic approach by
defining \calf{}, a full-spectrum dependent type theory in which cost is
implemented as a primitive \emph{effect}.  This view of cost is inspired by
\citet{kavvos-morehouse-licata-danner:2019}, who define and extract recurrence
relations of functional programs representing their (high-order) cost
structure.

At first glance, cost might seem like a uniform concept that can be applied
indiscriminately to any computation. This first-order view quickly falls part
when we consider the costs of functions, which should be functions themselves.
The question then is to introduce cost into the type theory in such a way that
it can flow through the type structure compositionally. The insight of
\citet{kavvos-morehouse-licata-danner:2019} is to consider the
call-by-push-value (CBPV) structure induced by a certain \emph{cost monad}'s
Eilenberg--Moore category~\citep{levy:2004}, leveraging the fine-grained type
structure of CBPV to assign a compositional meaning to cost at higher type.

A cost monad is just the writer monad $\Alert{\costty\times-}$ for a given
monoid $\prn{\costty,0,+}$; in call-by-push-value, we may interpret a
\emph{value type} by a set and a \emph{computation type} by an algebra for
$\Alert{\costty\times-}$. There is a free-forgetful adjunction $\F\dashv \UU$,
in which the right adjoint projects the carrier set of an algebra and the left
adjoint takes a set $A$ to $\costty\times{A}$.
The semantic situation of the cost monad inspires a CBPV language containing a
single computational effect $\cstep{c}{M}$ that incurs a given cost
$\isof{c}{\costty}$ before computing $M$, such that
$\cstep{c}{\cstep{d}{M}}\equiv \cstep{c+d}{M}$. Indeed, \calf{} is a
dependently typed version of this CBPV language, defined in the style of
\citeauthor{pedrot-tabareau:2020}'s \dcbpv{} calculus.
The principal slogan of CBPV is ``a value is, a computation does,'' which continues
to hold in \calf{}: a value \emph{is} with no associated cost, a computation
\emph{does} using some cost.

Strong monads (such as the cost writer monad from which \calf{} is
abstracted) provide a common source of value--computation dichotomies. While
value types are interpreted as ordinary types, computation types are
interpreted as \emph{algebras} for a given monad, \ie objects of the
Eilenberg--Moore category of the monad.  Given a value type $A$, we can form
the computation type $\F{A}$ whose interpretation is the free algebra.  Hence
in \calf{}, $\F{A}$ classifies \emph{free} computations of $A$ where the costs
of a sequence of computations are aggregated using the monoid structure
\costty{}, and a value \isof{a}{A} is injected into $\F{A}$ via \ret{a} as the
computation yielding $a$ using zero cost.

\subsection{A new phase distinction: behavior \emph{vs.} cost}\label{sec:new-phase-distinction}

The original phase distinction between static (compile-time) and dynamic
(run-time) code arose in the study of module
systems~\citep{harper-mitchell-moggi:1990}, where light-weight static
compatibility is used to facilitate the composition of modules.
The idea was to disallow type-level dependence on dynamic parts of a module.
Recall that a signature of a module consists of declarations of kinds of
constructors (static entities) and types of expressions (dynamic entities), and
a module itself consists of constructors and expressions.
In the case of ML modules the phase distinction associates to every module
functor a function between their static parts (kinds); in this sense, the
static part of a module is entirely independent of the dynamic parts of the
modules it is linked with.

In our setting a different but entirely analogous phase distinction emerges
between extension/behavior and intension/cost.  Every type $A$ in \calf{} can
be thought of as having two parts: an intensional part $\Cl{A}$
characterizing its cost and an extensional part $\Op{A}$ characterizing its
extensional behavior.
We say that a type is (extensional, intensional) if it is isomorphic to its
(extensional, intensional) part. The phase distinction ensures that the
extensional part of a program is independent with the intensional parts of its
arguments. Put another way, the phase distinction of behavior and cost
constitutes a \emph{noninterference} property of intension and extension:

\begin{quote}
  \emph{\textbf{Noninterference.}}
  Any function $\Cl{A}\to\Op{B}$ from an intensional type to an extensional type is
  internally equal to a constant function.
\end{quote}

\subsection{The language of phase distinctions}\label{sec:lang-of-phase-distinctions}

In \calf{} the phase distinction between extension and intension is achieved by
adding a new abstract proposition $\Alert{\ExtOpn}$ called the ``extensional phase''.
Whenever an assumption of type $\ExtOpn$ is present in the context, the cost
structure of programs is rendered trivial; one can think of the fragment of
\calf{} where $\ExtOpn$ is always in the context as a version of ordinary
dependent type theory in which cost is not tracked.  Therefore, the extensional
part of a type $A$ can be recovered as the function space
$\Alert{\Op{A}\coloneqq\prn{\ExtOpn\to{}A}}$. This extensional modality can be used to
state equations between programs that have different costs but identical
input-output behaviors; for instance, we can prove
$\Op\prn{\textit{insertionSort} = \textit{mergeSort}}$,
even thought these algorithms have different costs
under the comparison cost model for sorting. Indeed, the
soundness of \calf{} implies that this equation does not hold outside of $\Op$.

\paragraph{Cost structure as proof-relevance}

As we have pointed out, it makes little sense to think of cost as a property of
an ordinary program, because two such programs may be equal and yet ``have''
different costs. On the other hand we may view cost as a \emph{structure}
(proof-relevant property) over a program, and the projection of ordinary
programs from cost-instrumented programs is implemented in our setting by the
unit of the extensional modality $A\to\Op{A}$. The perspective of cost as
structure is an instance of a more general phenomenon pervading present-day
work in type theory: notions that are ill-posed as
properties of equivalence classes of typed terms can be recovered more
objectively as structures defined over equivalence classes of typed terms, as
in the work of \citet{altenkirch-kaposi:2016:nbe,coquand:2019,sterling-harper:2021,sterling-angiuli:2021}.

\subsection{Quantitative reasoning in \calf{}}\label{sec:quant-reasoning}

The fundamental advantage of \calf{} is that it provides a purely
\emph{equational} approach to quantitative reasoning: a useful \emph{bound} can
be placed on the number of steps engendered in a computation by equating it to
another computation in which the quantity can be observed directly.  For
example, consider a computation $\isof{e}{\F{A}}$; if we can prove that $e =
\mstep{c}{\ret{a}}$ for some value $\isof{a}{A}$, then we are justified to say
that $e$ has cost $c$. This \emph{cost refinement} is captured by the following \calf{} type:
\[
  \mathsf{hasCost}(A,e,c) \coloneqq \Sigma \isof{a}{A}.\, e =_{\F{A}} \mstep{c}{\ret{a}}
\]
In \cref{sec:quantitative-refinements}
we consider more sophisticated refinements that express \emph{cost bounds} rather than precise costs of computations.

There are two things to note in this definition.  First of all, we can see that
cost refinements are not primitive in \calf{}; rather \calf{} is a logical
framework for \emph{defining} quantitative properties such as
$\mathsf{hasCost}$ and then \emph{proving} refinement lemmas about those
properties.
Secondly, our formulation of cost bounds is only meaningful insofar as stepping
is nondegenerate, \ie $\nvdash \mstep{c}{\ret{a}} = \ret{a}$ for any value $a$
and nontrivial cost $c$.
In fact this nondegeneracy property constitutes one of the soundness criteria for
quantitative reasoning in \calf{}, which we prove in \cref{sec:metatheory}.

Under this regime, one proves more refinements as the need arises in a
verification problem or when new forms of computations are introduced.  In
\cref{sec:quantitative-refinements} we present syntax-directed quantitative refinement lemmas
that decompose the bounds on the cost of a computation into bounds on the costs
of its constituent subcomputations.

\subsection{Compositional cost analysis}

As a type theory, \calf{} naturally supports a compositional style of verification.
When localized to quantitative properties of programs, \calf{} evinces the notion
of a \emph{cost signature} \citep{ab-algorithms}, the cost-aware counterpart to the functional
specification of a data structure. In \calf{}, we may specify the quantitative properties of a
data structure by using \emph{cost-aware} dependent functions $\qpi{(\isof{a}{A})}{B}{c}$,
an application of the $\mathsf{hasCost}$ refinement from \cref{sec:quant-reasoning}:
\begin{align*}
  \qpi{(\isof{a}{A})}{B}{c} &= \Sigma \isof{f}{\Pi(A,B)}.\, \Pi \isof{a}{A}.\, \mathsf{hasCost}(B(a), f(a), c(a))
\end{align*}
Thus an element of \qpi{(\isof{a}{A})}{B}{c} is a function $f$ along with a proof that it
satisfies the cost specification $c$ on all instances.

To see this connective in action, consider clients Alice and Bob who both require a data structure
to manipulate graphs.
Alice may request a structure satisfying the left
signature in \cref{fig:cost-sigs}, indicating that they would like edge insertion and membership to both be
logarithmic in the number of vertices $\mathsf{n}$.
On the other hand, Bob's algorithm needs constant time edge membership, but is not so sensitive to changes to the graph.
This requirement is captured by the right signature in \cref{fig:cost-sigs}.

Fortunately, both programmers can be supplied with suitable implementations: edge sets for Alice
and adjacency matrices for Bob. Although somewhat artificial,
this example shows that \calf{} is able to formalize the notion of a cost signature as used by
\citet{ab-algorithms}, paving the way to verified, cost-aware development of large-scale programs.

\begin{figure}
  \begin{minipage}{0.45\textwidth}
    \begin{align*}
      \textit{sig}&\\
      \quad \mathsf{G} &: \mathsf{Type}\\
      \quad \mathsf{n} &: G \to \Nat\\
      \quad \mathsf{insertEdge} &: \mathsf{edge} \to (\qpi{G}{G}{\log\circ\, \mathsf{n}})\\
      \quad \mathsf{isEdge} &: \mathsf{edge} \to (\qpi{G}{\mathsf{bool}}{\log \circ\, \mathsf{n}})
    \end{align*}
  \end{minipage}
  \begin{minipage}{0.45\textwidth}
    \begin{align*}
      \textit{sig}&\\
      \quad \mathsf{G} &: \mathsf{Type}\\
      \quad \mathsf{n} &: G \to \Nat\\
      \quad \mathsf{insertEdge} &: \mathsf{edge} \to (\qpi{G}{G}{\mathsf{n}})\\
      \quad \mathsf{isEdge} &: \mathsf{edge} \to (\qpi{G}{\mathsf{bool}}{\lambda \_.\, 1})
    \end{align*}
  \end{minipage}
  \caption{Cost signatures; left is Alice and right is Bob.
  For simplicity, suppose the vertices are natural numbers and define edges as ordered pairs (\ie
  $\mathsf{edge} \coloneqq \Nat^2$).}
  \label{fig:cost-sigs}
\end{figure}

\subsection{Analyzing the cost of general recursive functions}\label{sec:general-recursion}

Most efficient algorithms are not defined by structural induction on the input
--- their efficiency is the result of exploiting the structure of the data in
clever, nonobvious ways that nevertheless terminate. It is not surprising that
this often cannot be surmised by syntactic means and requires proof. Hence a
type theoretic framework for cost analysis must provide a story for encoding
general recursive algorithms such that the resulting analysis reflects the
expected complexity (and not, for instance, the complexity of the termination
proof).

A well-known and versatile solution to the encoding of general recursive
functions in total type theory is the celebrated Bove--Capretta method~\citep{bove-capretta:2005}.  Any general recursive
program gives rise to an \emph{accessibility predicate} that tracks the pattern
of recursive calls; this accessibility predicate can be glued onto the original
program as a termination metric, and the final (total) function is defined by
proving that every input is accessible.

Cost recurrences provide an alternative to accessibility predicates. The idea is to
parameterize a given program in a \emph{clock}, induced by the cost recurrence,
which can then serve as a termination metric that frees the program to make
whatever recursive calls are required.  This strategy is attractive in the
quantitative setting precisely because cost analysis computes the desired
instantiation of the clock with no additional effort. In contrast the same
method in a framework for pure behavioral properties becomes a technical device
for definition that does not provide further insight into the defined program.
As observed in \citet{niu-harper:2020}, the cost-aware setting
evinces a synergetic relationship between cost analysis itself and
programming with general recursion that is further amplified
in \calf{}: cost structure enables one to effectively encode general recursion,
and general recursion gives rise to programs with interesting cost structure.

\subsection{Related work}

\subsubsection{Recurrence extraction through CBPV}

The CBPV decomposition of cost structure in \calf{} is directly inspired by
recent work on recurrence extraction~\cite{kavvos-morehouse-licata-danner:2019} for functional programs.
In that setting a source language such as CBV PCF is interpreted via a cost-preserving translation into
CBPV, from which a \emph{syntactic} recurrence relation is extracted; the syntactic recurrence is then
translated into a semantic recurrence in a domain appropriate for mathematical manipulation used in algorithm analysis.
The focus of this work is the formalization of the extract-and-solve paradigm used informally in algorithm analysis and
the modularity with respect to the source language afforded by the CBPV decomposition.

Because the extraction process is stratified over different languages, the \emph{bounding theorem} --- the fact that a source program satisfies
a syntactic recurrence relation --- is an external fact.
In contrast \calf{} collapses the distinction between syntactic and semantic recurrence and
is able to express the source program and the cost recurrence in the same
language. Moreover, \calf{} furnishes a rich specification language that allows
us to prove internally that a program is bounded by a given cost.

Another difference is the presence of general recursion in the work on
recurrence extraction. Because we propose \calf{} as a logical framework for
internal reasoning, inclusion of unrestricted fixed-points is a nonstarter.
This does not, however, prevent us from analyzing the cost of general recursive
programs: as discussed in \cref{sec:general-recursion}, a cost bound provides a
termination metric. Of course, some instances of general recursion is not
reducible in this way, such as nontermination; we expect that non-terminating
programs can also be handled by means of a monad for partiality as in the work
of \citet{capretta:2005}.

\subsubsection{Effects in dependent type theory}

The key ingredient that endows \calf{} with enough structure to serve as a
logic for internal reasoning is the integration of dependent types in an
effectful language. We essentially extend the universe-free fragment of the
\dcbpv{} calculus of \citet{pedrot-tabareau:2020} by axioms for the extensional
modality.  The weaning translation of \dcbpv{} is the closest counterpart to
the model we use to prove the soundness of \calf{}.
To define the weaning translation, \citet{pedrot-tabareau:2020}
introduce the concept of the self-algebraic proto-monad, which provides the structure
needed to model computation universes.  Because we do not axiomatize universes,
computation types in \calf{} are interpreted as algebras over a strong monad as
in the usual Eilenberg--Moore models of CBPV. To include universes, we expect
that the \dcbpv{} approach can be further adapted to \calf{} without significant
modification.

\subsubsection{Transparent \emph{vs.} abstract axiomatization of cost structure}\label{sec:trans-abs}

Semantically, free computations $\F{A}$ can be modeled as
free algebras over the monad $\costty \times -$.  We are however careful to
not commit to this fact \emph{internally}; by keeping the type $\F{A}$
abstract, \calf{} ensures that programs cannot drop costs or branch based on
the cost component of their input.

As an example, a language that does not satisfy this noninterference property is the
language of syntactic recurrences ``PCF with costs'' employed
by \citet{kavvos-morehouse-licata-danner:2019}. Indeed, by interpreting $\F{A}$ as $\costty \times
A$, the language of syntactic recurrences is made transparent enough that
programs can spuriously use the cost of an input to choose an output.  However, this is not an issue in that setting because of the
stratification of the source language (of programs) and target language (of
recurrences): such an exotic program lies outside the image of the
interpretation.

\subsubsection{Intensionality in logic and type theory}

\paragraph{Intensional constructs in computational type theory} %

Cost structure in \calf{} aims to capture an intensional aspect of programs,
historically a difficult phenomenon to study type theoretically. Researchers
in the \nuprl{} tradition have made a number of forays into
intensionality beginning with the \plcv{3} language, which included an
operator $\mathsf{isap}$ that distinguished function applications from other
terms~\cite{constable-zlatin:1984}. \citet{constable-crary:2002} later on
introduce a version of type theory equipped with a more restricted form of
intensionality by internalizing parts of the operational semantics, which can
be construed as a form of reflective deep embedding.

\paragraph{Necessity modalities for intensionality}

In the tradition of structural proof theory and modal type theory, the
\emph{necessity} modality $\Box{A}$ has been argued to capture the formal
aspects of staged computation~\citep{davies:1999}. From this perspective,
$\Box{A}$ is the type of \emph{codes} for terms of type $A$.  A detailed
investigation of this folklore was carried out by \citet{kavvos:2017:thesis},
introducing the intensional PCF (iPCF) programming language with an
intensional fixed-point operator whose type is the {G}\"{o}del-L\"{o}b axiom.
Unfortunately, \citeauthor{kavvos:2017}'s investigation revealed that truly
intensional operations such as $\mathsf{isap}$ must be limited to syntactically
closed terms; such a side condition casts doubt on the type theoretic nature of
intensional operations.
In the context of modal type theory, \citet{pfenning:2001} investigates
intensionality through a judgmental distinction between intensional
expressions, extensional terms, and irrelevant proofs.  However, the
internalization of this new judgmental structure as modal operators is not
fully worked out.

A common theme in prior work that aims to capture intensionality within type
theory is that equations are \emph{removed} underneath certain constructors,
consequently refuting most congruence rules and obstructing
presentations by generators and equations. Although tenable for simple theories,
this approach greatly complicates the integration of type dependency, where
congruence rules play a very important role in usability. In the design of
\calf{} we take the complementary perspective of \emph{conditional
extensionality}, where equations expressing extensional/behavioral properties
are \emph{added} in certain contexts. By modeling intension/extension
as another phase distinction, we give an elegant mathematical account of the
intensional content of programs without sacrificing extensionality principles
or speaking of ``equalities'' that do not always hold.

\subsubsection{Type systems for cost analysis}

\paragraph{Linear type systems}
Many current type-theoretic approaches to cost analysis
rely on the notion of \emph{linearity}/non-duplicability of resources.
A prototypical example is \citeauthor{hofmann:2000}'s type system
for programming in bounded space in which heap resources are abstracted into a type $\Diamond$
that is required to construct heap-allocated data structures.
This idea essentially started the line of work in automated amortized resource analysis (AARA) that includes
automatic heap-space bounds \citep{hofmann-jost:2003},
analysis of higher-order programs \citep{jost-hammond-loidl-hofmann:2010},
and a resource-aware version of OCaml (\raml{}) \citep{hoffmann-aehlig-hofmann:2012}.

In these type systems a derivation may be viewed as a \emph{stateful} transformation of the context (\eg consumable resource) into a computation
that satisfies a cost bound, constituting a type-theoretic formulation of \emph{amortized analysis} \citep{tarjan:1985}.
Consequently, a linear/affine treatment of resources is critical for ensuring the soundness of quantitative reasoning,
which states that the derived cost bound suffices for the actual cost as given by a standard cost dynamics.

Although automated systems such as \raml{} are without a doubt extremely useful, their scope is limited
intentionally to maximize automation. Traditionally, cost bounds in \raml{} and its cousins range over multivariate polynomials,
although support for exponential \citep{kahn-hoffmann:2020} and
logarithmic \citep{DBLP:journals/corr/abs-2101-12029} potential functions
have also been added recently.

In contrast there are no restrictions on what cost
bounds can be in \calf{} aside from one's willingness to state and prove them.
It would be of both theoretical and practical interest to integrate an
automated system such as \raml{} with \calf{} so one can manually verify
complex cost bounds and dispatch the automated tool on easier proof
obligations. The difficulty here is integrating the \raml{}'s affine type
system with \calf{}'s dependent type theory. We expect recent work on
substructural logical frameworks and linear dependent types to have some
bearing on this problem
\citep{licata-shulman-riley:2017,atkey:2018,krishnaswami-pradic-benton:2015}.

\paragraph{Non-amortized cost analysis}
Type theoretic formulations of cost analysis do not have to be based on amortization:
\citet{crary-weirich:2000} develop a type system
for resource bound certification by means of a \emph{virtual clock}.
Function types are refined with a starting and ending time, so that
a function of type $(A,5) \to (B,0)$ is an ordinary function $A \to B$ with the property that
it is to be applied when the clock is five and completes when the clock is zero.
Clock polymorphism relaxes the limit on the starting time by
allowing one to form the type $\forall n.\, (A,n+5) \to (B,n)$.
Variable cost bounds are definable via a limited form of dependency
using inductive kinds, which unfortunately imposes a somewhat stilted programming style.

More recently, \citet{wang-wang-chlipala:2017} introduced TiML,
a language loosely based on Standard ML that provides internal cost specifications
in the form of a timed function type.
TiML supports indexed data types whose indices furnish a notion of size measure,
leading to a more natural treatment of variable cost bounds compared to \citet{crary-weirich:2000}.
The TiML type system generates verification conditions that are further refined by a recurrence solver
using heuristics such as the Master Theorem~\citep{cormen-leiserson-rivest-stein:2009}.

Type systems presented in \citet{wang-wang-chlipala:2017, crary-weirich:2000} represent practical compromises in the sense that they are
primarily designed for expressing cost information and only secondarily support limited forms of behavioral specification.
In contrast \calf{} is a full-spectrum dependent type theory designed for \emph{both} quantitative and behavioral verification.

\paragraph{Frameworks for cost analysis}
\calf{} is a \emph{framework} for cost analysis in the sense that
it provides the language for speaking about the cost structure of programs but does not
prescribe a particular method for cost analysis.
Recently, \citet{rajani-gaboardi-garg-hoffmann:2021} advance a similar thesis by developing a type theory, $\lambda$-amor,
that unifies many extant type systems for cost analysis, in particular exhibiting $\lambda$-amor embeddings of both
effect and coeffect-based systems for cost accounting.
However, because $\lambda$-amor does not support dependent types,
there is no satisfying account of the behavioral fragment.

In the context of Liquid Haskell, \citet{handley-vazou-hutton:2019} define a monadic library called RTick
for reasoning about both quantitative and correctness properties by taking advantage of Liquid Haskell's refinement
type system. They substantiate the library with a rich repository of examples, including sorting algorithms,
programs optimizations, and relational cost analysis. However, the RTick library
suffers from the usual problem of representing cost structure transparently (see \cref{sec:trans-abs}); consequently
there is no guarantee that programs actually accumulate cost as intended.

Finally we mention the work of \citet{niu-harper:2020} on a cost-aware
computational type theory \catt{} in the \nuprl{} tradition. Unlike the type
theory of \citet{constable-crary:2002}, \catt{} only internalizes cost
structure, which leads to a framework that is more directly applicable to cost
analysis.
In particular, \citet{niu-harper:2020} introduce a connective ``funtime''
that internalizes cost specification on functions types and prove
a novel refinement rule for funtime by appealing to the specified cost bound,
constituting an induction principle based on cost structure.
Our observation that cost analysis may be used to encode general recursion in \calf{} is
inspired by the work on \catt{}, as is the idea of using a cost-aware dependent function type
to specify cost signatures. \citet{niu-harper:2020} do not develop a formal proof theory for \catt{},
a fact that appears to pose significant challenges for its mechanization.

\subsubsection{Separation logic and Isabelle/HOL}

An alternative perspective, exemplified by the work of \citet{atkey:2010} on
amortized resource analysis in separation logic, is to treat cost as an
ownable resource. Program logics in this tradition primarily focus on the
verification of imperative programs.  \citeauthor{atkey:2010}'s formulation
essentially transposes the types-with-potential concept of
\citet{hofmann-jost:2003} into the imperative setting, allowing one to prove
resource bounds on heap-based data structures.

More recently, \citet{mevel-jourdan-pottier:2019} employed similar ideas to develop a resource-aware extension to the Iris program logic~\citep{iris:2015,iris:2018}.
The interesting twist in this work is the use of time \emph{receipts}, which are dual to the more common time \emph{credits}.
Time receipts witness that a computation takes at least a certain amount of resources, thereby
establishing a lower bound on the cost of programs. This can be used to prove that catastrophic events do not happen
until a long time has passed. An application of the framework is the verification of an asymptotically tight upper bound
on union-find, a mathematically involved and complex proof.

Iris is a very powerful tool whose scope goes far beyond cost analysis; the
theoretical overhead of Iris when applied \emph{specifically} to quantitative
analysis of functional programs is consequently somewhat high in contrast to
the basic rules of \calf{} which can be written down in half a page.
Furthermore, the intended semantics of \calf{} can be interpreted somewhat
simple-mindedly in \emph{any} topos equipped with a subterminal sheaf
representing the partition between extension and intension.
In this respect, \calf{} offers a
fundamentally different perspective on cost analysis based on the
\emph{synthetic} integration of cost specification into a full-spectrum
dependent type theory rather than the definition of a resource-sensitive
program logic over an existing language.

\subsubsection{Isabelle/HOL}

The proof assistant Isabelle/HOL represents another hot spot for complexity verification.
In this setting the \emph{Archive of Formal Proofs} contains a number of case studies on complexity verification, including
quicksort~\citep{eberl:2017:quicksort}, medians of medians~\citep{eberl:2017:median-of-medians}, and
the formalization of the Akra-Bazzi theorem~\citep{eberl:2015:akra-bazzi}, just to name a few.
In more recent work \citet{functional-algorithms-verified} give a systematic study of the functional correctness and complexity verification of
a variety of algorithms and techniques including sorting, search trees, amortized analysis, dynamic programming, \etc

In these works cost is often instrumented through the writer monad $\Nat \times
-$ or just treated informally. In contrast, \calf{} allows the user to define
formal relations between programs and recurrences, and the careful
instrumentation of the stepping effect induces a noninterference property not
found in the Isabelle/HOL setting.  The Isabelle/HOL approach to cost analysis
uses existing tools in the framework to \emph{encode} the notion of cost, while
\calf{} is a framework in which one can use type-theoretic principles to reason
about cost/quantitative properties of programs in a first class way without
sacrificing the connection to the uninstrumented programs.

\section{Cost-aware logical framework}

We define \calf{} as an extension to the \dcbpv{} calculus of
\citet{pedrot-tabareau:2020}.  As discussed in \cref{sec:intro}, the design of
\calf{} rests on three main pillars.  First, the fine-grained type structure of
CBPV gives a compositional account of cost at higher types. Secondly, in the
dependent setting \dcbpv{} provides a smooth integration of effects and type
dependency, which allows us to define cost-aware programs and prove theorems
about them in a single language.  Lastly (as in
\cref{sec:lang-of-phase-distinctions}), the extensional phase $\ExtOpn$
generates a pair of complementary open and closed modalities $\Op,\Cl$ in the
sense of \citet{rijke-shulman-spitters:2017,schultz-spivak:2019} that govern the interaction
between \emph{intension} and \emph{extension}.  In the following, we introduce
\calf{} at an informal level through simple examples that illustrate the cost
effect $\mathsf{step}$, internal cost bounds as equations, and the interplay of
the \emph{open/extensional modality} $\Op$ and the \emph{closed/intensional
modality} $\Cl$.

\subsection{A refresher on CBPV: the identity function two ways}

We give a quick introduction to CBPV through the simplest possible example: the
identity function (on natural numbers).  Recall that the type structure of CBPV
is centered around the polarization of values and computations.  For our
example, consider the following selection of types and terms:\footnote{\calf{}
also includes additional types such as dependent products and dependent sums.}

\begin{center}
\begin{tabular}{ l l }
\textbf{Values} & \textbf{Computations}\\
$A,B \coloneqq \UU{X}, \nat$ & $X,Y \coloneqq \F{A}, A \to X$\\
$a,b \coloneqq \thunk{e}, \zero, \suc{a}$ & $e,f \coloneqq \ret{a}, \bind{e}{f}, \force{u}, \reccst{a}{e_1}{e_2}, \lambda a.\, e, \ap{f}{v}$
\end{tabular}
\end{center}

The pair of type constructors $\mathsf{F}$ and $\mathsf{U}$ bridges the dichotomy between value and computation types:
$\mathsf{F}$ turns a value \isof{a}{A} into the computation \isof{\ret{a}}{\F{A}}, and
$\mathsf{U}$ reifies a computation \isof{e}{X} into a value \isof{\thunk{e}}{\UU{X}}.
Observe that functions are \emph{computations} in CBPV, a phenomenon that may be
explained by examining the operational behavior of functions in the CK-machine model of CBPV \citep{levy:2006}.
The fine-grained type structure of CBPV evinces embeddings of both CBV and CBN.
For instance, one may recover the CBV function space $\nat \to_{\mathsf{cbv}} \nat$ as the
CBPV type $\UU{\nat \to \F{\nat}}$. We refer the reader to \citet{levy:2004} for a more thorough introduction.

For the purposes of our example, we only consider the value type $\nat$. As usual, \zero{} and \suc{n} are values of $\nat$.
The (nondependent) recursor on $\nat$ has the following type:
\begin{align*}
\mathsf{rec} &: \impl{X} \nat \to X \to (\nat \to \UU{X} \to X) \to X
\end{align*}
Note that the recursive call is reified as a value $\UU{X}$ because variables range over values in CBPV.
If we restrict attention to natural numbers, there are
two evident ways to compute the identity: one program returns the argument
immediately, and the other reconstructs the argument by recursion.
In CBPV, they are rendered as the following programs:

\begin{minipage}{.2\textwidth}
  \begin{align*}
    \textit{id}_{\mathsf{easy}} &: \nat \to \F{\nat}\\
    \textit{id}_{\mathsf{easy}} &= \lambda x.\, \ret{x}
  \end{align*}
  \end{minipage}
  \begin{minipage}{.8\textwidth}
  \begin{align*}
    \textit{id}_{\mathsf{hard}} &: \nat \to \F{\nat}\\
    \textit{id}_{\mathsf{hard}} &= \lambda x.\,
      \reccst{x}{\ret{\zero}}{\lambda x',u.\, \bind{\force{u}}{\lambda y.\, \ret{\suc{y}}}}
\end{align*}
\end{minipage}
\ \\
\indent Note that in $\textit{id}_{\mathsf{hard}}$ we have to force the reified recursive computation \isof{u}{\UU{\F{\nat}}} to
obtain a computation \F{A}, thence sequencing it and tacking on an additional successor.

\subsection{Cost monoid: cost structure of programs}\label{sec:cost-structure}

Cost-aware programs carry quantitative information through elements of the cost monoid $\mathbb{C}$.
Because different algorithms and cost models require different notions of cost,
we parameterize \calf{} by an arbitrary \emph{cancellative monoid} $(\mathbb{C}, +, 0)$;
here cancellative means that the operation $+$ is injective, a property that
is needed to establish metatheoretic results in \cref{sec:metatheory}.
Further structure on $\mathbb{C}$ can be negotiated depending on one's preference for generality.
For the purposes of analyzing (upper) bounds of algorithms, it is reasonable to additionally require the structure of
an \emph{ordered monoid} $(\mathbb{C}, +, 0, \le)$ in which the monoid multiplication is compatible with a  preorder $\le$.

\subsection{Cost as an effect in \calf{}}\label{sec:cost-as-an-effect}

We formulate cost in \calf{} as a primitive effect by adding
a new form of computation \Alert{\mstept{X}{c}{e}} that is parameterized by a computation type $X$ and
an element of the cost monoid $c$.
The meaning of \mstept{X}{c}{e} is to effect $c$ units of cost and continue as $e$;
consequently, we require that $\mathsf{step}$ is coherent with the monoid structure on $\mathbb{C}$:
\begin{mathpar}
  \mstept{X}{0}{e} = e
  \and
  \mstept{X}{c}{\mstept{X}{d}{e}} = \mstept{X}{c+d}{e}
\end{mathpar}

In addition, we require a slew of equations governing the interaction of $\mathsf{step}$ with
other computations. For instance, $\mathsf{step}$ satisfies the following laws:
\begin{align*}
  \mathsf{bind}_{\mathsf{step}} &: \impl{\isof{e}{\F{A}}, \isof{f}{A \to X}} \bind{\mstept{\F{A}}{c}{e}}{f} = \mstept{X}{c}{\bind{e}{f}}\\
  \mathsf{lam}_{\mathsf{step}} &: \mstept{A \to X}{c}{\lambda x.\, e} = \lambda x.\, \mstept{X}{c}{e}
\end{align*}

The first equation states that $\mathsf{step}$ inside a sequence of computations can be commuted
outside and executed first; the second equation states that $\mathsf{step}$ commutes with abstraction.

\paragraph{Meaning of $\mathsf{step}$ in the Eilenberg–Moore model of \calf{}}
Each computation type $X$ of \calf{} is interpreted as an algebra $(\carrier{X}, \alpha)$
over $\mathbb{C} \times \--$. Thus $\mathsf{step}_X$ is interpreted by the structure map $\alpha$,
and all of the equations associated with $\mathsf{step}_X$ hold as a consequence of the algebra laws.

\paragraph{Cost of identity}
For the identity example, let us suppose that $\mathbb{C}$ is the additive monoid on $\Nat$
under the usual ordering.
Consider the two identity programs from the previous section. Suppose that we wanted to charge unit cost
for each recursive call in the program. In \calf{}, we can achieve this by instrumenting the
program with $\mathsf{step}$ at the appropriate place:
\begin{align*}
  \textit{id}_{\mathsf{hard}} &= \lambda x.\,
    \rec{x}{\F{\nat}}{\ret{\zero}}{\lambda x',u.\, \Alert{\mathsf{step}_{\F{\nat}}^1}(\bind{\force{u}}{\lambda y.\, \ret{\suc{y}}})}
\end{align*}

We do nothing for $\textit{id}_{\mathsf{easy}}$ because there is no recursion involved.

\subsection{Cost refinements in \calf{}}

Recall the predicate $\mathsf{hasCost}$ from \cref{sec:intro},
$\mathsf{hasCost}(A,e,c) \coloneqq \Sigma \isof{a}{A}.\, e =_{\F{A}} \mstep{c}{\ret{a}}$,
which states that the computation \isof{e}{\F{A}} incurs $c$ units of cost.
Given our instrumented identities, we can prove the following quantitative
refinements for \ideasy{} and \idhard{}:

\begin{theorem}[\Mech{Examples.Id.Easy.id≤id/cost}]
  We have that $\ideasy(x)$ has cost 0 for all \isof{x}{\nat}.
\end{theorem}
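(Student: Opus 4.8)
The plan is to unfold $\mathsf{hasCost}$, supply the only plausible witness, and close the residual equation with the coherence law for zero-cost stepping. Concretely, ``$\ideasy(x)$ has cost $0$'' means $\mathsf{hasCost}(\nat, \ideasy(x), 0)$, which by the definition from \cref{sec:quant-reasoning} unfolds to the dependent sum $\Sigma \isof{a}{\nat}.\, \ideasy(x) =_{\F{\nat}} \mstep{0}{\ret{a}}$.

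First I would compute $\ideasy(x)$. Since $\ideasy = \lambda x.\, \ret{x}$, the application reduces by $\beta$ to $\ret{x}$, so the goal becomes $\Sigma \isof{a}{\nat}.\, \ret{x} =_{\F{\nat}} \mstep{0}{\ret{a}}$. The obvious witness is $a \coloneqq x$, leaving the single equation $\ret{x} =_{\F{\nat}} \mstep{0}{\ret{x}}$ to discharge. This is an immediate instance of the first $\mathsf{step}$ coherence law $\mstept{X}{0}{e} = e$ taken at $e \coloneqq \ret{x}$: rewriting the right-hand side collapses it to $\ret{x}$, and the goal closes by reflexivity.

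The main obstacle here is essentially nonexistent, since $\ideasy$ performs no recursion and therefore incurs no $\mathsf{step}$ whatsoever. The only point worth stressing is that even this trivial bound is proved genuinely \emph{equationally}—it rests entirely on the identity $\mstept{X}{0}{e} = e$ rather than on any operational argument about $\ideasy$. This sets up the contrast with the companion analysis of $\idhard$, where the real work lies in collapsing the $x$-fold accumulation of unit steps, typically by induction on $x$ together with both the zero law and the additive law $\mstept{X}{c}{\mstept{X}{d}{e}} = \mstept{X}{c+d}{e}$.
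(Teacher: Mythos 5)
Your proof is correct and follows essentially the same route as the paper's: take the input $x$ itself as the witness value and discharge the residual equation $\ideasy(x) = \ret{x} = \mstep{0}{\ret{x}}$ via the zero-cost coherence law for $\mathsf{step}$. The extra detail you supply (unfolding $\mathsf{hasCost}$ and the $\beta$-step) is exactly what the paper's terse one-line argument leaves implicit.
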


\begin{proof}
  We take the input as the witness value and apply the coherence rule of $\mathsf{step}$ to obtain
  $\ideasy(x) = \ret{x} = \mstept{X}{0}{\ret{x}}$.
\end{proof}

\begin{theorem}[\Mech{Examples.Id.Hard.id≤id/cost/closed}]
  We have that $\idhard(x)$ has cost $\iota(x)$ for all \isof{x}{\nat}, where
  $\iota$ is the obvious monoid isomorphism $\nat \cong \Nat$.
\end{theorem}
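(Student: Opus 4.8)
The plan is to prove the sharper, non-existential equation $\idhard(x) = \mstep{\iota(x)}{\ret{x}}$ for every $\isof{x}{\nat}$, from which $\mathsf{hasCost}(\nat, \idhard(x), \iota(x))$ follows immediately by supplying $x$ itself as the witness value of the $\Sigma$-type. I would establish this equation by induction on $x$ using the dependent eliminator for the value type $\nat$; the recursor displayed above is nondependent, but as a full-spectrum dependent type theory \calf{} furnishes the corresponding induction principle for reasoning about $\nat$.

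In the base case $x = \zero$, the computation rule for the recursor gives $\idhard(\zero) = \ret{\zero}$. Since $\iota$ is a monoid isomorphism, $\iota(\zero) = 0$, so the coherence law $\mstept{\F{\nat}}{0}{e} = e$ yields $\mstep{\iota(\zero)}{\ret{\zero}} = \ret{\zero}$, which matches $\idhard(\zero)$ and closes the case.

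In the successor case $x = \suc{x'}$, the recursor computation rule together with $\force{\thunk{e}} = e$ reduces $\idhard(\suc{x'})$ to $\mstep{1}{\bind{\idhard(x')}{\lambda y.\, \ret{\suc{y}}}}$, where the reified recursive occurrence has been identified with $\idhard(x')$. Substituting the inductive hypothesis $\idhard(x') = \mstep{\iota(x')}{\ret{x'}}$ and commuting the inner $\mathsf{step}$ out of the $\mathsf{bind}$ via $\mathsf{bind}_{\mathsf{step}}$ rewrites this to $\mstep{1}{\mstep{\iota(x')}{\bind{\ret{x'}}{\lambda y.\, \ret{\suc{y}}}}}$. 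The monad unit law evaluates $\bind{\ret{x'}}{\lambda y.\, \ret{\suc{y}}}$ to $\ret{\suc{x'}}$, and collapsing the two stacked steps through $\mstept{\F{\nat}}{c}{\mstept{\F{\nat}}{d}{e}} = \mstept{\F{\nat}}{c+d}{e}$ gives $\mstep{1 + \iota(x')}{\ret{\suc{x'}}}$. Finally, because $\iota$ is a homomorphism, $1 + \iota(x') = \iota(\suc{x'})$, so we arrive at $\mstep{\iota(\suc{x'})}{\ret{\suc{x'}}}$, as required.

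The argument is essentially a routine equational chase, so I do not expect a genuine obstacle; the one point demanding care is the bookkeeping in the successor case, where the $\mathsf{step}$/$\mathsf{bind}$ interchange law must be oriented so as to hoist the step out of the recursive call, and the two resulting $\mathsf{step}$s must then be reassociated so that the accumulated cost reads as $\iota(\suc{x'})$ rather than being left stranded as the sum $1 + \iota(x')$. It is precisely at this last reassociation that the hypothesis that $\iota$ respects the monoid structure of $\nat$ is needed.
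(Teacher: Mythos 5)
Your proof is correct and takes essentially the same approach as the paper: the paper's own (much terser) proof also proceeds by induction on $x$, invoking in the successor case exactly the ingredients you spell out — the $\mathsf{step}$ coherence and interchange laws together with the inversion principles for $\mathsf{U}$ (force/thunk) and $\mathsf{F}$ (bind/ret). You merely make explicit what the paper leaves implicit, namely the witness $x$ and the intermediate equation $\idhard(x) = \mstep{\iota(x)}{\ret{x}}$.
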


\begin{proof}
We proceed by induction on $x$. In the inductive case, we use the
equations governing $\mathsf{step}$ explained in \cref{sec:cost-as-an-effect}
and the inversion principles for $\mathsf{U}$ and $\mathsf{F}$.
\end{proof}

The study of quantitative properties \emph{qua} equations evinces the essential advantage of verification in \calf{}:
proof of quantitative properties is reduced to ordinary equational reasoning.

\subsection{Reasoning about extensional properties using \texorpdfstring{$\ExtOpn$}{the extensional phase}}

In general, equations between cost-aware programs of \calf{} are in some sense rare, exactly
because the cost effect obstructs equations between extensionally equivalent computations.
To account for extensional equivalence and other behavioral properties, we study programs
in the fragment of \calf{} under the extensional phase $\ExtOpn$.

\paragraph{The extensional fragment of \calf{}}
As discussed in \cref{sec:lang-of-phase-distinctions}, the (proof-irrelevant) proposition $\ExtOpn$
renders the extensional modality as the function space $\Op A \coloneqq \ExtOpn \to A$,
which naturally generalizes to a dependent modality $\Op_{\isof{u}{\ExtOpn}} (A(u)) \coloneqq (u : \ExtOpn) \to A(u)$.
The force of this modality is effected by the following axiom in \calf{}, which
makes $\mathsf{step}$ silent in the presence of $\ExtOpn$:
\begin{align*}
  \Alert{\mathsf{step}/{\ExtOpn}} &: \Op(\mstept{X}{c}{e} = e)
\end{align*}

Thus the extensional modality $\Op$ governs \emph{behavioral} specifications
in the sense that any type in the image of $\Op$ is oblivious to computation steps.
One such behavioral specification is the \emph{extensional equality} between programs,
rendered in \calf{} as the type $\Op(e_1 = e_2)$.
In the case of the two identity programs, we can take \ideasy{} as the specification
and prove that \idhard{} obeys it:

\begin{theorem}[\Mech{Examples.Id.easy≡hard}]
  We have the modal equation $\Op(\idhard = \ideasy)$.
\end{theorem}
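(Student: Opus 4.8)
The plan is to exploit the single axiom $\mathsf{step}/\ExtOpn$, which renders $\mathsf{step}$ silent under the extensional phase, thereby collapsing the cost-instrumented program $\idhard$ to a pure recursive identity whose extensional correctness is then a routine induction.

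First I would unfold the modality: a proof of $\Op(\idhard = \ideasy)$ is by definition a term of type $\ExtOpn \to (\idhard = \ideasy)$, so I assume a generic witness $\isof{p}{\ExtOpn}$ and work toward the unadorned equation $\idhard = \ideasy$ in the extended context. The decisive consequence of this assumption is that $\mathsf{step}/\ExtOpn$ may now be instantiated at $p$ to yield the plain equation $\mstept{X}{c}{e} = e$ for every $X$, $c$, and $e$; in particular every occurrence of $\mathsf{step}_{\F{\nat}}^{1}$ in the body of $\idhard$ becomes erasable.

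Since both $\idhard$ and $\ideasy$ are $\lambda$-abstractions, I would reduce to a pointwise statement by congruence for $\lambda$ (equivalently, function extensionality), so that it suffices to prove $\idhard(x) = \ret{x}$ for a generic $\isof{x}{\nat}$. I then proceed by induction on $x$. In the base case $\idhard(\zero)$ reduces to $\ret{\zero}$ by the $\zero$-clause of the recursor, which is exactly $\ideasy(\zero)$. In the successor case the recursor unfolds to $\mstept{\F{\nat}}{1}{\bind{\force{u}}{\lambda y.\, \ret{\suc{y}}}}$, where $u$ is the thunked recursive call on $x'$; silencing the step via $\mathsf{step}/\ExtOpn$, applying the $\mathsf{force}$--$\mathsf{thunk}$ $\beta$-law so that $\force{u}$ becomes $\idhard(x')$, rewriting by the induction hypothesis $\idhard(x') = \ret{x'}$, and finally invoking the $\mathsf{bind}$--$\mathsf{ret}$ $\beta$-law, I arrive at $\ret{\suc{x'}} = \ideasy(\suc{x'})$.

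I expect the only genuine friction to be bookkeeping rather than insight. Two points warrant care: that ordinary $\nat$-induction is indeed available once $p$ has been assumed --- here the equation being established is no longer modal, so this is just the standard recursor/induction principle --- and that the $\mathsf{force}$--$\mathsf{thunk}$, $\mathsf{bind}$--$\mathsf{ret}$, and step-silencing rewrites are chained in the correct order. Conceptually there is no obstacle: the entire content of the theorem is that, once cost is erased, $\idhard$ is merely the familiar ``reconstruct the argument by recursion'' program, which is extensionally the identity.
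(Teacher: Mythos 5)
Your proposal is correct and follows essentially the same route as the paper's proof: reduce to the pointwise equation $\idhard(x) = \ideasy(x)$ by function extensionality with a witness of $\ExtOpn$ in the context, then induct on $x$, using the step-erasure axiom $\mathsf{step}/{\ExtOpn}$ together with the $\mathsf{bind}$/$\mathsf{ret}$ and thunk/force laws in the successor case. The only cosmetic difference is that you unfold $\Op$ directly as the function space $\ExtOpn \to \--$ and assume the witness up front, whereas the paper first invokes the fact that $\Op$ is a left exact monadic modality to reduce the modal equation to an equation between $\eta_{\Op}$-images before applying function extensionality --- in this setting the two justifications coincide.
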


\begin{proof}
  We give a sketch of the proof. By standard results~\citep{rijke-shulman-spitters:2017} on the extensional modality
  generated by a proposition, we have that $\Op$ is a left exact, monadic modality. Thus it suffices to show that
  $\eta_{\Op}(\idhard) =_{\Op(\nat \to \F{\nat})} \eta_{\Op}(\ideasy)$, where
  $\eta_{\Op}(x) \coloneqq \lambda \isof{u}{\ExtOpn}.\, x$ is the monadic unit for $\Op$.
  By function extensionality, it suffices to show $\idhard(x) = \ideasy(x)$ for all \isof{x}{\nat} and \isof{u}{\ExtOpn}.
  This follows by induction on $x$, using the equation $\Alert{\mathsf{step}/{\ExtOpn}(u)}$ in the inductive case.
\end{proof}

\subsection{Closed/intensional modality}\label{sec:closed-modality}

Complementary to the open/extensional modality $\Op$ is the \emph{closed/intensional modality} $\Cl$ that governs
intensional/quantitative properties.
One may think of applying the intensional modality as sealing away the extensional part so it cannot be observed
and leaving only the intensional part of a program.
Consequently a type in the image of the intensional modality $\Cl$ is trivial under the extensional modality $\Op$, \ie
$\Op{\Cl{A}} \cong 1$ for any type $A$.
In the Eilenberg–Moore model of \calf{}, we exploit this property to enforce the $\mathsf{step}$ erasure rule
$\mathsf{step}/\ExtOpn$: by interpreting computation types as algebras for the writer monad $\Cl{\mathbb{C}} \times \--$,
the cost structure of programs is obliterated whenever the extensional phase $\ExtOpn$ is present in the context.
In \cref{fig:closed} we define the intensional modality as a quotient inductive type~\citep{altenkirch-kaposi:2016,fiore-pitts-steenkamp:2021}.
In categorical language the intensional modality is the pushout $A \sqcup_{A \times \ExtOpn} \ExtOpn$
of the projection maps of $A \times \ExtOpn$.

\begin{footnotesize}
\begin{figure}
  \begin{minipage}[b]{0.3\textwidth}
    \begin{align*}
      \textbf{data}&\;\Cl{A}\; \textbf{where}\\
      \eta_{\Cl} &: A \to \Cl{A}\\
      \ast &: \ExtOpn \to \Cl{A}\\
      \_ &: \Pi \isof{a}{A}.\, \Pi \isof{u}{\ExtOpn}.\\
      &\eta_{\Cl}(a) = \ast(u)
    \end{align*}
  \end{minipage}%
  \begin{minipage}[b]{0.7\textwidth}
      \begin{align*}
        & \Op{(A^{\Downarrow c} \to B^{\Downarrow d})} \cong \Op{A^{\Downarrow c}} \to \Op{B^{\Downarrow d}} \\
        &= \Op{(\Sigma \isof{e}{A}.\, \Cl{\mathsf{hasCost}(A,e,c)})} \to
       \Op{(\Sigma \isof{f}{B}.\, \Cl{\mathsf{hasCost}(B,f,d)})} \\
        &\cong \Sigma \isof{e}{\Op{A}}.\, \Op_{\isof{u}{\ExtOpn}}{\Cl{\mathsf{hasCost}(A,e(u),c)}} \to
        \Sigma \isof{f}{\Op{B}}.\, \Op_{\isof{u}{\ExtOpn}}{\Cl{\mathsf{hasCost}(B,f(u),d)}}\\
        &\cong \Sigma \isof{e}{\Op{A}}.\, 1 \to \Sigma \isof{f}{\Op{B}}.\, 1\\
        &\cong \Op{A} \to \Op{B} \cong \Op(A \to B)
      \end{align*}
  \end{minipage}
  \caption{Left: closed modality as a quotient inductive type; right: extracting the extensional content of cost-aware functions, where $\Op_{\isof{u}{\ExtOpn}} A(u) \coloneqq (\isof{u}{\ExtOpn}) \to A(u)$
  is the dependent version of the extensional modality.}
  \label{fig:closed}
\end{figure}
\end{footnotesize}

\paragraph{Program extraction}

The intensional modality allows one to organize quantitative information in a way that facilitates extraction of
ordinary programs from cost-aware programs.
For instance, consider the type of functions between cost-aware computations
$A^{\Downarrow c} \to B^{\Downarrow d}$
where $A^{\Downarrow c} \coloneqq \Sigma \isof{e}{A}.\, \Cl{\mathsf{hasCost}(A,e,c)}$ is the type of
computations of $A$ that consume $c$ steps. Note that the type $\mathsf{hasCost}$ is guarded by the intensional modality.
\cref{fig:closed} shows that one may extract the underlying function by applying the \emph{extensional modality} and
using the fact that it is lex and commutes with exponentials.

\subsection{Noninterference}

In \cref{sec:intro} we claim that the extensional part of \calf{} programs cannot analyze the
intensional part of their input. This is substantiated by the following theorem:
\begin{theorem}[\Mech{Calf.Noninterference.oblivious}]\label{thm:step-oblivious}
  Given any function \isof{f}{\F{A} \to \Op{B}}, we have that
  $f(\mstep{c}{e}) = f(e)$ for any \isof{c}{\mathbb{C}} and \isof{e}{\F{A}}.
\end{theorem}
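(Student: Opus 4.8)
The plan is to reduce the equation to a statement that holds under the extensional phase and then transport it back, using crucially that the codomain $\Op{B}$ is oblivious to the cost effect. Since $\Op{B} = \ExtOpn \to B$, the most direct route is function extensionality: to prove $f(\mstep{c}{e}) = f(e)$ as elements of $\Op{B}$, it suffices to show that the two sides agree after applying an arbitrary $\isof{u}{\ExtOpn}$, i.e.\ that $f(\mstep{c}{e})(u) = f(e)(u)$. The point is that once $u$ is in scope we are working in the extensional fragment, where stepping is silent.

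Concretely, given $\isof{u}{\ExtOpn}$ I would invoke the axiom $\mathsf{step}/\ExtOpn$ (instantiated at $X \coloneqq \F{A}$), which supplies the equation $\mstep{c}{e} = e$ in $\F{A}$. Applying congruence of $f$ to this equation yields $f(\mstep{c}{e}) = f(e)$ in $\Op{B}$, and applying it further at $u$ closes the pointwise goal; discharging $u$ via function extensionality then gives the desired equation unconditionally. Equivalently, and more conceptually, one can observe that the goal is an identity type over the modal type $\Op{B}$; since $\Op$ is a left exact modality (the fact already exploited in the proof of \Mech{Examples.Id.easy≡hard}), the modal types are closed under identity types, so this proposition is itself modal and hence equivalent to its $\Op$-image $\ExtOpn \to (f(\mstep{c}{e}) = f(e))$. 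Proving it under an assumption $\isof{u}{\ExtOpn}$ and then using that $\eta_{\Op}$ is an equivalence recovers it outright, with the same one-line finish via $\mathsf{step}/\ExtOpn$ and congruence.

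I do not anticipate a real obstacle: both formulations rest on a single substantive move — the passage from ``provable under $\ExtOpn$'' to ``provable outright'' — which is licensed precisely because $f$ lands in an extensional type. This is exactly where noninterference is localized: were the codomain an arbitrary type rather than one in the image of $\Op$, the equation would genuinely fail, since $f$ could then branch on the cost component of its input. The only thing to get right is recognizing the modal structure of the goal (or, equivalently, setting up the function-extensionality reduction correctly); the cost-theoretic content is confined to the single appeal to $\mathsf{step}/\ExtOpn$.
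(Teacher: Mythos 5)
Your proof is correct and is essentially the paper's own argument: the paper gives no textual proof for this theorem, but its mechanization (\Mech{Calf.Noninterference.oblivious}) does exactly what you describe — reduce the equation in $\Op{B}$ to a pointwise goal under an assumption $\isof{u}{\ExtOpn}$ via function extensionality, then close it with $\mathsf{step}/\ExtOpn$ and congruence of $f$. Your second, modal formulation (identity types in modal types are modal since $\Op$ is lex) is an equally valid packaging of the same single substantive move.
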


Moreover, when the input of a \calf{} program is fully intensional, we may obtain a stronger
noninterference property by observing the interaction of the extensional
and intensional modalities (also exploited in the program extraction example in \cref{sec:closed-modality}):

\begin{theorem}[\Mech{Calf.Noninterference.constant}]\label{thm:non-interference}
  Any function \isof{f}{\Cl{A} \to \Op{B}} is constant.
\end{theorem}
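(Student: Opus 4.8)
The plan is to reduce constancy of $f$ to the single fact that $\Cl{A}$ becomes contractible as soon as a witness of $\ExtOpn$ is in scope, which is exactly the internal content of the isomorphism $\Op{\Cl{A}} \cong 1$ recorded in \cref{sec:closed-modality}. Concretely, I would produce the explicit constant $b_0 \coloneqq \lambda \isof{u}{\ExtOpn}.\, f(\ast(u))(u) : \Op{B}$ --- note that this is well typed precisely because the target $\Op{B} = \ExtOpn \to B$ only demands a value of $B$ once a proof $\isof{u}{\ExtOpn}$ has been supplied, and under that same hypothesis the element $\ast(u) : \Cl{A}$ is available --- and then show $f(x) = b_0$ for every $\isof{x}{\Cl{A}}$.

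First I would unfold the goal $f(x) = b_0$ in $\Op{B} = \ExtOpn \to B$ using function extensionality, reducing it to proving $f(x)(u) = f(\ast(u))(u)$ in $B$ for every $\isof{u}{\ExtOpn}$. Under the hypothesis $\isof{u}{\ExtOpn}$ it suffices to establish $x = \ast(u)$ in $\Cl{A}$, after which the desired equation follows by congruence. This identification is obtained by induction on $\isof{x}{\Cl{A}}$ using the presentation of \cref{fig:closed}: in the $\eta_{\Cl}(a)$ case the path constructor supplies $\eta_{\Cl}(a) = \ast(u)$ directly, and in the $\ast(u')$ case proof-irrelevance of $\ExtOpn$ gives $u' = u$ and hence $\ast(u') = \ast(u)$. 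Conceptually this is nothing but the observation that, since $\Op{B}$ is $\Op$-modal, $f$ must factor through the $\Op$-unit $\eta_{\Op} : \Cl{A} \to \Op{\Cl{A}}$, and $\Op{\Cl{A}} \cong 1$ then collapses the factorization to a single point of $\Op{B}$.

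The step I expect to be the main obstacle is the path-constructor obligation of the $\Cl{A}$-induction. Having handled the two point constructors, the eliminator additionally requires that the identifications $\eta_{\Cl}(a) = \ast(u)$ and $\ast(u') = \ast(u)$ assembled above cohere along the path $\eta_{\Cl}(a) = \ast(u')$ built into $\Cl{A}$ --- that is, that a certain square of identifications can be filled. Because $\Cl{A}$ is genuinely contractible in any context containing $\isof{u}{\ExtOpn}$ (the internal reading of $\Op{\Cl{A}} \cong 1$, itself a consequence of $\Op$ being left exact), the relevant identity types are themselves contractible and this coherence is automatic; still, this is the one point where the bare modality calculus does not suffice and the quotient-inductive eliminator must be discharged by hand. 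Everything else is formal manipulation of $\ExtOpn \to -$ together with function extensionality, closely mirroring the earlier proof of a modal equation between the two identity functions.
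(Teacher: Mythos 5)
Your proof is correct, and it rests on exactly the fact the paper's development hinges on: in any context containing $\isof{u}{\ExtOpn}$, the type $\Cl{A}$ collapses to a point (equivalently, $\Op{\Cl{A}} \cong 1$). Where you differ is in how that fact is deployed. The paper gives no written proof for this theorem (only the mechanization tag \Mech{Calf.Noninterference.constant}), but the argument it signals in \cref{sec:closed-modality} and \cref{fig:closed} is the abstract modality calculus of \citet{rijke-shulman-spitters:2017}: since the codomain $\Op{B}$ is $\Op$-modal, $f$ factors through the unit $\eta_{\Op} : \Cl{A} \to \Op{\Cl{A}}$, and $\Op{\Cl{A}} \cong 1$ collapses the factorization — a two-line proof that never touches the quotient-inductive presentation of $\Cl$. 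Your version unfolds this into an element-level argument: the explicit constant $b_0 = \lambda \isof{u}{\ExtOpn}.\, f(\ast(u))(u)$, function extensionality, and induction on $\Cl{A}$ to identify $x$ with $\ast(u)$ under the hypothesis $u$. That buys concreteness (it is essentially what a direct Agda mechanization against the signature of \cref{fig:calf} would look like, including the coherence cell of $\mathsf{ind}_{\Cl}$ that you rightly flag), at the cost of redoing by hand work that the factorization argument gets for free; note also that in \calf{}'s extensional, set-level equality the coherence obligation is discharged even more simply than you suggest — under $\isof{u}{\ExtOpn}$ the rule $\mathsf{uni}$ yields actual equality of the two identification proofs, so no contractibility analysis of identity types is needed. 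A small simplification of your own route: rather than re-running the $\Cl{A}$-induction inside the proof, you could invoke the collapse $\Cl{A} \cong 1$ under $\ExtOpn$ (the internal counterpart of \cref{lemma:closedwhenopen}) as a standalone lemma and then conclude $x = \ast(u)$ immediately, which cleanly separates the one genuinely quotient-inductive step from the formal $\ExtOpn \to -$ bookkeeping.
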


In other words one cannot construct a map into the extensional fragment that branches based on purely intensional information,
a fact that enables a type-directed method to systematically eliminate intensional structure from
programs of a certain shape. For instance, one may perform the following \emph{program optimization}:

\begin{theorem}[\Mech{Calf.Noninterference.optimization}]\label{thm:optimization}
  Any map \isof{f}{(\Sigma \isof{c}{C}.\, \Cl{A(c)}) \to \Op{B}} admits an optimization
  \isof{f'}{C \to \Op{B}} such that $f(c,a) = f'(c)$ for all \isof{c}{C} and \isof{a}{\Cl{A(c)}}.
\end{theorem}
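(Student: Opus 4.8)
The plan is to treat this as the \emph{fiberwise} strengthening of \cref{thm:non-interference}: whereas that result says a single map $\Cl{A}\to\Op{B}$ is constant, here I must produce, uniformly in $\isof{c}{C}$, a value $\isof{f'(c)}{\Op{B}}$ with which the fiber map $\lambda a.\,f(c,a) : \Cl{A(c)} \to \Op{B}$ agrees. The naive move --- take $f'(c)$ to be ``the constant value of $f(c,-)$'' --- is not available directly, because $\Cl{A(c)}$ may have no points in the intensional phase, so constancy alone pins down no element of $\Op B$. The key observation is that a value of $\Op{B} = \ExtOpn\to B$ only ever needs to be supplied \emph{under} an assumption $\isof{u}{\ExtOpn}$, and under such an assumption $\Cl{A(c)}$ does have a canonical point, namely $\ast(u)$. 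I would therefore define
\[
  f'(c) \coloneqq \lambda (\isof{u}{\ExtOpn}).\, f(c,\ast(u))(u) \;:\; \Op{B},
\]
which is manifestly a total function $C \to \Op{B}$.

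For the defining equation $f(c,a) = f'(c)$, since both sides live in $\Op{B} = \ExtOpn \to B$ and $\ExtOpn$ is a proposition, by function extensionality it suffices to fix $\isof{u}{\ExtOpn}$ and check $f(c,a)(u) = f(c,\ast(u))(u)$. Under the assumption $u$, the path constructor of the quotient inductive type $\Cl{A(c)}$ in \cref{fig:closed} forces $\eta_{\Cl}(a_0) = \ast(u)$ for every $\isof{a_0}{A(c)}$; eliminating $a$ accordingly (equivalently, using that $\Cl{A(c)}$ is contractible with center $\ast(u)$ once $\ExtOpn$ is inhabited) gives $a = \ast(u)$, whence $(c,a) = (c,\ast(u))$ and the two sides agree by congruence. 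This is exactly the mechanism underlying \cref{thm:non-interference}, now carried out in each fiber.

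Structurally, the same argument is the reflection of $f$ through the extensional modality. Because $\Op B$ is $\Op$-modal, $f$ factors uniquely as $\bar f \circ \eta_{\Op}$ with $\isof{\bar f}{\Op(\Sigma\,\isof{c}{C}.\,\Cl{A(c)}) \to \Op B}$; and the chain of isomorphisms displayed in \cref{fig:closed} --- using that $\Op$ is left exact, hence commutes with dependent sums, together with $\Op_{\isof{u}{\ExtOpn}}\Cl{A(c)} \cong 1$ --- specializes to $\Op(\Sigma\,\isof{c}{C}.\,\Cl{A(c)}) \cong \Op C$, realized by $\Op$ of the first projection. Transporting $\bar f$ along this equivalence yields the same $f'$, and naturality of $\eta_{\Op}$ gives the equation for free. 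I expect the only real work to be this isomorphism, since it is where all the modal content lives, though it is routine given the tools already assembled for \cref{fig:closed}. The point to handle with care is precisely the empty-fiber case --- ensuring $f'$ is defined \emph{without} assuming a point of $\Cl{A(c)}$ --- which the $\ast(u)$ construction, equivalently the contractibility of $\Cl{A(c)}$ under $\ExtOpn$, resolves.
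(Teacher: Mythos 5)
Your proposal is correct and takes essentially the same route as the paper: the paper's proof simply applies \cref{thm:non-interference} to each fiber map $\lambda a.\, f(c,a) : \Cl{A(c)} \to \Op{B}$ to obtain the constant value $b$ and sets $f'(c) \coloneqq b$, which is precisely the mechanism you inline by exhibiting the canonical point $\ast(u)$ under the extensional phase and using contractibility of $\Cl{A(c)}$ given $\isof{u}{\ExtOpn}$. Your supplementary factorization through $\Op(\Sigma\,\isof{c}{C}.\,\Cl{A(c)}) \cong \Op{C}$ is a valid alternative packaging of the same modal content, but it is not needed for, nor used in, the paper's argument.
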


\begin{proof}
  We need to construct a map \isof{f'}{C \to \Op{B}}. Suppose that \isof{c}{C}. By \cref{thm:non-interference},
  there is a constant map \isof{\lambda \_.\, b}{\Cl{A(c)} \to \Op{B}} such that
  $\lambda a.\, f(c,a) = \lambda \_.\, b$, which provides the required program $b$ of type $\Op{B}$.
  By definition, we have $f(c,a) = f'(c)$ for all \isof{c}{C} and \isof{a}{\Cl{A(c)}}.
\end{proof}

Recalling the type of bounded computations $A^{\Downarrow c}$ from \cref{sec:closed-modality},
we may apply \cref{thm:optimization} to a program of type $A^{\Downarrow c} \to \Op{B}$
to obtain an optimized program of type $A \to \Op{B}$ that dispenses with the proof of the cost bound.

\subsection{Presentation of \calf{} in a logical framework}\label{sec:calf}

Following recent work~\citep{uemura:2019, sterling-harper:2021, gratzer-sterling:2020} promoting the study of
type theories \emph{qua} mathematical objects in structured categories, we present \calf{} as a signature
in a \emph{logical framework} using the internal language of locally cartesian closed categories (lcccs).
As observed by~\citet{uemura:2019}, one can specify a type theory
as a list of constants in a version of extensional dependent type theory.
The resulting signature \emph{presents} the free lccc over the defined constants, which we then take
as the \emph{definition} of the type theory. As we show in \cref{sec:metatheory}, this view of
type theories as certain initial objects allows one to easily define models of \calf{}.

Concretely, we work in a logical framework with a universe of judgments \jdg{} closed under dependent
product, dependent sum, and extensional equality. An object theory (\eg \calf{}) is specified as follows:
\begin{enumerate}
  \item Judgments are declared as constants ending in \jdg{}.
  \item Binding and scope is handled by the framework-level dependent product $(\isof{x}{X}) \to Y(x)$.
  \item Equations between object-level terms are specified by constants ending in the framework-level
  equality type $x_1 =_X x_2$.
\end{enumerate}

\paragraph{Presentation of \calf{} in the logical framework}
We present \calf{} in \cref{fig:calf}. For brevity,
we do not explicitly mention all types and computations here, the majority of which remain unchanged from \dcbpv{};
however the full definition may be found in \cref{sec:calf-def}.
Note that we define computations as $\tmc{X} \coloneqq \tmv{\UU{X}}$, leading to a less bureaucratic version of CBPV
in which thunk and force are identities.
Observe that the \calf{} equality type $\mathsf{eq}$ comes equipped with a reflection
rule $\mathsf{ref}$ that renders inhabitation of $\mathsf{eq}$ equi-derivable with judgmental equality.
Thus we abuse notation slightly and also write $e =_{\tmc{X}} f$ for the type $\eqty{\UU{X}}{e}{f}$.

\begin{small}
\begin{figure}
  \begin{minipage}[t]{0.45\textwidth}
   \begin{align*}
    \mathbb{C} &: \jdg\\
    0 &: \mathbb{C}\\
    + &: \mathbb{C} \to \mathbb{C} \to \mathbb{C}\\
    \le &: \mathbb{C} \to \mathbb{C} \to \jdg{}\\
    \mathsf{costMon} &: \mathsf{isCostMonoid}(\mathbb{C}, 0, +, \le)\\
    \mathsf{step} &: \impl{\isof{X}{\tpc}} \mathbb{C} \to \tmc{X} \to \tmc{X}\\
    \mathsf{step}_{0} &: \impl{X,e} \mstep{0}{e} = e\\
    \mathsf{step}_{+} &: \impl{X,e,c_1,c_2}\\
    &\mstep{c_1}{\mstep{c_2}{e}} = \mstep{c_1 + c_2}{e}
  \end{align*}
  \end{minipage}
  \begin{minipage}[t]{0.45\textwidth}
    \begin{align*}
    \tpv &: \jdg\\
    \mathsf{tm}^+ &: \tpv \to \jdg\\
    \mathsf{U} &: \tpc \to \tpv\\
    \mathsf{F} &: \tpv \to \tpc\\
    \mathsf{tm}^{\ominus}(X) &\coloneqq \tmv{\UU{X}}\\
    \mathsf{ret} &: (\isof{A}{\tpv}, \isof{a}{\tmv{A}}) \to \tmc{\F{A}}\\
    \mathsf{bind} &: \impl{\isof{A}{\tpv}, \isof{X}{\tpc}} \tmc{\F{A}} \to\\
    &(\tmv{A} \to \tmc{X}) \to \tmc{X}
    \end{align*}
  \end{minipage}
  \begin{minipage}{0.45\textwidth}
    \begin{align*}
    \ExtOpn &: \jdg\\
    \ExtOpn/{\mathsf{uni}} &: \impl{\isof{u,v}{\ExtOpn}} u = v\\\\
    \ext{\mathcal{J}} &\coloneqq \ExtOpn \to \mathcal{J}\\
    \mathsf{step}/{\ExtOpn} &: \impl{X, e, c} \ext{\mstep{c}{e} = e}\\
    \Op^+ &: \tpv \to \tpv\\
    \_ &: \impl{A} \tmv{\Op^+{A}} \cong \Op(\tmv{A})
    \end{align*}
  \end{minipage}
  \begin{minipage}{0.45\textwidth}
    \begin{align*}
    \Cl &: \tpv \to \tpv\\
    \eta_{\Cl} &: \tmv{A} \to \tmv{\Cl{A}}\\
    \ast &: \ExtOpn \to \tmv{\Cl{A}}\\
    \_ &: \Pi \isof{a}{\tmv{A}}.\, \Pi \isof{u}{\ExtOpn}.\, \eta_{\Cl}(a) = \ast(u)\\
    \mathsf{ind}_{\Cl} &: \impl{A} (\isof{a}{\tmv{\Cl{A}}}) \to (\isof{X}{\tmv{\Cl{A}} \to \tpc}) \to\\
    &(\isof{x_0}{(\isof{a}{\tmv{A}}) \to \tmc{X(\eta_{\Cl}(a))}}) \to\\
    &(\isof{x_1}{(\isof{u}{\ExtOpn}) \to \tmc{X(\ast(u))}}) \to\\
    &((\isof{a}{\tmv{A}}) \to (\isof{u}{\ExtOpn}) \to x_0(a) = x_1(u)) \to\\
    &\tmc{X(a)}
    \end{align*}
  \end{minipage}
  \begin{minipage}{\textwidth}
    \begin{align*}
     \Pi &: (\isof{A}{\tpv}, \isof{X}{\tmv{A} \to \tpc}) \to \tpc\\
    (\mathsf{ap}, \mathsf{lam}) &: \impl{A,X} \tmc{\Pi(A; X)} \cong (\isof{a}{\tmv{A}}) \to \tmc{X(a)}\\
    \Sigma^{++} &: (\isof{A}{\tpv}, \isof{B}{\tmv{A} \to \tpv}) \to \tpv\\
    (\mathsf{unpair}^{++}, \mathsf{pair}^{++}) &: \impl{A,B} \tmv{\sigpos{A}{B}} \cong \Sigma (\tmv{A}) (\lambda a.\, \tmv{B(a)})\\
    \Sigma^{+-} &: (\isof{A}{\tpv}, \isof{X}{\tmv{A} \to \tpc}) \to \tpc\\
    (\mathsf{unpair}^{+-}, \mathsf{pair}^{+-}) &: \impl{A,X} \tmc{\signeg{A}{X}} \cong \Sigma (\tmv{A}) (\lambda a.\, \tmc{X(a)})
    \end{align*}
  \end{minipage}

  \begin{minipage}{0.5\textwidth}
    \begin{align*}
    \mathsf{eq} &: (\isof{A}{\tpv}) \to \tmv{A} \to \tmv{A} \to \tpv\\
    \mathsf{self} &: \impl{A} (\isof{a,b}{\tmv{A}}) \to\\
    & a =_{\tmv{A}} b \to \tmv{\eqty{A}{a}{b}}\\
    \mathsf{ref} &: \impl{A} (\isof{a,b}{\tmv{A}}) \to\\
    &\tmc{\F{\eqty{A}{a}{b}}} \to a =_{\tmv{A}} b\\
    \mathsf{uni} &: \impl{A,a,b} (\isof{p,q}{\tmc{\F{\eqty{A}{a}{b}}}}) \to \Op{(p = q)}
    \end{align*}
  \end{minipage}%
  \begin{minipage}{0.5\textwidth}
    \begin{align*}
      \nat &: \tpv \\
      \zero &: \tmv{\nat} \\
      \mathsf{suc} &: \tmv{\nat} \to \tmv{\nat}\\
      \mathsf{rec} &: (\isof{n}{\tmv{\nat}}) \to\\
      &(\isof{X}{\tmv{\nat} \to \tpc}) \to \tmc{X(\zero)} \to\\
      &((\isof{n}{\tmv{\nat}}) \to \tmc{X(n)} \to\\
      &\quad \tmc{X(\suc{n})}) \to \tmc{X(n)}
    \end{align*}
  \end{minipage}

  \begin{minipage}{\textwidth}
    \begin{align*}
    \mathsf{lam}_{\mathsf{step}} &: \impl{A,X,f,c} \lam{\mstep{c}{f}} = \mstep{c}{\lam{f}}\\
    \mathsf{pair}_{\mathsf{step}} &: \impl{A,X,e_1,e_2,c} \mstep{c}{(e_1,e_2)} = (e_1 , \mstep{c}{e_2})\\
    \mathsf{bind}_{\mathsf{step}} &: \impl{A,X,e,f,c} \bind{\mstep{c}{e}}{f} = \mstep{c}{\bind{e}{f}}
    \end{align*}
  \end{minipage}

  \caption{Equational presentation of \calf{} as a signature $\Sigma_{\calf}$ in the logical framework. Here
  the type $\mathsf{isCostMonoid}$ encodes all the structure of a cost monoid and $\Sigma$ denotes the framework-level
  dependent sum. We write $(\alpha,\beta) : A \cong B$ when $\alpha$ and $\beta$ are the forward map and backward map
  of an isomorphism $A \cong B$.}
  \label{fig:calf}
\end{figure}
\end{small}

\section{Quantitative refinement in \calf{}}\label{sec:quantitative-refinements}

In \cref{sec:calf} we developed the skeletal structure of \calf{} equipped the effect $\mathsf{step}$ for cost instrumentation.
In this section, we define a quantitative refinement expressing the \emph{upper bound} of a computation and present a collection of expected rules for the refinement relation.
As mentioned in \cref{sec:intro}, this mode of quantitative reasoning manifests as \emph{equations} between computations;
we can make meaningful inferences about the cost of a computation $e$ by equating it to another computation whose cost structure is readily available,
\ie $\mstep{c}{\ret{a}}$.

As a first attempt, we may conjecture that a computation \isof{e}{\tmc{\F{A}}} is bounded
by \isof{c}{\mathbb{C}} if $e =_{\tmc{\F{A}}} \mstep{c'}{\ret{a}}$ for some $c' \le c$ and \isof{a}{\tmv{A}}.
While this is a perfectly sensible definition, our investigations suggest it is more natural to replace ordinary
inequality $\le$ with the \emph{extensional inequality} $\Op{(c' \le c)}$.
Consequently the upper bound specification may be concisely expressed
by refining the $\mathsf{hasCost}$ refinement
from \cref{sec:quant-reasoning}:
\begin{align*}
  \mathsf{hasCost}(A, e, c) &= \Sigma^{++} \isof{a}{A}.\, \eqtycst{\tmc{\F{A}}}{e}{\mstep{c}{\ret{a}}}\\
  \mathsf{isBounded}(A,e,c) &= \Sigma^{++} \isof{c'}{\UU{\widehat{\mathbb{C}}}}.\, \Op^+(\UU{c' \mathrel{\widehat{\le}} c}) \times \mathsf{hasCost}(A,e,c')
\end{align*}

Here $\widehat{\mathbb{C}}$ and $\widehat{\le}$ internalizes the (judgmental) structure of the cost monoid
$\mathbb{C}$ as \calf{} types, \ie we have that $\tmc{\widehat{\mathbb{C}}} \cong \mathbb{C}$;
the full axiomatization of $\widehat{\mathbb{C}}$ and $\widehat{\le}$ may be found in \cref{fig:calf-cost} in \cref{sec:calf-def}.
The use of the extensional inequality in the $\mathsf{isBounded}$ refinement
reflects the intuition that ``costs don't have cost''. More importantly, this arrangement grants one
access to the extensional fragment and the \emph{extensional} properties therein when proving
cost refinements, which is essential for analyses of algorithms that depend on
behavioral invariants of data structures.
In \cref{sec:metatheory}, we prove the validity of cost bounds defined in this manner
by exhibiting an equivalence between ordinary inequality and extensional inequality
for a large class of cost monoids.

\subsection{Quantitative refinement rules}

\calf{} admits many expected principles for reasoning about the $\mathsf{isBounded}$
refinement. We present the exemplary rules used in the case
studies in \cref{sec:verification}, summarized in inference rule style in
\cref{fig:refinement-lemmas}.
There are three syntax-directed refinements:
the \textsc{Return} refinement bounds the return of a value by the neutral element \isof{0}{\mathbb{C}};
the \textsc{Step} refinement states that a $\mathsf{step}$ increases the bound on a given computation by the amount charged;
the \textsc{Bind} refinement combines the bounds on the two constituent computations in the obvious way.
Lastly, the \textsc{Relax} refinement allows a cost bound to be replaced with a weaker bound.
Proofs of all refinement lemmas can be found in the supplementary material.

\def\stackalignment{l}
\begin{figure}
  \begin{mathpar}
  \inferrule[\stackunder{Return}{(\Mech{Calf.Types.Bounded.bound/ret})}]{
  }{
    \isBounded{A}{\ret{a}}{0}
  }

  \inferrule[\stackunder{Step}{(\Mech{Calf.Types.Bounded.bound/step})}]{
    \isBounded{A}{e}{c}
  }{
    \isBounded{A}{\mstep{d}{e}}{d+c}
  }

  \inferrule[\stackunder{Bind}{(\Mech{Calf.Types.Bounded.bound/bind})}]{
    \isBounded{A}{e}{c}\\
    \forall \isof{a}{A}.\, \isBounded{B}{f(a)}{d(a)}
  }{
    \isBounded{B}{\bind{e}{f}}{\bind{e}{\lambda a.\, c + d(a)}}
  }

  \inferrule[\stackunder{Relax}{(\Mech{Calf.Types.Bounded.bound/relax})}]{
    \isBounded{A}{e}{c}\\
    c \le c'
  }{
    \isBounded{A}{e}{c'}
  }
\end{mathpar}
\caption{Quantitative refinement lemmas in \calf{} displayed in inference rule style. }
\label{fig:refinement-lemmas}
\end{figure}

\subsection{Recursion}\label{sec:recursion}

As mentioned in \cref{sec:general-recursion}, \citeauthor{bove-capretta:2005}'s accessibility predicates
provide a way to express general recursive definitions in type theory.
Inspired by \citet{niu-harper:2020},
we provide an alternative approach in \calf{} that exploits the cost structure of programs:
one can use the cost bound of a given algorithm to \emph{safely define} the algorithm in question.
Instead of accessibility predicates, we may parameterize every program by a \emph{clock} that
represents the amount of fuel available for recursion. We say that an instantiation of the clock
is \emph{safe} when it provides enough fuel for the clocked program to satisfy the behavioral
specification of the algorithm. By definition the \emph{recursion depth} of the program is
a safe instantiation.

Furthermore, note that the \emph{cost} of the program is an upper bound on the recursion depth in many cost models.
In such cases defining an algorithm in \calf{} is intertwined with extracting and verifying its cost bound,
evincing a synergy one enjoys in the cost-aware setting:
algorithms with interesting cost structure require general recursive definitions,
meanwhile their safety as clocked programs is derived from the cost bound.
Observe that this paradigm is a legitimate encoding of general recursion
because \emph{we do not track the cost of computing the cost bound}.
One may think of this arrangement as programming with a version of for loops
whose bounds are computed in a cost-free manner.

\paragraph{Method of recurrence relations}

To put the plan in action, we outline a recipe for defining and analyzing an algorithm using the method of recurrence relations in \calf{}:

\begin{enumerate}
  \item An algorithm is given along with its \emph{cost model}. Place $\mathsf{step}$
  in accordance with the cost model to obtain a cost-aware instrumentation of the algorithm.
  \item Define a \emph{clocked} version of the algorithm; explicitly, one parametrizes the algorithm by an extra clock argument of type $\nat$
  representing the available fuel; when the clock is nonzero, the program follows the designated recursion pattern by
  decrementing the clock, and when the clock is zero, the program terminates by returning a default value or raising an exception.
  \item Define the \emph{recursion depth} that bounds the number of recursive calls.
  Because we do not track the cost of computing the recursion depth, it may be defined however convenient.
  \item Define the the associated \emph{cost recurrence} that maps inputs and to costs.
  Often times this may be used in place of the recursion depth as it is an upper bound.
  Similar to the recursion depth, we do not track the cost of the cost recurrence.
  \item Obtain the \emph{complete program} by instantiating the clocked program with the recursion depth.
  Prove this is a safe instantiation in the sense that the
  resulting program satisfies the behavioral specification of the algorithm (\eg computes the greatest common divisor).
  \item Prove that the resulting algorithm is bounded by the cost recurrence. This process is mostly mechanical:
  one repeatedly applies the lemmas in \cref{sec:quantitative-refinements} to break down $\mathsf{isBounded}$ goals.
  \item Refine the recurrence by (\eg) computing a closed-form solution. Usually this step represents the bulk
  of the work in pen-and-paper algorithm analysis.
\end{enumerate}

We apply this recipe in the following section to analyze Euclid's algorithm for the
greatest common divisor.

\section{Verification in \calf{}}\label{sec:verification}

We demonstrate in \calf{} two fundamental techniques used pervasively in algorithm analysis.
First, we illustrate the \Alert{method of recurrence relations}
by analyzing Euclid's algorithm for the greatest common divisor, proving
its correctness and deriving an asymptotically tight upper bound on the number of modulus operations used.
Second, we formalize physicist's method for \Alert{amortized analysis}
by studying the complexity of sequences of \emph{batched queue} operations, verifying
that each queue operation has constant amortized cost.
Due to space limitations we cannot discuss the sorting case study in detail, but
we mention results concerning parallel complexity in \cref{sec:parallelism},
and the interested reader can find the full development in the supplementary material.

Through these case studies, we promote a comprehensive verification pipeline made possible
by the unification of the following ingredients in a single framework:
\begin{enumerate}
  \item Specification of cost models
  \item Formal connection between algorithms and their associated recurrence relations
  \item A modality that administers extensional properties
  \item Full-spectrum dependent types that provides a rich specification language
\end{enumerate}

\paragraph{Cost models}

Prior to analyzing an algorithm, one has to make clear what ``counts'' as cost.
A particularly simple definition is to count every transition step in an
operational semantics, resulting in a \emph{language-level} cost semantics.  On
the other hand, algorithms researchers prefer a different perspective in which
cost is an \emph{algorithm-specific} notion.
For example, a common cost model for sorting algorithms counts the number of
comparisons, which does not account for the cost of (\eg) constructing lists.
This view allows one to study the underlying combinatorial structure of an
algorithm without getting distracted by implementation details. This is the
prevailing perspective we take in \calf{}, although one can also work with a
uniform language-level cost semantics when necessary; for instance, in the
amortized analysis of batched queues (see \cref{sec:queues}) we axiomatize a type of \emph{cost-aware}
lists that charges one $\mathsf{step}$ per recursive call.

\paragraph{Formalizing recurrence extraction}

Recurrence relations are a fundamental concept in algorithm analysis --- every
algorithm can be \emph{abstracted} into an associated cost recurrence that
characterizes the relationship between the input and the induced cost.  Recent
work of \citet{kavvos-morehouse-licata-danner:2019} has provided mathematical
grounding for informal proofs involving recurrence relations in the form of a
verified procedure for extracting (higher-order) recurrence relations from CBPV
programs.  Although \calf{} does not support recurrence extraction in the
mechanical style proposed by \citet{kavvos-morehouse-licata-danner:2019}, one
can manually define a recurrence and express its relationship to the given
algorithm by proving the \emph{internal} $\mathsf{isBounded}$ refinement.
Indeed, one of the advances embodied in \calf{} is the unification
of the distinct phases/languages in \citet{kavvos-morehouse-licata-danner:2019}
into a single framework that furnishes a programming language with support for cost
specification.

\paragraph{Managing extensionality}
As discussed in \cref{sec:lang-of-phase-distinctions},
the language of phase distinctions naturally induces a modality $\Op$ for extension,
which we use to express behavioral specifications in \calf{}. For instance,
we express the correctness of Euclid's algorithm by proving that
it satisfies the characteristic equations of the gcd under the extensional modality $\Op$.

\paragraph{Cost-aware logical framework}

Decades of experience has shown the effectiveness of using dependent type
theories to encode mathematics \citep{buzzard-commelin-massot:2020,
han-van-doorn:2020, gonthier:2008} and to verify \emph{behavioral}
properties of programs \citep{chlipala:2013, lee-crary-harper:2007,
ullrich:2016, stump:2016}.  Our experience with \calf{} suggests
that dependent type theories are \emph{also} an appropriate tool for analyzing
intensional properties of programs including cost.  In the following case
studies we rely on the rich type structure of \calf{} to evaluate different
strategies for establishing cost bounds.  We stress that \calf{} is a
\emph{framework} for quantitative reasoning: instead of working with a fixed
set of rules, one is free to choose the most appropriate tool for the given
problem.

\subsection{Euclid's algorithm}\label{sec:gcd}
In our first case study we analyze Euclid's algorithm for calculating the greatest common divisor, the
prototypical example of an algorithm that relies on nonstructural recursion. Our analysis closely follows the steps in the recipe from \cref{sec:recursion}.

\paragraph{Behavioral specification}
Let \isof{\textit{gcd}}{\tmv{\nat^2} \to \tmc{\nat}} be a candidate \calf{} program for
computing the gcd. Inspired by the usual formulation of Euclid's algorithm,
we may specify the correct \emph{behavior} of \textit{gcd} with the following propositions:
\begin{align*}
  \Op(\textit{gcd}\,(x,\zero) &= \ret{x}) \tag{3}\label{eq:gcd-1}\\
  \Op(\textit{gcd}\,(x,\suc{y}) &= \textit{gcd}\,(\suc{y}, \textit{mod}\,(x, \suc{y}))) \tag{4}\label{eq:gcd-2}
\end{align*}

Above we have assumed that there is a (cost-free) \calf{} program
\isof{\textit{mod}}{\tmv{\nat^2} \to \tmv{\nat}} that computes the modulus.  In
other words \cref{eq:gcd-1,eq:gcd-2} state that \textit{gcd} satisfies the
defining clauses of Euclid's algorithm in the extensional fragment.

\paragraph{Specializing the cost structure}
Because the gcd is defined on the natural numbers, we instantiate the cost structure $\mathbb{C}$ at
the ordered monoid $(\Nat, +, 0, \le)$.

\begin{small}
\begin{figure}
  \centering
  \begin{minipage}[t]{\textwidth}
    \begin{align*}
      \textit{mod}_{\mathsf{inst}} &: \tmv{\nat} \to \tmv{\nat} \to \tmc{\F{\nat}}\\
      \modnum{x}{y} &= \Alert{\mathsf{step}^1}(\ret{\textit{mod}\,(x, y)})
    \end{align*}
  \end{minipage}
 \begin{minipage}[t]{\textwidth}
  \begin{align*}
    \gcdclocked &: \tmv{\nat} \to \tmv{\nat^2} \to \tmc{\F{\nat}}\\
    \gcdclocked\, (\zero) &= \lambda (x,y).\, \ret{x}\\
    \gcdclocked\, (\suc{k}) &= \lambda (x,y).\, \reccst{y}{\ret{x}}{\lambda y',\_.\, e_1}\\
    e_1 &= \bindcst{r}{\modnum{x}{\suc{y'}}}{\gcdclocked(k)(\suc{y'}, r)}
  \end{align*}
\end{minipage}
\begin{minipage}{0.5\textwidth}
  \begin{align*}
    \gcdcostnat &: \tmv{\nat^2} \to \tmv{\nat} \\
    \gcdcostnat (x, y) &=
    \begin{dcases*}
      \zero &\text{if} y = \zero\\
      \suc{\gcdcostnat(y, \textit{mod}\,(x, y))} &\text{ o.w.}
    \end{dcases*}
  \end{align*}
\end{minipage}
\begin{minipage}{0.5\textwidth}
  \begin{align*}
  \gcdcode &: \tmv{\nat^2} \to \tmc{\F{\nat}}\\
  \gcdcode\, (x,y) &= \gcdclocked\,(\gcdcostnat\,(x,y))(x,y)\\
\end{align*}
\end{minipage}

\caption{Euclid's algorithm in \calf{}. From top to bottom: $\textit{mod}_{\mathsf{inst}}$ is the cost instrumented modulus operation,
\gcdclocked{} is the clocked algorithm, \gcdcostnat{} is the recursion depth/cost recurrence, and \gcdcode{} is the
final program. Note that because \gcdcostnat is cost-free,
we may define it however convenient, \eg by well-founded induction on the arguments.
}
\label{fig:gcd-code}
\end{figure}
\end{small}

\paragraph{Executing the recipe}

We execute the recipe from \cref{sec:recursion} to analyze Euclid's algorithm.
The associated \calf{} programs are displayed in \cref{fig:gcd-code}.
First we define the \Alert{cost model} to be the number of \textit{mod} operations, encoded in the instrumented version of the modulus, $\textit{mod}_{\mathsf{inst}}$,
which is used to define the \Alert{clocked} gcd algorithm $\gcdclocked$.
Here the first parameter of $\gcdclocked$ serves as the termination metric:
recursive calls in Euclid's algorithm are justified by decrementing the clock parameter.
Next, observe that under our cost model the \Alert{recursion depth} and \Alert{cost recurrence}
coincide for Euclid's algorithm, so we define a single (cost-free) program
$\gcdcostnat$ that simultaneously provides a sufficient instantiation (described in \cref{sec:recursion})
of the clock in $\gcdclocked$ and a cost recurrence for the algorithm.
Consequently the \Alert{complete algorithm} $\gcdcode$ is obtained by instantiating the clock parameter in
$\gcdclocked$ with $\gcdcostnat$. We prove that $\gcdcode$ correctly implements gcd:

\begin{theorem}[\Mech{Examples.Gcd.Spec.\{gcd≡spec/zero, gcd≡spec/suc\}}]\label{theorem:gcd-safety}
  We have that $\gcdcode$ behaves correctly, \ie \cref{eq:gcd-1,eq:gcd-2} hold
  for all \isof{x,y}{\tmv{\nat}}.
\end{theorem}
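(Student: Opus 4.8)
The plan is to prove \cref{eq:gcd-1,eq:gcd-2} separately, exploiting the fact that both are stated under the extensional modality $\Op$: we may therefore assume a witness $\isof{u}{\ExtOpn}$ throughout and, crucially, silence occurrences of $\mathsf{step}$ via the axiom $\mathsf{step}/{\ExtOpn}$.

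For \cref{eq:gcd-1} I would simply compute. The first clause of $\gcdcostnat$ gives $\gcdcostnat(x, \zero) = \zero$, so $\gcdcode(x,\zero) = \gcdclocked(\zero)(x,\zero) = \ret{x}$ by the defining clause of $\gcdclocked$ at clock $\zero$. This equation in fact holds judgmentally and needs no appeal to $\ExtOpn$.

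The substance is in \cref{eq:gcd-2}. Working under $\isof{u}{\ExtOpn}$, I would first unfold the recursion depth by its second clause, obtaining $\gcdcostnat(x,\suc{y}) = \suc{k}$ where $k = \gcdcostnat(\suc{y}, \textit{mod}(x,\suc{y}))$. Feeding this clock into $\gcdclocked$ and reducing the recursor on $\suc{y}$ (taking the successor branch with $y' \coloneqq y$ and discarding the unused recursive result) yields $\gcdcode(x,\suc{y}) = \bind{\modnum{x}{\suc{y}}}{\lambda r.\, \gcdclocked(k)(\suc{y}, r)}$. At this point I invoke $\mathsf{step}/{\ExtOpn}(u)$ to erase the unit step inside $\modnum{x}{\suc{y}}$, so that $\modnum{x}{\suc{y}} = \ret{\textit{mod}(x,\suc{y})}$; the monadic $\beta$-law $\bind{\ret{a}}{f} = f(a)$ then collapses the bind to $\gcdclocked(k)(\suc{y}, \textit{mod}(x,\suc{y}))$. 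Since $k = \gcdcostnat(\suc{y}, \textit{mod}(x,\suc{y}))$, this is by definition $\gcdcode(\suc{y}, \textit{mod}(x,\suc{y}))$, which is exactly the right-hand side of \cref{eq:gcd-2}.

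The main obstacle is the clock bookkeeping: verifying that the clock appearing in the recursive call is precisely the one $\gcdcode$ demands at the recursive arguments, which is what makes $\gcdcostnat$ a \emph{safe} instantiation in the sense of \cref{sec:recursion}. This alignment is no accident, since the recursion depth is defined to track the recursion pattern exactly, so unfolding its successor clause supplies a clock $\suc{k}$ whose predecessor $k$ is the depth at the recursive arguments. The one delicacy is that $\gcdcostnat$ is defined by well-founded recursion, so to license the unfolding of its second clause I must appeal to its computation rule, which in turn rests on the standard termination fact $\textit{mod}(x,\suc{y}) < \suc{y}$.
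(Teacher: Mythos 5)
Your proposal is correct and takes essentially the same approach as the paper's (mechanized) proof: both cases are direct equational computations in which the unfolding of $\gcdcostnat$ stays in lockstep with the clocked recursor, the unit cost of $\textit{mod}_{\mathsf{inst}}$ is silenced by $\mathsf{step}/\ExtOpn$, and the bind collapses by the $\beta$-law, so no induction is needed. The clock-alignment point you flag is precisely what the paper's recipe in \cref{sec:recursion} calls a \emph{safe} instantiation.
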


Let $\iota$ be the obvious isomorphism $\tmv{\nat} \cong \Nat$ \footnote{Because
both the cost monoid $\Nat$ and the data type $\nat$ is defined via the Agda
type \AgdaPrimitiveType{$\Nat$}, $\iota$ is the identity in our
implementation.}.  We verify that $\gcdcode$ is bounded by $\iota
\circ \gcdcostnat$:

\begin{theorem}[\Mech{Examples.Gcd.Clocked.gcd≤gcd/cost}]~\label{theorem:gcd/clocked-gcd/cost}
  For all \isof{x,y}{\tmv{\nat}}, we have that \isBounded{\nat}{\gcdcode\,(x,y)}{(\iota \circ \gcdcostnat)(x,y)},
\end{theorem}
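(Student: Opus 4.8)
The plan is to prove a generalized lemma by induction on the clock and then specialize it. Concretely, I would first establish
\[
  \text{for all } k, x, y : \tmv{\nat} \text{ with } \gcdcostnat(x,y) = k, \quad \isBounded{\nat}{\gcdclocked(k)(x,y)}{(\iota \circ \gcdcostnat)(x,y)},
\]
and then recover the theorem by instantiating at $k = \gcdcostnat(x,y)$, for which the hypothesis holds by reflexivity; this suffices because $\gcdcode(x,y) = \gcdclocked(\gcdcostnat(x,y))(x,y)$ by definition. Quantifying universally over $x,y$ in the lemma is what lets the inductive hypothesis fire at a \emph{different} pair than the one we started with, which is essential since Euclid's algorithm recurses on $(\suc{y'}, \textit{mod}(x,\suc{y'}))$.

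The induction is on $k$. Since the second clause of $\gcdcostnat$ always returns a successor, the hypothesis $\gcdcostnat(x,y) = k$ determines whether $y$ is zero: in the base case $k = \zero$ it forces $y = \zero$, so $\gcdclocked(\zero)(x,y) = \ret{x}$ with bound $\zero$, which is exactly the \textsc{Return} refinement. In the inductive case $k = \suc{k'}$ the hypothesis forces $y = \suc{y'}$, and unfolding the successor clause of $\gcdclocked$ through the recursor gives $\gcdclocked(\suc{k'})(x, \suc{y'}) = \bindcst{r}{\modnum{x}{\suc{y'}}}{\gcdclocked(k')(\suc{y'}, r)}$.

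The crux is to normalize this computation using the coherence equations for $\mathsf{step}$. Expanding $\modnum{x}{\suc{y'}} = \mstep{1}{\ret{\textit{mod}(x,\suc{y'})}}$, commuting the step past the bind via the $\mathsf{bind}_{\mathsf{step}}$ law, and applying the $\beta$-law $\bind{\ret{a}}{f} = f(a)$, I reduce the body to $\mstep{1}{\gcdclocked(k')(\suc{y'}, \textit{mod}(x,\suc{y'}))}$; since $\isBounded$ is defined by an equation between computations, it is invariant under this rewriting. Now the recurrence $\gcdcostnat(x,\suc{y'}) = \suc{\gcdcostnat(\suc{y'}, \textit{mod}(x,\suc{y'}))}$ together with $\gcdcostnat(x,\suc{y'}) = \suc{k'}$ gives $\gcdcostnat(\suc{y'}, \textit{mod}(x,\suc{y'})) = k'$ by injectivity of the successor, so the inductive hypothesis applies to $(\suc{y'}, \textit{mod}(x,\suc{y'}))$ and bounds the recursive call by $\iota(\gcdcostnat(\suc{y'}, \textit{mod}(x,\suc{y'})))$. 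Applying the \textsc{Step} refinement with $d = 1$ then yields the bound $1 + \iota(\gcdcostnat(\suc{y'}, \textit{mod}(x,\suc{y'}))) = \iota(\gcdcostnat(x,\suc{y'}))$, as required. Note that because the step is exposed explicitly after normalization, only \textsc{Return} and \textsc{Step} are needed; the \textsc{Bind} refinement can be avoided entirely.

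I expect the main obstacle to be bureaucratic rather than conceptual: carefully sequencing the $\mathsf{step}$/$\mathsf{bind}$ coherence rewrites so that \textsc{Step} and \textsc{Return} apply on the nose, and tracking the $\iota$ coercion so that the successor on the $\nat$-valued recurrence matches the $+1$ contributed by the step. This last point is exactly the statement that $\iota$ is a monoid isomorphism with $\iota(\suc{n}) = 1 + \iota(n)$, and reduces to nothing in the mechanization where $\iota$ is the identity.
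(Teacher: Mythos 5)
Your proof is correct, and it follows the route the paper intends: the paper prints no proof of this theorem (it is deferred to the Agda mechanization), but its recipe in \cref{sec:recursion} prescribes exactly what you do --- generalize over the clock, induct on it, and discharge the resulting $\mathsf{isBounded}$ goals using the refinement lemmas of \cref{fig:refinement-lemmas} together with the $\mathsf{step}$/$\mathsf{bind}$ coherence laws. Two linked deviations are worth recording. First, your side condition $\gcdcostnat(x,y) = k$ is unnecessary: the generalized lemma holds \emph{unconditionally} for every clock $k$, because when the clock is exhausted the program answers immediately at cost $0$, so \textsc{Return} followed by \textsc{Relax} (using $0 \le (\iota \circ \gcdcostnat)(x,y)$ in the cost monoid $\Nat$) covers that case; dropping the hypothesis eliminates your appeals to constructor discrimination and injectivity of $\mathsf{suc}$, at the price of one use of \textsc{Relax}. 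Second, your decision to normalize $\bind{\mstep{1}{\ret{m}}}{f}$ to $\mstep{1}{f(m)}$ and use only \textsc{Return}/\textsc{Step} is not merely cosmetic --- it is forced by your conditional formulation. The syntax-directed route via \textsc{Bind} requires a bound on $\gcdclocked(k')(\suc{y'}, r)$ for \emph{every} $\isof{r}{\tmv{\nat}}$, which your conditional induction hypothesis (applicable only when $\gcdcostnat(\suc{y'},r) = k'$) cannot supply; only the unconditional lemma can. So the two choices come as a package: either drop the equality hypothesis and follow the fully syntax-directed \textsc{Bind}-based recipe, or keep it and normalize the program by hand as you do. Both yield the theorem, and your accounting of the $\iota$ coercion ($\iota(\suc{n}) = 1 + \iota(n)$, trivial in the mechanization) is exactly right.
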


Lastly we prove a refinement for the recurrence \gcdcostnat{} by computing a closed-form bound.
Let $\mathsf{Fib} : \Nat \to \Nat$ be the fibonacci sequence, and let $\mathsf{Fib}^{-1} : \Nat \to \Nat$
be the function characterized by the equation
$\mathsf{Fib}^{-1}(x) = \max{\{i \mid \mathsf{Fib}(i) \le x \}}$.
Note that $\mathsf{Fib}^{-1}$ is well-defined since $\mathsf{Fib}$ is strictly monotonic for $n \ge 2$.
It is well-known that the cost bound $\iota \circ \gcdcostnat{}$ is closely related to $\mathsf{Fib}^{-1}$:

\begin{theorem}[\Mech{Examples.Gcd.Refine.gcd/cost≤gcd/cost/closed}]\label{theorem:gcd/cost-closed}
  For all \isof{x,y}{\tmv{\nat}}, we have that $(\iota \circ \gcdcostnat)(x,y) \le \mathsf{Fib}^{-1}(\iota(x)) + 1$.
\end{theorem}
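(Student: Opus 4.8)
The plan is to recognize this statement as an internal rendering of \emph{Lamé's theorem}: the worst case for the number of modulus operations in Euclid's algorithm is attained on consecutive Fibonacci numbers. Concretely, writing $g = \iota \circ \gcdcostnat$ and silently treating the data-level naturals as ordinary naturals via the identity $\iota$, the statement reduces to a \emph{Fibonacci lower bound on the inputs}: a long run of recursive calls forces the arguments to be large. The key lemma I would isolate is
\[
  \text{for all } x \text{ and all } y \geq 1,\quad \mathsf{Fib}(g(x,y)+1) \le y.
\]
Granting this, the theorem follows quickly: if $y = 0$ or $\textit{mod}(x,y) = 0$ then $g(x,y) \le 1$ and the bound $\mathsf{Fib}^{-1}(\iota(x)) + 1 \ge 1$ holds trivially; otherwise, applying the key lemma to the first recursive call $(y, \textit{mod}(x,y))$ gives $\mathsf{Fib}(g(x,y)) \le \textit{mod}(x,y) \le x$, and a Galois-style property of $\mathsf{Fib}^{-1}$ (below) converts this into $g(x,y) \le \mathsf{Fib}^{-1}(\iota(x)) \le \mathsf{Fib}^{-1}(\iota(x)) + 1$.

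I would prove the key lemma by \emph{well-founded induction} on the second argument $y$, using that $\textit{mod}(x,y) < y$. Fix $y \geq 1$ and set $r = \textit{mod}(x,y)$, $n = g(x,y) = 1 + g(y, r)$. If $r = 0$ then $n = 1$ and $\mathsf{Fib}(2) = 1 \le y$. If $r \geq 1$, the induction hypothesis at the smaller instance $(y, r)$ yields $\mathsf{Fib}(n) \le r$; writing $s = \textit{mod}(y, r)$, I split again: if $s = 0$ then $n = 2$ and $y > r \geq 1$ already gives $y \ge 2 = \mathsf{Fib}(3)$, while if $s \geq 1$ the induction hypothesis at $(r, s)$ gives $\mathsf{Fib}(n-1) \le s$. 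Since $y = q\cdot r + s$ with quotient $q \geq 1$ (because $y > r$), we get $y \ge r + s \ge \mathsf{Fib}(n) + \mathsf{Fib}(n-1) = \mathsf{Fib}(n+1)$ by the defining recurrence of $\mathsf{Fib}$, closing the induction.

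The step I expect to be the main obstacle is exactly this inductive core: the tight bound cannot be propagated one remainder at a time, since knowing only $\mathsf{Fib}(n) \le r$ together with $r < y$ is too weak to recover the extra summand $\mathsf{Fib}(n-1)$. One must look \emph{two} remainders deep — at $s = \textit{mod}(y,r)$ and the super-additive decrease $r_{i-1} \ge r_i + r_{i+1}$ — which is precisely the content of Lamé's theorem and the reason this is genuinely an induction rather than a routine unfolding. Performing well-founded induction on $y$ is what makes both sub-instances $(y,r)$ and $(r,s)$ available at once, so that the two-level argument goes through cleanly.

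Finally I would dispatch the remaining arithmetic facts about $\mathsf{Fib}$ and $\mathsf{Fib}^{-1}$: monotonicity of $\mathsf{Fib}$ (for indices $\ge 1$) and the Galois property $\mathsf{Fib}(m) \le x \Rightarrow m \le \mathsf{Fib}^{-1}(x)$, which holds because $\mathsf{Fib}^{-1}(x)$ is by definition the maximum of $\{\, i \mid \mathsf{Fib}(i) \le x \,\}$, a set that is well-defined and bounded precisely because $\mathsf{Fib}$ is strictly increasing for $n \ge 2$. The one subtlety worth flagging is the boundary behaviour caused by $\mathsf{Fib}(1) = \mathsf{Fib}(2)$: it is the reason I state the clean lower bound on $y$ (as $\mathsf{Fib}(g+1) \le y$) rather than on $x$, and the reason the theorem carries the slack term $+1$ — that slack absorbs exactly the degenerate cases ($y = 0$, or a first remainder of $0$ with $x$ as small as $0$) in which no positive Fibonacci lower bound on $x$ is available.
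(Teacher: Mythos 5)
Your proof is correct: writing $g = \iota \circ \gcdcostnat$, the two-level well-founded induction establishing the key lemma $\mathsf{Fib}(g(x,y)+1) \le y$ for $y \ge 1$ goes through (both appeals to the inductive hypothesis are at instances whose second argument is strictly below $y$, namely $r < y$ and $s < r < y$; the case split $r = 0$, $r \ge 1 \wedge s = 0$, $r \ge 1 \wedge s \ge 1$ is exhaustive), the passage to the theorem via $\textit{mod}(x,y) \le x$ and the Galois property of $\mathsf{Fib}^{-1}$ is sound, and the $+1$ slack indeed absorbs exactly the degenerate case $x = 0$, $y \ge 1$, where $g(x,y) = 1$ but $\mathsf{Fib}^{-1}(0) = 0$. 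The paper states this theorem with no in-text proof at all (it defers entirely to the Agda mechanization \texttt{Examples.Gcd.Refine}), and what you give is the standard Lam\'e-style argument that this statement encodes, so there is no divergence to report.
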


\begin{corollary}[\Mech{Examples.Gcd.Refine.gcd≤gcd/cost/closed}]
  For all \isof{x,y}{\tmv{\nat}}, we have that \isBounded{\nat}{\gcdcode\,(x,y)}{\mathsf{Fib}^{-1}(\iota(x))+1}.
\end{corollary}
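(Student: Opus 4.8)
The plan is to obtain this corollary as a direct composition of the two immediately preceding results by way of the \textsc{Relax} refinement lemma from \cref{fig:refinement-lemmas}. First I would invoke \cref{theorem:gcd/clocked-gcd/cost}, which establishes that the complete algorithm is bounded by its own cost recurrence, namely $\isBounded{\nat}{\gcdcode\,(x,y)}{(\iota \circ \gcdcostnat)(x,y)}$. Next I would appeal to \cref{theorem:gcd/cost-closed}, the closed-form estimate, which supplies the inequality $(\iota \circ \gcdcostnat)(x,y) \le \mathsf{Fib}^{-1}(\iota(x)) + 1$. The premises of \textsc{Relax} are exactly a bound $\isBounded{A}{e}{c}$ together with a dominating inequality $c \le c'$, so feeding these two facts into the rule immediately yields $\isBounded{\nat}{\gcdcode\,(x,y)}{\mathsf{Fib}^{-1}(\iota(x)) + 1}$ for the fixed but arbitrary $x, y$; discharging the universal quantifier over \isof{x,y}{\tmv{\nat}} then completes the argument.

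Since both ingredients are already in hand, there is essentially no obstacle to overcome here: the corollary is a one-line application of \textsc{Relax}, which packages transitivity of the cost bound along the ordering of the cost monoid. The only point that warrants a moment's care is that the inequality furnished by \cref{theorem:gcd/cost-closed} lives in the ordered monoid $(\Nat, +, 0, \le)$ at which $\mathbb{C}$ has been instantiated for this case study, so that the $\le$ witnessing $c \le c'$ is precisely the relation expected by the premise of \textsc{Relax}; under the chosen instantiation this coincidence is definitional and requires no further bookkeeping. The genuine mathematical content — relating $\gcdcode$ to the recurrence $\gcdcostnat$ and then bounding that recurrence by $\mathsf{Fib}^{-1}(\iota(x)) + 1$ — has already been discharged upstream, and this corollary merely records their composite as the final, closed asymptotic bound on Euclid's algorithm.
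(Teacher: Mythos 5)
Your proposal is correct and matches the paper's intent exactly: the corollary is stated without proof precisely because it is the immediate composition, via the \textsc{Relax} refinement lemma, of \cref{theorem:gcd/clocked-gcd/cost} (the bound by the recurrence) with \cref{theorem:gcd/cost-closed} (the closed-form estimate of the recurrence), under the cost-monoid instantiation $(\Nat,+,0,\le)$. Nothing further is needed.
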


\subsection{Amortized analysis}\label{sec:queues}

In addition to the method of recurrence relations, we may formulate more advanced algorithm analysis techniques.
As an example, we illustrate the \calf{} formalization of \emph{amortized analysis}.
First introduced by \citeauthor{tarjan:1985} in the mid-80s, amortized analysis is a
method to establish cost bounds on \emph{sequences} of operations on a data structure that is more
precise than a simple union bound. In this section we present a version of amortized analysis
known as the physicist's method: given a data structure $s$, one may define a measure $\Phi : s \to \mathbb{Z}_+$
that represents the amount of \emph{potential} that can be used to do work. The crux of the analysis
is to rig $\Phi$ so that expensive operations are associated with large decreases
in potential; because $\Phi$ is nonnegative, this ensures that expensive operations cannot occur
too often in a given sequence, \ie their cost is \emph{amortized}.

\paragraph{Batched queues}
To illustrate the physicist's method, we analyze the amortized complexity of a queue implementation known as \emph{batched queues}
\citep{gries:1987, hood-melville:1981, burton:1982, okasaki:1998}. A batched queue is
a pair of lists $(f,b)$ coupled with the invariant that the logical order of the queue is $f :: \textit{rev}(b)$.
The \calf{} implementation of the batched queue is presented in \cref{fig:batched-queues}
\footnote{Axiomatization of the unit type and sum type can be found in \cref{fig:calf-types} in \cref{sec:calf-def}.}.
For simplicity, we only consider elements of type $\nat$.

\paragraph{Specializing the cost structure}
For amortized analysis of batched queues, we instantiate the cost monoid $\mathbb{C}$ at the ordered monoid
$(\mathbb{N}, +, 0, \le)$ whose structure as a semiring and compatibility with the integers $\mathbb{Z}$
are required to define and reason about the potential function.

\begin{figure}
  \begin{minipage}[t]{0.3\textwidth}
    \begin{align*}
      Q &\coloneqq \calist{1}{\nat} \times \calist{1}{\nat}
    \end{align*}
  \end{minipage}
  \begin{minipage}[t]{0.3\textwidth}
    \begin{align*}
      \textit{enq} &: \tmv{Q} \to \tmv{\nat} \to \tmv{Q}\\
      \enq{(f,b)}{x} &= (f, \consex{x}{b})
    \end{align*}
  \end{minipage}
  \begin{minipage}{0.3\textwidth}
    \begin{align*}
      \textit{deq}_0 &: \tmv{\listty{\nat}} \to \tmc{\F{\sumty{\unit}{Q \times \nat}}}\\
      \deqemp{b} &= \bindcst{l}{\rev{b}}{\listreccst{l}{\dots}{\ret{\inl{\triv}}}{\lambda a,l',\_.\, \ret{\inr{(l', \nilex), a}}}}
    \end{align*}
  \end{minipage}
  \begin{minipage}{0.3\textwidth}
    \begin{align*}
      \textit{deq} &: \tmv{Q} \to \tmc{\F{\unit + (Q \times \nat)}}\\
      \deq{(f,b)} &= \listreccst{f}{\lambda \_.\, \sumty{\unit}{Q \times \nat}}{\deqemp{b}}{\lambda a,f',\_.\, \ret{\inr{(f',b), a}}}
    \end{align*}
  \end{minipage}
  \caption{Batched queues in \calf{}.}
  \label{fig:batched-queues}
\end{figure}

\paragraph{Cost model}
A common cost model in this setting is the number of list iterations. We encode this cost model by axiomatizing
a type of cost-aware lists $\mathsf{L} : \Alert{\mathbb{C}} \to \tpv \to \tpv$
\footnote{Full definition can be found in \cref{fig:calf-types-1} in \cref{sec:calf-def}. }
that is parameterized by the amount to charge for each recursive call. The type $\mathsf{L}$ has
the standard constructors $\mathsf{nil}$ and $\mathsf{cons}$; the only new rule is the
destruction of cons nodes, which induces the annotated amount of cost:
\begin{align*}
  \mathsf{rec}/\mathsf{cons} &: \impl{c,A,a,X,e_0,e_1} (\isof{l}{\tmv{\calist{c}{A}}}) \to\\
  &\listrec{\consex{a}{l}}{X}{e_0}{e_1} = \Alert{\mathsf{step}^c}(e_1(a)(l)(\listrec{l}{X}{e_0}{e_1}))
\end{align*}
To charge unit cost per iteration, we define the type of batched queues as
$Q \coloneqq \calist{1}{\nat} \times \calist{1}{\nat}$.
Note that the standard list type is recovered as $\listty{A} \coloneqq \calist{0}{A}$.
We write $\len{\--} : \impl{c} \calist{c}{A} \to \Nat$ for the length function on lists.

\paragraph{Upper bounding individual queue operations}
We obtain cost bounds on the individual operations using similar techniques as in \cref{sec:gcd}:

\begin{theorem}[\Mech{Examples.Queue.enq≤enq/cost}, \Mech{Examples.Queue.deq≤deq/cost}]\label{thm:cost-ops}
  For any \\ queue $q$ and element $x$, we have \isBounded{Q}{\enq{q}{x}}{0}.
  Moreover, for any queue $q = (f,b)$, we have \isBounded{1 + Q \times \nat}{\deq{q}}{1 + \len{b}}.
\end{theorem}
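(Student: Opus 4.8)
The plan is to treat the two operations separately, in each case peeling the computation apart with the syntax-directed lemmas \textsc{Return}, \textsc{Step}, \textsc{Bind}, and \textsc{Relax} of \cref{fig:refinement-lemmas}.

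The bound for $\textit{enq}$ is immediate. Since $\enq{q}{x} = (f, \consex{x}{b})$ merely builds a cons cell, and by $\mathsf{rec}/\mathsf{cons}$ it is only the \emph{destruction} of a cons node that charges, this is a pure value; regarded as the computation $\ret{(f,\consex{x}{b})}$ it is bounded by $0$ by one use of \textsc{Return}.

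For $\textit{deq}$ I would case-split on the front list $f$, since $\deq{(f,b)}$ is defined by the list recursor on $f$ (\cref{fig:batched-queues}). When $f = \consex{a}{f'}$, the equation $\mathsf{rec}/\mathsf{cons}$ for $\calist{1}{\nat}$ reduces $\deq{(f,b)}$ to $\mstep{1}{\ret{\inr{((f',b),a)}}}$; chaining \textsc{Return} and \textsc{Step} gives the bound $1$, and \textsc{Relax} (via $1 \le 1 + \len{b}$) yields $1 + \len{b}$. When $f = \nilex$, we have $\deq{(\nilex,b)} = \deqemp{b}$, where all the real work lives. Here $\deqemp{b} = \bind{\rev{b}}{k}$ with $k$ the recursor that returns $\ret{\inl{\triv}}$ on the empty reversed list and $\ret{\inr{((l',\nilex),a)}}$ otherwise. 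I would bound $\rev{b}$ by $\len{b}$ --- a reversal lemma proved by induction on $b$, charging one step per element --- and bound the continuation $k(l)$ by at most $1$ (a single head read: \textsc{Return} on the nil branch, \textsc{Return} then \textsc{Step} on the cons branch). Feeding these into \textsc{Bind} and then \textsc{Relax} delivers $1 + \len{b}$.

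The main obstacle is this nil case, and specifically reconciling the \emph{computation-valued} bound that \textsc{Bind} produces, namely $\bind{\rev{b}}{\lambda l.\, \len{b} + d(l)}$ with $d(l) \le 1$, with the \emph{numeric} target $1 + \len{b}$ required by \textsc{Relax}. The delicate step is to use the $\mathsf{step}$ coherence equations (\cref{sec:cost-as-an-effect}) together with the fact that $\len{\rev{b}} = \len{b}$ to collapse this bind into a single $\mstep{c}{\ret{c}}$ with $c \le 1 + \len{b}$, so that \textsc{Relax} applies. The subsidiary reversal lemma, though routine, likewise depends on pinning down the cost annotation of the intermediate accumulator list so that its traversal is charged exactly $\len{b}$.
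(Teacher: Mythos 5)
Your proposal is correct and takes essentially the same route as the paper's proof: the paper gives no inline argument for this theorem, deferring to the mechanization, which (as advertised in the gcd recipe) proceeds exactly by your decomposition — case analysis on the front list, repeated application of the \textsc{Return}, \textsc{Step}, \textsc{Bind}, and \textsc{Relax} lemmas, and a subsidiary linear bound on reversal of the cost-annotated back list. The obstacle you flag — reconciling the computation-valued bound produced by \textsc{Bind} with the numeric target $1 + \len{b}$ — is genuine and is discharged in the mechanization just as you suggest, via a constant-bound corollary of \textsc{Bind} obtained from the $\mathsf{hasCost}$ witness for $\rev{b}$ together with the $\mathsf{step}$ coherence and $\mathsf{bind}$ laws (note that the identity $\len{\rev{b}} = \len{b}$ you invoke is not actually needed once the continuation's bound is relaxed to the constant $1$).
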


\paragraph{Serializing the queue operations}
To formalize the notion of a sequence of operations, we define a serialization of the queue operations in \cref{fig:op-serialize}.
Here, $\op$ denotes the type of queue operations, which is either an enqueue of an element or a dequeue.
Given a serialized operation $o$ and a queue $q$, \operate{o}{q} is the interpretation of $o$ on $q$.
By \cref{thm:cost-ops} the resulting computation is bounded by the
cost of the corresponding operation \cost{q}{o}, defined in \cref{fig:op-serialize}:
\begin{corollary}[\Mech{Examples.Queue.op≤op/cost}]
  Given an operation $o$ and a queue $q$, we have \isBounded{Q}{\operate{o}{q}}{\cost{q}{o}}.
\end{corollary}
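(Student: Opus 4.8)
The plan is to prove the corollary by case analysis on the operation $o$, reducing each case to the corresponding bound established in \cref{thm:cost-ops}. Since the type $\op$ classifies either an enqueue of some element or a dequeue, unfolding the definitions of \operate and \cost from \cref{fig:op-serialize} splits the goal into exactly two subgoals, one per constructor of $\op$, and in each we are left to match the cost of the interpreted operation against \cost{q}{o}.

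First I would dispatch the enqueue case, which is immediate. Here \operate{o}{q} computes (a return of) \enq{q}{x} for the element $x$ carried by $o$, and the associated cost \cost{q}{o} is $0$; the first clause of \cref{thm:cost-ops} gives precisely \isBounded{Q}{\enq{q}{x}}{0}, so this subgoal closes with the two bounds agreeing on the nose. The dequeue case carries the small amount of real work. Here \operate{o}{q} runs \deq{q} and then binds a cost-free continuation that inspects the resulting value of type $1 + Q \times \nat$ and returns the updated queue (returning, say, the empty queue in the underflow branch). The second clause of \cref{thm:cost-ops} supplies \isBounded{1 + Q \times \nat}{\deq{q}}{1 + \len{b}} for $q = (f,b)$, while each branch of the continuation is a pure return and hence bounded by $0$ via the \textsc{Return} lemma. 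Composing these with the \textsc{Bind} refinement lemma from \cref{fig:refinement-lemmas} combines the dequeue cost $1 + \len{b}$ with the vanishing continuation cost.

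I expect the only friction to be bookkeeping rather than mathematics: reconciling the bound synthesized by \textsc{Bind} with the closed cost expression \cost{q}{o} recorded in \cref{fig:op-serialize}. If \cost{q}{o} is defined to track the \textsc{Bind} output directly, the two agree definitionally; otherwise the residual additive identity is absorbed using the coherence equations for $\mathsf{step}$ together with a final \textsc{Relax} step. Either way the argument is entirely mechanical once \operate and \cost are unfolded, in keeping with the uniformly routine character of the per-operation bounds underlying \cref{thm:cost-ops}.
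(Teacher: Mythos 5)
Your proposal is correct and follows the same route the paper intends: the corollary is an immediate consequence of \cref{thm:cost-ops}, obtained by case analysis on the operation, with the enqueue case closing on the nose and the dequeue case composing the $1 + \len{b}$ bound on \textit{deq} with the cost-free case-analysis continuation via the \textsc{Return} and \textsc{Bind} refinement lemmas (plus routine \textsc{Relax}/coherence bookkeeping). No gaps to report.
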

The function $\operateseq{\--}{\--}$ lifts the interpretation to sequences of operations by threading the
given queue through the list of operations. It is bounded by $\textit{cost}_{\mathsf{seq}}$:
\begin{lemma}[\Mech{Examples.Queue.op/seq≤op/seq/cost}]\label{lemma:seq-cost}
  Given a list of operations $l$ and a queue $q$, we have \isBounded{Q}{\operateseq{l}{q}}{\costseq{l}{q}}.
\end{lemma}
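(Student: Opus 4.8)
The plan is to proceed by structural induction on the list of operations $l$, mirroring the recursive definition of $\operateseq{\--}{\--}$ and assembling the bound from the per-operation corollary \Mech{Examples.Queue.op≤op/cost} together with the syntax-directed refinement rules of \cref{fig:refinement-lemmas}. Because $\operateseq{\--}{\--}$ threads the queue through the list---returning the queue unchanged on the empty list and, on a cons cell $o :: l'$, sequencing the single operation $\operate{o}{q}$ with the recursive call on the resulting queue via $\mathsf{bind}$---the induction follows the same shape, and I expect $\costseq{\--}{\--}$ to be defined precisely to match the output of the \textsc{Bind} refinement at each step.

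In the base case $l = \mathsf{nil}$ the interpretation reduces to $\ret{q}$, so the \textsc{Return} refinement delivers $\isBounded{Q}{\ret{q}}{0}$, and $\costseq{\mathsf{nil}}{q}$ is $0$ by definition. In the inductive case $l = o :: l'$ I would rewrite $\operateseq{o :: l'}{q}$ as $\bind{\operate{o}{q}}{\lambda q'.\, \operateseq{l'}{q'}}$ and apply the \textsc{Bind} refinement: the first premise is discharged by the corollary \Mech{Examples.Queue.op≤op/cost}, which bounds $\operate{o}{q}$ by $\cost{q}{o}$, and the second premise is discharged by the induction hypothesis, which bounds $\operateseq{l'}{q'}$ by $\costseq{l'}{q'}$ for every resulting queue $q'$. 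The rule then yields a bound of the shape $\bind{\operate{o}{q}}{\lambda q'.\, \cost{q}{o} + \costseq{l'}{q'}}$, which I intend to be exactly the defining clause of $\costseq{o :: l'}{q}$.

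The main subtlety is that the cost bound is itself a computation threading the intermediate queue, rather than a plain element of $\mathbb{C}$: the queue $q'$ on which the tail of the sequence runs is not a value in hand but is bound inside the continuation of the $\mathsf{bind}$. Accordingly, $\costseq{\--}{\--}$ cannot be a first-order recurrence but must itself be phrased as a $\mathsf{bind}$ over $\operate{o}{q}$, so that the induction hypothesis can be instantiated at the queue produced by the first operation. Provided $\costseq{\--}{\--}$ is set up with this clause, the inductive step closes definitionally; should the chosen presentation of $\costseq{\--}{\--}$ normalize the accumulated cost differently, I would reconcile the two forms using the \textsc{Relax} refinement. Beyond arranging these two recurrences compatibly I anticipate no real difficulty, since all the quantitative content has already been discharged in the per-operation bounds of \cref{thm:cost-ops} and its corollary.
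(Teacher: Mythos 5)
Your proposal is correct and is essentially the paper's own argument (the proof itself is deferred to the mechanization, but the development in \cref{sec:queues} sets it up exactly this way): induction on $l$, the \textsc{Return} refinement in the nil case, and in the cons case the \textsc{Bind} refinement with \Mech{Examples.Queue.op≤op/cost} discharging the head and the induction hypothesis discharging the tail --- and you correctly anticipated the key point that $\costseq{\--}{\--}$ must itself be phrased as a $\mathsf{bind}$ over $\operate{o}{q}$ threading the intermediate queue, which is precisely how it is defined in \cref{fig:op-serialize}. The only cosmetic difference is that the paper's definition places the constant $\cost{o}{q}$ outside that bind rather than inside the continuation, a reconciliation you already provided for.
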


\begin{small}

\begin{figure}
  \begin{minipage}[t]{0.25\textwidth}
    \begin{align*}
  \op &: \tpv\\
  \op &= \sumty{\nat}{\unit}\\
  \openq{x} &= \inl{x}\\
  \opdeq &= \inr{\triv}
    \end{align*}
  \end{minipage}
  \begin{minipage}[t]{0.25\textwidth}
    \begin{align*}
    \textit{cost} &: \tmv{\op} \to \tmv{Q} \to \mathbb{Z}\\
    \cost{\openq{x}}{q} &= 0\\
    \cost{\opdeq}{(f,b)} &= 1 + \len{b}
  \end{align*}
  \end{minipage}
  \begin{minipage}{0.3\textwidth}
   \begin{align*}
  \operate{\--}{\--} &: \tmc{\op \to Q \to \F{Q}}\\
  \operate{\openq{x}}{q} &= \enq{q}{x}\\
  \operate{\opdeq}{q} &= s \leftarrow \deq{q}; \\
  \mathsf{case}(s)\{\; &\inl{\triv} \hookrightarrow \ret{(\nilex,\nilex)}\\
  \mid\; &\inr{(q,x)} \hookrightarrow \ret{q} \;\}
  \end{align*}
  \end{minipage}
  \begin{minipage}{0.3\textwidth}
    \begin{align*}
  \operateseq{\--}{\--} &: \tmc{\listty{\op} \to Q \to \F{Q}}\\
  \operateseq{\nilex}{q} &= \ret{q}\\
  \operateseq{\consex{o}{\textit{os}}}{q} &= \bindcst{q'}{\operate{o}{q}}{f(q')}
\end{align*}
  \end{minipage}
  \begin{minipage}{0.3\textwidth}
    \begin{align*}
    \textit{cost}_{\mathsf{seq}} &: \tmv{\listty{\op}} \to \tmv{Q} \to \mathbb{Z}\\
    \costseq{\nilex}{q} &= 0\\
    \costseq{\consex{o}{\textit{os}}}{q} &= \cost{o}{q} + (\bindcst{q'}{\operate{o}{q}}{\costseq{os}{q'}})
  \end{align*}
  \end{minipage}
  \caption{Serialization of queue operations.}
  \label{fig:op-serialize}
\end{figure}
\end{small}

\paragraph{Amortized analysis of batched queues}
We are now in a position to analyze the amortized cost of the queue operations.
We define the potential function on queue states:
\begin{align*}
  \Phi &: \tmv{Q} \to \mathbb{N}\\
  \Phi(f,b) &= \len{f} + 2\cdot \len{b}
\end{align*}

Traditionally an operation's amortized cost is defined as the maximum value of the sum of the induced cost
and the difference in the potential over a starting state; we represent this relationally:
\begin{align*}
  \mathsf{hasCost}_{\mathsf{amortized}} &: \tmv{\op} \to \Nat \to \jdg\\
  \acost{o}{k} &= (\isof{q}{Q}) \to \left(\cost{o}{q} +_{\mathbb{Z}} \Phi(\operate{o}{q}) -_{\mathbb{Z}} \Phi(q)\right) \le_{\mathbb{Z}} k
\end{align*}

Note that because amortized cost has to be defined using \emph{non-truncated} subtraction,
terms of type $\Nat$ appearing in the relation $\mathsf{hasCost}_{\mathsf{amortized}}$
are all implicitly lifted to the integers $\mathbb{Z}$.
We verify that the amortized cost of enqueue is 2, while the amortized cost of dequeue is 0:

\begin{theorem}[\Mech{Examples.Queue.enq/acost}, \Mech{Examples.Queue.deq/acost}]
  We have that\\ \acost{\openq{x}}{2} for all \isof{x}{\tmv{\nat}} and that \acost{\opdeq}{0}.
\end{theorem}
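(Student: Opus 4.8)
The plan is to unfold $\acost$ in each case and reduce the goal to an inequality over $\mathbb{Z}$ that can be settled by tracking how each operation changes the two list lengths recorded by the potential $\Phi(f,b) = \len{f} + 2\cdot\len{b}$. Since the queue operations are deterministic once cost is ignored, $\Phi(\operate{o}{q})$ is simply $\Phi$ of the resulting queue, so for each operation it suffices to identify the output queue, read off the cost actually charged along that branch, and verify $\cost{o}{q} +_{\mathbb{Z}} \Phi(\operate{o}{q}) -_{\mathbb{Z}} \Phi(q) \le_{\mathbb{Z}} k$. I would proceed by case analysis following the defining branches of $\enq$ and $\deq$.

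The enqueue case is a direct computation: for $q = (f,b)$ we have $\enq{q}{x} = (f, \consex{x}{b})$, so $\len{b}$ increases by one and $\Phi$ by exactly two, while no step is charged; the left-hand side is therefore $0 +_{\mathbb{Z}} 2 = 2$, giving $\acost{\openq{x}}{2}$ with equality. For dequeue I split on the front list $f$. When $f = \consex{a}{f'}$, the computation returns $(f', b)$ after a single $\mathsf{step}$ incurred in destructing the head of $f$; here $\len{f}$ drops by one, so $\Phi$ drops by one and cancels that unit charge, yielding amortized cost $0$. When $f = \nilex$ and $b = \nilex$ the queue is empty, the result is again $(\nilex,\nilex)$, no step is incurred, and both the charge and the potential difference vanish.

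The interesting branch is $f = \nilex$ with $b$ nonempty, where the reversal $\rev{b}$ fires: the front is repopulated to $(l', \nilex)$ with $l'$ the tail of $\rev{b}$, so $\len{l'} = \len{b} - 1$ and the back is emptied. The potential thus falls from $2\cdot\len{b}$ to $\len{b} - 1$, i.e. by $\len{b} + 1$, which exactly offsets the $\len{b}$ steps charged by the reversal together with the final destruct, again giving amortized cost $0$. Assembling the branches establishes $\acost{\opdeq}{0}$.

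I expect the reversal branch to be the main obstacle. There one must compute both the cost and the output of $\rev{b}$ by induction along the cost-aware list recursor, appealing to $\mathsf{rec}/\mathsf{cons}$ to account for the $\len{b}$ charged steps, and use the auxiliary identity $\len{\rev{b}} = \len{b}$ to relate the new front length to $\len{b}$. The residual difficulty is bookkeeping: because $\mathsf{hasCost}_{\mathsf{amortized}}$ is phrased with non-truncated subtraction, the $\Nat$-valued lengths and charges must be lifted into $\mathbb{Z}$ before the arithmetic is discharged, and the coercions must be kept coherent. The enqueue case and the nonempty-front dequeue case, by contrast, are routine once the definitions are unfolded.
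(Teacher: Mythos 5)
Your branch-by-branch computation is exactly the intended argument: the paper supplies no inline proof for this theorem (it defers entirely to the Agda mechanization), and the proof can only be this case analysis --- enqueue raises $\Phi$ by $2$ at zero charge, and in each of the three dequeue branches the drop in potential exactly offsets the steps incurred, so every inequality in fact holds with equality.

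There is, however, one point you must make explicit, because your proof is valid only under a particular reading of the definitions you quote. The relation $\mathsf{hasCost}_{\mathsf{amortized}}$ is displayed with the function $\textit{cost}$ of \cref{fig:op-serialize}, and that figure defines $\cost{\opdeq}{(f,b)} = 1 + \len{b}$ \emph{uniformly}, not per branch. You instead ``read off the cost actually charged along that branch,'' i.e.\ $1$ when the front is nonempty, $\len{b}+1$ when the front is empty and the back is not, and $0$ when both are empty. These two readings differ, and the difference is load-bearing: with the displayed uniform cost, the nonempty-front case yields $(1+\len{b}) +_{\mathbb{Z}} \Phi(f',b) -_{\mathbb{Z}} \Phi(f,b) = \len{b}$, which is not $\le_{\mathbb{Z}} 0$ when $b$ is nonempty, and the empty-empty case yields $1$; so the theorem as literally displayed would be false. (The uniform reading also falsifies the downstream telescoping bound $\costseq{l}{q} \le \Phi(q) + 2\len{l}$: five dequeues starting from $\len{f}=\len{b}=5$ would have declared cost $30$ against a bound of $25$.) The uniform $1+\len{b}$ is sound as the \emph{upper bound} of \cref{thm:cost-ops}, but the amortized relation requires the exact, case-defined cost --- which is what your computation uses, and what the mechanization of \Mech{Examples.Queue.deq/acost} must use for the statement to be provable. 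Stating this explicitly, i.e.\ defining the cost entering $\mathsf{hasCost}_{\mathsf{amortized}}$ by cases on the front (and on the back when the front is empty) before invoking it, closes the only real gap between your write-up and the definitions it cites.
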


Using the amortized costs, we can bound the cost of a sequence of queue operations using a standard telescoping series:
\begin{theorem}[\Mech{Examples.Queue.op/seq/cost≤ϕ₀+2*|l|}]
  Given an initial queue \isof{q}{\tmv{Q}} and a list of operations \isof{l}{\tmv{\listty{\op}}},
  we have $\costseq{l}{q} \le \Phi(q) + 2 \len{l}$.
\end{theorem}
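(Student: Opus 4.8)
The plan is to proceed by induction on the list of operations $l$, exploiting the fact that the per-operation amortized bounds established just above telescope against the intermediate queue states. In the base case $l = \nilex$, the defining equation for $\textit{cost}_{\mathsf{seq}}$ gives $\costseq{\nilex}{q} = 0$, and since $\Phi$ takes values in $\mathbb{N}$ we have $0 \le \Phi(q) = \Phi(q) + 2\len{\nilex}$, which closes this case immediately. Note that this base case is precisely where the nonnegativity of $\Phi$ is used: it discharges the ``final potential'' term, exactly as in the informal telescoping argument where one drops $-\Phi_{\mathrm{final}} \le 0$.

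For the inductive step $l = \consex{o}{\textit{os}}$, I would unfold the defining equation $\costseq{\consex{o}{\textit{os}}}{q} = \cost{o}{q} + (\bindcst{q'}{\operate{o}{q}}{\costseq{\textit{os}}{q'}})$, writing $q'$ for the queue produced by $\operate{o}{q}$. The essential algebraic move is an exact cancellation of the intermediate potential. First, both $\acost{\openq{x}}{2}$ and $\acost{\opdeq}{0}$ yield an amortized cost of at most $2$ for an arbitrary operation $o$, so I would rewrite this bound as $\cost{o}{q} \le 2 - \Phi(q') + \Phi(q)$. Combining this with the inductive hypothesis $\costseq{\textit{os}}{q'} \le \Phi(q') + 2\len{\textit{os}}$, the two occurrences of $\Phi(q')$ cancel, leaving $\costseq{\consex{o}{\textit{os}}}{q} \le 2 + \Phi(q) + 2\len{\textit{os}} = \Phi(q) + 2\len{\consex{o}{\textit{os}}}$, as required. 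Observe that no nonnegativity is needed here, since the cancellation is exact.

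The main obstacle is more bookkeeping than genuine difficulty: it is the reconciliation of the computation structure in the recurrence with the value-level intermediate queue. Both $\costseq$ and the amortized-cost relation $\acost{o}{k}$ refer to the queue resulting from $\operate{o}{q}$ --- the former through a bind, the latter through the argument $\Phi(\operate{o}{q})$ --- and one must verify that these references denote the \emph{same} intermediate state $q'$ so that the $\Phi(q')$ terms line up for the cancellation. This is exactly where the cost-free nature of the recurrences is essential, since it lets us treat $\operate{o}{q}$ extensionally as yielding a genuine queue value rather than as a stepping computation that binds its result. A secondary point, easily dispatched but worth care, is the uniformization of the two distinct amortized bounds into the single bound $\le 2$, together with the sign discipline imposed by the non-truncated subtraction: the entire inequality must be carried out over $\mathbb{Z}$, with $\Phi$ and $\len{\--}$ implicitly lifted from $\mathbb{N}$, so that the rewriting of $\cost{o}{q}$ in terms of the potential difference is type-correct.
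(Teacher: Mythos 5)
Your proposal is correct and takes essentially the same route as the paper: the paper establishes this bound by combining the amortized cost theorems for enqueue and dequeue (\Mech{Examples.Queue.enq/acost}, \Mech{Examples.Queue.deq/acost}) with a standard telescoping argument, which is precisely what your list induction with exact cancellation of the intermediate potential $\Phi(q')$ formalizes. Your observation that nonnegativity of $\Phi$ is needed only in the base case corresponds exactly to dropping the final-potential term in the informal telescope, so there is nothing to add.
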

Combining this inequality with \cref{lemma:seq-cost}, we obtain an amortized bound on a sequence of operations on the empty queue:
\begin{corollary}[\Mech{Examples.Queue.op/seq≤2*|l|}]
  Given a list of operations $l$, we have that\\
  \isBounded{Q}{\operateseq{l}{(\nilex,\nilex)}}{2\len{l}}.
\end{corollary}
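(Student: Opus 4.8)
The plan is to obtain the bound by chaining the two immediately preceding results through the \textsc{Relax} refinement of \cref{fig:refinement-lemmas}. First I would instantiate \cref{lemma:seq-cost} at the empty queue, taking $q = (\nilex, \nilex)$, to get $\isBounded{Q}{\operateseq{l}{(\nilex,\nilex)}}{\costseq{l}{(\nilex,\nilex)}}$. This reduces the corollary to showing that the symbolic recurrence $\costseq{l}{(\nilex,\nilex)}$ is dominated by the desired closed bound $2\len{l}$, since \textsc{Relax} lets us weaken a bound along any inequality.

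For that dominating inequality I would appeal to the telescoping bound established immediately above, specialized to $q = (\nilex,\nilex)$, which reads $\costseq{l}{(\nilex,\nilex)} \le \Phi(\nilex,\nilex) + 2\len{l}$. It then remains only to evaluate the initial potential: unfolding $\Phi(f,b) = \len{f} + 2\cdot\len{b}$ at the two empty lists gives $\Phi(\nilex,\nilex) = \len{\nilex} + 2\cdot\len{\nilex} = 0$, so the right-hand side collapses to exactly $2\len{l}$. Hence $\costseq{l}{(\nilex,\nilex)} \le 2\len{l}$.

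Finally I would feed the bound from \cref{lemma:seq-cost} and this inequality into \textsc{Relax}, yielding $\isBounded{Q}{\operateseq{l}{(\nilex,\nilex)}}{2\len{l}}$, as required. I do not expect any real obstacle: the substantive mathematics --- the amortized accounting and the telescoping of successive potential differences --- has already been discharged in the theorem above, and all that is left is the trivial computation that the empty queue carries zero potential. The one point worth a moment's care is the arithmetic setting: the serialized cost recurrence is valued in $\mathbb{Z}$ because amortized cost uses non-truncated subtraction, so I would check that the inequality $\costseq{l}{(\nilex,\nilex)} \le 2\len{l}$ is transported to the order on the cost monoid $\mathbb{C} = \Nat$ in the form \textsc{Relax} expects.
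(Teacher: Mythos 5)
Your proposal is correct and matches the paper's own argument: the paper obtains this corollary precisely by combining \cref{lemma:seq-cost} instantiated at the empty queue with the telescoping bound $\costseq{l}{q} \le \Phi(q) + 2\len{l}$, using the fact that $\Phi(\nilex,\nilex) = 0$ and weakening via the \textsc{Relax} refinement. Your closing remark about transporting the $\mathbb{Z}$-valued inequality into the form \textsc{Relax} expects is a reasonable implementation-level check, but it does not alter the approach.
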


\section{Metatheory of \calf{}}\label{sec:metatheory}

In this section we substantiate the theory of \calf{} by means of a model construction
and prove the following theorems:
\begin{enumerate}
  \item \emph{\textbf{Nondegeneracy.}}
  The cost effect $\mathsf{step}$ is not degenerate, i.e $\nvdash \mstep{1}{e} = \ret{e}$ for any \isof{e}{\F{A}}.
  \item \emph{\textbf{Validity of cost bounds}}.
  We have that $\vdash \Op{(m \le n)}$ if and only if $\vdash m \le n$ for all \isof{m,n}{\mathbb{\Nat}}.
\end{enumerate}

\paragraph{Models of \calf{}}
Recall from \cref{sec:calf} that we define \calf{} as the \emph{free} lccc $\CCat_{\calf{}}$
over the signature $\Sigma_{\calf}$ presented in \cref{fig:calf}.
Consequently one may prove metatheorems about \calf{} using the universal property of freely generated categories.
In the context of functorial semantics \citep{lawvere:thesis},
the universal property states that one may define a model $\CCat_{\calf} \to \ECat$ by simply specifying the image of the
constants of $\Sigma_{\calf{}}$ in $\ECat$ \footnote{An analogous situation arises when considering homomorphisms out of a free group: \emph{any} function on the generators determines a homomorphism.}.
The data of this specification is encapsulated by the notion of an \emph{algebra} for a signature:

\begin{definition}[Algebra for a signature in the logical framework]\label{def:algebra}
  Let $\ECat$ be a category that has a universe $\mathcal{U}$ closed under dependent products, dependent sums, and extensional equality.
  Given a signature $\Sigma$ in the logical framework, we can define a type $\textbf{Alg}_{\mathcal{U}}(\Sigma)$ of $\mathcal{U}$-small algebras for $\Sigma$ in
  $\ECat$ by interpreting \jdg{} as $\mathcal{U}$ and taking the
  dependent sum over all the constants declared in $\Sigma$.
\end{definition}

Thus given a sufficiently structured category $\ECat$ in the sense above,
we can define a model of \calf{} by exhibiting an algebra \isof{\mathcal{A}}{\textbf{Alg}_{\mathcal{U}}(\Sigma_{\calf})}
in some universe $\mathcal{U}$ of $\ECat$. In fact we can define the intended model of \calf{} in \emph{any} given
topos $\TopIdent{X}$ with a distinguished subterminal object representing the phase separation of intension and extension.
To obtain an external view, we specialize the construction to the presheaf topos over the interval category $\{0 \to 1\}$, \ie
the category of families of sets $\textbf{Set}^\to$, which suggests the interpretation of \calf{} types as phase separated \emph{families}.

\paragraph{Language of phase distinctions}
Inspired by recent work emphasizing the
role of phase distinctions in the analysis of metatheoretic properties~\citep{sterling-harper:2021,sterling-angiuli:2021},
we isolate a pair of complementary modalities $\Op, \Cl$ that models the phase distinction of
extension and intension in \calf{}. Using the language of phase distinctions,
we give a succinct definition of our model that avoids the explicit but more cumbersome presentation involving families.

\subsection{Counting model of \calf{}}

We exhibit an algebra $\mathcal{A}$ for $\Sigma_{\calf}$ in any given topos $\TopIdent{X}$ equipped with a distinguished proposition $\isof{\ExtOpn}{\Omega}$.\footnote{For the limited topos theory we require in this section, we employ the notations of \citet{anel-joyal:2021}.}
Consequently we have at our disposal a rich internal language in the form of
an extensional dependent type theory that includes (in particular)
a hierarchy of universes $\mathcal{U}_{\alpha}$, inductive types, and a universe of proof-irrelevant propositions $\Omega$.
The role of the proposition $\ExtOpn$ is to provide a semantic counterpart to the \calf{} proposition $\ExtOpn$.

Letting $\alpha < \beta$ be universe levels, we then define an algebra \isof{\mathcal{A}}{\textbf{Alg}_{\mathcal{U}_\beta}(\Sigma_{\calf})}
that constitutes the standard Eilenberg–Moore model of CBPV in which computation types are interpreted as
\emph{algebras}~\footnote{Not to be confused with \cref{def:algebra}.} for a given monad. In the case of \calf{}
we dub this interpretation the \emph{counting model}, so named because the interpretation of the computation type $\F{A}$
is the free algebra of a particular writer monad whose carrier classifies elements of $A$ paired
with a step count.
Because many parts of the interpretation is standard, we highlight only the constructions pertaining to \calf{} per se.

\subsubsection{Phase distinction}
As mentioned above, we define the extensional phase $\ExtOpn$ as the distinguished proposition $\ExtOpn$.
By definition, the extensional modality is rendered as the function space in the internal language of $\TopIdent{X}$, i.e
$\Op{\--} \coloneqq \ExtOpn \to \--$.
The intensional modality $\Cl{\--}$ is defined as the pushout $A \sqcup_{A \times \ExtOpn} \ExtOpn$
of the projection maps of $A \times \ExtOpn$.

\begin{proposition}[\citet{rijke-shulman-spitters:2017}]
  Both $\Op,\Cl$ are idempotent, left exact and monadic.
\end{proposition}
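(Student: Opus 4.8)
The plan is to instantiate the general theory of open and closed modalities associated with a (mere) proposition, developed by \citet{rijke-shulman-spitters:2017}, at our distinguished proposition $\ExtOpn$. Since $\Op$ and $\Cl$ are defined to be precisely the open and closed modalities for $\ExtOpn$ --- namely $\Op(A) = (\ExtOpn \to A)$ and $\Cl(A) = A \sqcup_{A \times \ExtOpn} \ExtOpn$ --- the claim reduces to verifying that each construction furnishes a reflective subuniverse that is additionally closed under dependent sums (hence a modality, and in particular an idempotent monad) and left exact. So the proof is mostly a matter of identifying the modal types and checking the three properties separately for $\Op$ and $\Cl$.

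First I would pin down the modal types and simultaneously dispatch idempotence. For $\Op$, the modal types are the $\ExtOpn$-local types, i.e.\ those $A$ for which the unit $\eta_\Op(a) = \lambda \_.\, a : A \to (\ExtOpn \to A)$ is an equivalence; I would check that $\Op$ lands among these via the computation $\Op\Op A = (\ExtOpn \to (\ExtOpn \to A)) \simeq (\ExtOpn \times \ExtOpn \to A) \simeq (\ExtOpn \to A) = \Op A$, using that $\ExtOpn \times \ExtOpn \simeq \ExtOpn$ because $\ExtOpn$ is a proposition; this is exactly idempotence. For $\Cl$, the modal types are those $A$ with $\ExtOpn \to \mathrm{isContr}(A)$, and idempotence follows because $\Cl A$ collapses to a point whenever $\ExtOpn$ holds (then $\ExtOpn \simeq \mathbf{1}$ and $A \times \ExtOpn \simeq A$, so the pushout is contractible) while leaving the complementary region untouched.

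Next I would verify $\Sigma$-closure and left exactness. Monadicity --- in the sense that we obtain genuine idempotent modalities rather than merely reflective subuniverses --- follows from closure of the modal types under dependent sums, which is the defining criterion of a modality in \citet{rijke-shulman-spitters:2017}. Left exactness for $\Op$ is immediate: $(\ExtOpn \to -)$ is a right adjoint (exponentiation by $\ExtOpn$), so it preserves all finite limits, pullbacks in particular. The delicate part --- and what I expect to be the main obstacle --- is left exactness of $\Cl$, since it is built from a pushout and there is no formal reason a pushout-defined reflector should preserve pullbacks. Here I would either appeal directly to the lex criterion of \citet{rijke-shulman-spitters:2017} (that $\Cl$ preserves identity types, checked by the complementary-region argument: on the part where $\ExtOpn$ holds everything becomes contractible, and on the complementary part the modality is the identity), or, more structurally, invoke the topos-theoretic fact that the open and closed subtoposes determined by a subterminal object are again toposes, so that both localizations are left exact by construction. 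The former is self-contained; the latter explains conceptually why lex-ness holds and matches the semantic setting of $\TopIdent{X}$ in which we work.

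Finally, having established reflectivity, idempotence, $\Sigma$-closure, and left exactness for each of $\Op$ and $\Cl$, the proposition follows. I would remark in passing that $\Op$ and $\Cl$ are moreover \emph{complementary}, forming the open/closed fracture associated with $\ExtOpn$; this is the structural fact exploited elsewhere, e.g.\ that $\Op\Cl A \simeq \mathbf{1}$.
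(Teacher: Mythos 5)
Your proposal is correct and takes essentially the same approach as the paper: the paper offers no proof of this proposition beyond the citation, treating it as an instance of the general theory of open and closed modalities of \citet{rijke-shulman-spitters:2017} applied to the distinguished proposition $\ExtOpn$. Your sketch simply unpacks that citation---recognizing $\Op$ and $\Cl$ as the open/closed modalities for $\ExtOpn$ and rehearsing the standard arguments for idempotence, $\Sigma$-closure, and left exactness---so it is the same route, carried out in more detail.
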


We write $(\eta_{\Op}, \eta_{\Cl})$ for the monadic unit of the (extensional, intensional) modality.
Observe that $\Cl{A}$ collapses to a single point when a proof of $\ExtOpn$ exists:

\begin{proposition}\label{lemma:closedwhenopen}
  Given \isof{u}{\ExtOpn}, we have that $\closed{A} \cong 1$ for any $A$.
\end{proposition}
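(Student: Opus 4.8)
The plan is to exhibit a specific center of contraction for $\Cl{A}$ and reduce the statement to showing that every element of $\Cl{A}$ is equal to it. Since we are handed a witness $\isof{u}{\ExtOpn}$ and $\ExtOpn$ is a proposition (by $\ExtOpn/{\mathsf{uni}}$ any two of its inhabitants are equal), the point $\ast(u) : \Cl{A}$ supplies such a candidate center. The projection $\Cl{A} \to 1$ and the point $1 \to \Cl{A}$ selecting $\ast(u)$ compose to the identity on $1$ trivially, so the entire content is to show that the opposite composite, namely the constant map at $\ast(u)$, equals the identity on $\Cl{A}$.

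First I would prove $x = \ast(u)$ for all $\isof{x}{\Cl{A}}$ by the induction principle $\mathsf{ind}_{\Cl}$. In the $\eta_{\Cl}(a)$ case the goal $\eta_{\Cl}(a) = \ast(u)$ is a direct instance of the path constructor of $\Cl$; in the $\ast(v)$ case I would use $v = u$ (proof irrelevance of $\ExtOpn$) together with congruence to obtain $\ast(v) = \ast(u)$. Because the motive is an identity type, and hence a proposition in the extensional internal language of the ambient topos, the coherence obligation relating the two cases is discharged automatically, so no higher path data needs to be supplied.

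Alternatively, I would give the argument purely at the level of the defining pushout $A \sqcup_{A \times \ExtOpn} \ExtOpn$: the hypothesis $u$ makes $\ExtOpn$ contractible, so the leg $\pi_1 : A \times \ExtOpn \to A$ of the span becomes an isomorphism, and pushing out along an isomorphism collapses the pushout onto the opposite vertex $\ExtOpn \cong 1$. The only step warranting care is this last collapse---confirming via the universal property that the coprojection $\ExtOpn \to \Cl{A}$ is a genuine isomorphism rather than a mere retract---but this is a routine diagram chase once $\pi_1$ is known to be invertible. Either route yields $\Cl{A} \cong 1$.
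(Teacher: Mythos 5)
Your proof is correct. The paper states this proposition without proof---it is offered as an immediate observation following the definition of $\Cl{A}$ as the pushout $A \sqcup_{A \times \ExtOpn} \ExtOpn$ (with a pointer to the standard results of Rijke--Shulman--Spitters)---and your second route, which notes that an inhabitant of $\ExtOpn$ makes the projection $A \times \ExtOpn \to A$ an isomorphism so that the pushout collapses onto the vertex $\ExtOpn \cong 1$, is precisely that implicit argument. Your first route, establishing contractibility at the center $\ast(u)$ by the induction principle (with the coherence cell trivial because the motive is a proposition in the extensional internal language), is an equally valid internal elaboration of the same fact.
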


Thus we may effect the erasure of $\mathsf{step}$ in the extensional fragment
by arranging the cost structure of programs to be a type in the \emph{image} of $\Cl$:
when a proof \isof{u}{\ExtOpn} is present, a cost $c$ is equal to any other cost,
in particular $0$; consequently we have $\mstep{c}{e} = \mstep{0}{e} = e$ by the
coherence of $\mathsf{step}$.

\subsubsection{Cost monoid $\mathbb{C}$}

Recalling that \calf{} is parameterized in a cost monoid $\mathbb{C}$,
our model takes as an input an arbitrary $(\mathbb{M},+,0,\le)$ cost monoid in the category of sets $\SET$.
We then define $\mathbb{C}$ as the image of $\mathbb{M}$ under the constant sheaf functor $\SET \to \Sh{\TopIdent{X}}$.
Note that because $\mathbb{C}$ is not
necessarily in the image of $\Cl$, we interpret computation types of \calf{} as
algebras for the writer monad $\Cl\mathbb{C} \times \--$.
By \cref{lemma:closedwhenopen} the cost structure of programs is then rendered trivial underneath $\ExtOpn$.

\subsubsection{Judgmental structure}

Per the Eilenberg–Moore model of CBPV, value types \calf{} are simply interpreted as types in $\TopIdent{X}$,
and computation types are interpreted as algebras for $\Cl\mathbb{C} \times \--$:

\begin{minipage}[c]{0.4\textwidth}
  \begin{align*}
  &\alg{T} =\\
  &\begin{cases}
    \mathsf{A} : \mathcal{U}_{\alpha}\\
    \mathsf{map} : T (\mathsf{A}) \to \mathsf{A}\\
    \mathsf{unit} : \mathsf{map} \circ \eta = id_{\mathsf{A}}\\
    \mathsf{mult} : \mathsf{map} \circ \mu = \mathsf{map} \circ T \mathsf{map}
  \end{cases}
\end{align*}
\end{minipage}
\begin{minipage}[c]{0.25\textwidth}
  \begin{align*}
  \tpv &: \mathcal{U}_{\beta}\\
  \tpv &= \mathcal{U}_{\alpha}\\
  \tmv{A} &= A
  \end{align*}
\end{minipage}
\begin{minipage}[c]{0.25\textwidth}
  \begin{align*}
    \tpc &: \mathcal{U}_{\beta}\\
    \tpc &=\alg{\Cl{\mathbb{C}} \times \--}\\
    \tmc{X} &= \carrier{X}
  \end{align*}
\end{minipage}\\

Note that given an algebra $X$, we write \carrier{X} for the carrier
$X \cdot \mathsf{A}$.

\subsubsection{Values and computations}
In the algebra semantics of CBPV, one coerces between value types and computation types via the
adjoint pair $\mathsf{F} \dashv \mathsf{U}$ in which the left adjoint
takes a type to the associated free $\Cl{\mathbb{C}} \times \--$-algebra and the right
adjoint forgets the structure of the given algebra, writing $\freealg{T}{A}$ for the free $T$ algebra on $A$:

\begin{minipage}{0.45\textwidth}
  \begin{align*}
  \mathsf{F} &: \mathcal{U}_{\alpha} \to \alg{\mathbb{C} \times \--}\\
  \F{A} &= \freealg{\Cl{\mathbb{C}} \times \--}{A}
  \end{align*}
\end{minipage}
\begin{minipage}{0.45\textwidth}
  \begin{align*}
  \mathsf{U} &: \alg{\Cl{\mathbb{C}} \times \--} \to  \mathcal{U}_{\alpha}\\
  \UU{X} &= \carrier{X}
\end{align*}
\end{minipage}

\subsubsection{Cost effect}
The cost effect $\mathsf{step}$ is given by the algebra map of the given computation type:
\begin{align*}
  \mathsf{step} &: \impl{X} \mathbb{C} \to \carrier{X} \to \carrier{X}\\
  \cstep{c}{x} &= (X \cdot \mathsf{map}) (\eta_{\Cl}(c), x)
\end{align*}

The following is an immediate consequence of \cref{lemma:closedwhenopen}.
\begin{corollary}[Extensional fragment]
  We have that $\Op{(\mstep{c}{e} = e)}$ for all \isof{c}{\mathbb{C}} and \isof{e}{\carrier{X}}.
\end{corollary}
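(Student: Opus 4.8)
The plan is to unfold the extensional modality and then invoke \cref{lemma:closedwhenopen} to collapse the cost argument to the monoidal unit. By definition $\Op{(\mstep{c}{e} = e)}$ is the function space $\ExtOpn \to (\mstep{c}{e} = e)$, so I would assume a witness $\isof{u}{\ExtOpn}$ and prove the bare equation $\mstep{c}{e} = e$ under that assumption, exactly as previewed in the discussion of the phase distinction.

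First I would recall that in the counting model $\mstep{c}{e}$ is computed as $\cstep{c}{e} = (X \cdot \mathsf{map})(\eta_{\Cl}(c), e)$, so the cost $c$ enters the expression only through the element $\isof{\eta_{\Cl}(c)}{\Cl{\mathbb{C}}}$. Given the assumption $\isof{u}{\ExtOpn}$, \cref{lemma:closedwhenopen} supplies an isomorphism $\Cl{\mathbb{C}} \cong 1$; since a contractible type has a unique inhabitant, this forces $\eta_{\Cl}(c) = \eta_{\Cl}(0)$. Substituting yields $\mstep{c}{e} = (X \cdot \mathsf{map})(\eta_{\Cl}(0), e) = \mstep{0}{e}$. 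To close the chain I would appeal to the $\mathsf{unit}$ law of the algebra $X$: the unit $\eta$ of the writer monad $\Cl{\mathbb{C}} \times \--$ sends $e$ to $(\eta_{\Cl}(0), e)$, so $\mathsf{map} \circ \eta = \mathrm{id}$ gives $\mstep{0}{e} = (X \cdot \mathsf{map})(\eta(e)) = e$. Stringing these equalities together produces $\mstep{c}{e} = e$ under $u$, which is the required inhabitant of $\Op{(\mstep{c}{e} = e)}$.

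There is no genuine obstacle here — this is precisely why the statement is billed as an immediate consequence of \cref{lemma:closedwhenopen}. The only point demanding a moment of care is that the isomorphism $\Cl{\mathbb{C}} \cong 1$ is available \emph{only} underneath the hypothesis $\isof{u}{\ExtOpn}$, not globally: the equation $\mstep{c}{e} = e$ genuinely fails outside the extensional phase (indeed, its failure for nontrivial $c$ is exactly the nondegeneracy property targeted elsewhere in \cref{sec:metatheory}), so the argument crucially must not — and does not — discharge the assumption $u$.
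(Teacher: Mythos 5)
Your proposal is correct and follows essentially the same route as the paper, which presents this corollary as an immediate consequence of \cref{lemma:closedwhenopen}: under a hypothesis $\isof{u}{\ExtOpn}$, the collapse $\Cl{\mathbb{C}} \cong 1$ identifies $\eta_{\Cl}(c)$ with $\eta_{\Cl}(0)$, whence $\mstep{c}{e} = \mstep{0}{e} = e$ by the coherence (algebra unit) law of $\mathsf{step}$. Your closing caveat --- that the collapse is available only underneath $u$ and must not be discharged --- matches the paper's own framing, since the global failure of this equation for nonzero $c$ is precisely the content of the nondegeneracy theorem.
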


\paragraph{External view of the counting model}
We may obtain a more concrete perspective on the counting model
by considering its construction in the arrow category $\textbf{Set}^{\to}$
in which the extensional phase $\ExtOpn$ is furnished by the subterminal family $0 \to 1$.
Observe that objects in this category are
families of sets $A : A_1 \to A_0$, which corresponds to the fact that a type $A$ is a family indexed in a collection
of \emph{behaviors} with the fibers representing the \emph{cost structure} for a given behavior.

In $\textbf{Set}^{\to}$ the extensional modality takes a family $A_1 \to A_0$ to the identity $A_0 \to A_0$,
trivializing the fiber (\ie cost structure) over each point in $A_0$.
On the other hand, the intensional modality takes $A_1 \to A_0$ to the family $A_1 \to 1$;
applying the extensional modality thence results in the terminal family $1 \to 1$,
demonstrating the property that ``the extensional part of the intensional part is trivial''.

\subsection{Nondegeneracy of $\mathsf{step}$}

\begin{theorem}\label{thm:non-degen}
  We have that $(\mstep{c}{e} = e) \to \closed{\bot}$ for any nonzero \isof{c}{\mathbb{C}} and
  \isof{e}{\Cl{\mathbb{C}} \times A}.
\end{theorem}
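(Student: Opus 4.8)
The plan is to reduce the hypothesis to an equation in the monoid $\Cl{\mathbb{C}}$ and then to exploit cancellativity \emph{through} the closed modality. Since $e : \Cl{\mathbb{C}} \times A$ is precisely the carrier of the free algebra $\F{A}$, on which $\mathsf{step}$ acts via the structure map of that free algebra (namely addition in the first coordinate), I would first unfold $\mstep{c}{e} = (\eta_{\Cl}(c) + \pi_1 e,\ \pi_2 e)$. Applying the first projection to the hypothesis $\mstep{c}{e} = e$ therefore yields
\[
  \eta_{\Cl}(c) + n = n
\]
in $\Cl{\mathbb{C}}$, where $n \coloneqq \pi_1 e$; it suffices to derive $\closed{\bot}$ from this equation together with the assumption that $c$ is nonzero.

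Because the goal $\closed{\bot}$ is a $\Cl$-modal type, I would proceed by closed induction ($\mathsf{ind}_{\Cl}$) on $n : \Cl{\mathbb{C}}$, leaving two cases, with the coherence obligation discharged automatically since $\closed{\bot}$ is a proposition. In the point case $n = \ast(u)$ with $u : \ExtOpn$, the constructor $\ast : \ExtOpn \to \closed{\bot}$ supplies the required element immediately, with no further work. In the unit case $n = \eta_{\Cl}(m)$ with \isof{m}{\mathbb{C}}, the equation becomes $\eta_{\Cl}(c) + \eta_{\Cl}(m) = \eta_{\Cl}(m)$; since $\Cl$ is left exact it preserves finite products, so the monoid structure on $\Cl{\mathbb{C}}$ is the image of that on $\mathbb{C}$ and $\eta_{\Cl}$ is a monoid homomorphism, whence this rewrites to $\eta_{\Cl}(c + m) = \eta_{\Cl}(m)$.

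The crux is the unit case, and here I cannot simply cancel $m$: the map $\eta_{\Cl}$ is \emph{not} injective, because the extensional phase collapses distinct costs. The correct move is to stay inside the modality. By left exactness of $\Cl$ (the cited result of \citet{rijke-shulman-spitters:2017}), identity types of $\eta_{\Cl}$-images are modal, so the identification $\eta_{\Cl}(c+m) = \eta_{\Cl}(m)$ is equivalent to an element of $\Cl(c + m = m)$. I would then transport cancellativity and nonzeroness \emph{through} the functor $\Cl$: cancellativity of $\mathbb{C}$ gives a map $(c + m = m) \to (c = 0)$, the nonzero hypothesis gives $(c = 0) \to \bot$, and applying $\Cl$ to their composite produces $\Cl(c+m=m) \to \closed{\bot}$, which consumes our element and yields the goal.

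The main obstacle — and the conceptual heart of nondegeneracy — is exactly this last step: a naive cancellation would purport to prove the \emph{global} absurdity $\bot$, whereas the honest conclusion is only $\closed{\bot}$. This discrepancy is forced by the fact that $\mathsf{step}$ genuinely is trivial under the extensional phase (\cref{lemma:closedwhenopen}), so the contradiction can only be drawn \emph{intensionally}; functoriality of $\Cl$ is precisely the device that carries the cancellation argument under the modality. Externally, in $\textbf{Set}^{\to}$ one sees the same phenomenon concretely: with $\Cl{\mathbb{C}} = (\mathbb{M} \to 1)$, the equation fails on the intensional fibre by cancellativity whenever $c \neq 0$, forcing the proposition $\mstep{c}{e} = e$ to be the subterminal $0 \to 1 \cong \closed{\bot}$, which is what makes the implication hold while having no global section.
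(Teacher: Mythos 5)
Your proof is correct and takes essentially the same route as the paper's: unfold $\mathsf{step}$ on the free algebra as addition of $\eta_{\Cl}(c)$ in the first coordinate, case-split on the element of $\Cl{\mathbb{C}}$ via the pushout/closed induction principle, and in the $\eta_{\Cl}$ case use left exactness to reduce $\eta_{\Cl}(c+m)=\eta_{\Cl}(m)$ to $\Cl(c+m=m)$, then cancellativity, nonzeroness, and the functorial action of $\Cl$ to reach $\closed{\bot}$, with the $\ast(u)$ case dispatched by \cref{lemma:closedwhenopen}. Your additional remarks --- that $\eta_{\Cl}$ is not injective so the cancellation must be carried \emph{under} the modality, and that the coherence obligation of $\mathsf{ind}_{\Cl}$ is trivial because $\closed{\bot}$ is a proposition --- are accurate elaborations of points the paper leaves implicit.
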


\begin{proof}
  By definition, $e = (c', a)$ for some \isof{c'}{\Cl{\mathbb{C}}} and \isof{a}{A}.
  Unfolding the definition of $\mathsf{step}$ and free algebra, we have $\mstep{c}{c',a} = (\eta_{\Cl}(c) +_{\Cl} c', a)$,
  where $+_{\Cl}$ lifts $+$ using the functorial action of $\Cl$.
  Hence it suffices to show $(\eta_{\Cl}(c) +_{\Cl} c', a) = (c', a)$ implies $\Cl{\bot}$.
  Suppose $(\eta_{\Cl}(c) +_{\Cl} c', a) = (c', a)$.
  By the induction principle of pushouts, there are two cases to consider.
  First, suppose $c' = \eta_{\Cl}(c'')$ for some
  \isof{c''}{\mathbb{C}}. Because $\Cl$ is left exact, the equation $\eta_{\Cl}(c) +_{\Cl} \eta_{\Cl}(c'') = \eta_{\Cl}(c'')$
  is equivalent to $\Cl(c+c'' = c'')$.
  But we assumed that $c$ is nonzero, so the fact that $\mathbb{C}$ is cancellative entails
  $c+c'' = c'' \to \bot$, and the result follows from the functorial action of $\Cl$.
  On the other hand, suppose $c' = \ast(u)$ for some \isof{u}{\ExtOpn}.
  By Lemma~\ref{lemma:closedwhenopen}, we obtain a unique proof of $\closed{\bot}$.
\end{proof}

Because $\closed{\bot} = \ExtOpn$, we know that if $\mathsf{step}$ is degenerate, then
the extensional phase $\ExtOpn$ is derivable. Observing that we placed no restrictions on the proposition
$\ExtOpn$ in the construction of the counting model, we immediately obtain the desired theorem
by instantiating $\ExtOpn$ with the false proposition:
\begin{theorem}
  We have that $\nvdash \cstep{c}{e} = e$ for any nonzero \isof{c}{\mathbb{C}} and
  \isof{e}{\F{A}}.
\end{theorem}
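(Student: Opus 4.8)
The plan is to convert the internal implication of \cref{thm:non-degen} into an external statement of underivability by exploiting the universal property of \calf{} as a free locally cartesian closed category. Recall that \calf{} is presented as the free lccc $\CCat_{\calf}$ over the signature $\Sigma_{\calf}$; by functorial semantics, any algebra \isof{\mathcal{A}}{\textbf{Alg}_{\mathcal{U}}(\Sigma_{\calf})} induces a structure-preserving functor out of $\CCat_{\calf}$, and hence every judgmental equality \emph{derivable} in \calf{} must be validated in $\mathcal{A}$. So to show $\nvdash \cstep{c}{e} = e$ it suffices to exhibit a single counting model in which the corresponding interpreted equation fails.

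Concretely, I would fix a counting model in some topos in which the distinguished proposition $\ExtOpn$ is interpreted as the \emph{false} proposition $\bot$; this is permissible precisely because the construction of the counting model imposes no constraints whatsoever on $\ExtOpn$. I would instantiate the cost monoid at a nontrivial cancellative monoid such as $(\Nat, +, 0, \le)$, select a genuinely nonzero element $c$, and take a nonempty value type $A$ so that $\F{A}$ carries an inhabitant $e$. Now suppose toward a contradiction that $\vdash \cstep{c}{e} = e$. By the preservation property above, the interpreted equation holds in this model, and \cref{thm:non-degen} supplies the internal implication $(\cstep{c}{e} = e) \to \closed{\bot}$, yielding an inhabitant of $\closed{\bot}$. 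Since the closed modality of $\bot$ collapses to the extensional phase --- the pushout $\bot \sqcup_{\bot \times \ExtOpn} \ExtOpn \cong \ExtOpn$, using $\bot \times \ExtOpn \cong \bot$ --- we obtain $\closed{\bot} = \ExtOpn = \bot$, i.e. an inhabitant of $\bot$. This contradiction establishes that no such derivation exists.

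The step I expect to be the main obstacle is the bookkeeping that connects the \emph{external} notion of derivability with the \emph{internal} statement of \cref{thm:non-degen}. In particular I must check that the syntactic term $\cstep{c}{e}$ interprets to exactly the form $\cstep{c}{(c',a)}$ to which the nondegeneracy lemma applies, that a syntactically nonzero $c$ remains nonzero after passing through the constant-sheaf interpretation of the cost monoid (so that cancellativity genuinely bites), and that the identification $\closed{\bot} = \ExtOpn$ is the one used to read off $\bot$ in the model where $\ExtOpn := \bot$. Once these three points are pinned down, the argument is a direct appeal to the universal property of $\CCat_{\calf}$ together with \cref{thm:non-degen}.
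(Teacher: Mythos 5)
Your proposal is correct and takes essentially the same route as the paper's own proof: both convert the internal implication of \cref{thm:non-degen} into external underivability via the universal property of the free lccc, using the identification $\closed{\bot} = \ExtOpn$ and the freedom to instantiate the unconstrained proposition $\ExtOpn$ as the false proposition. The bookkeeping points you flag (that the interpreted term has the free-algebra form, that nonzero costs remain nonzero under the constant-sheaf interpretation, and that $\closed{\bot} = \ExtOpn$ reads off as $\bot$) are elided in the paper but are exactly the checks its terse argument presupposes.
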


\subsection{Validity of extensional cost bounds}
We show that extensional inequalities are equivalent to ordinary inequalities in the $\SET^\to$ model of \calf{}
whenever the cost monoid is \emph{extensional} in the sense that $\mathbb{C} \cong \Op{\mathbb{C}}$
and the relation $\le$ may be characterized using $\Sigma$ and equality types.
As an example, we illustrate the case for the cost monoid $\Nat$,
noting that the same proof may be easily adapted to other common instances:

\begin{theorem}\label{thm:open-inequality}
  We have that $\Op{(m \le n)}$ if and only if $m \le n$ for all \isof{m,n}{\mathbb{\Nat}}.
\end{theorem}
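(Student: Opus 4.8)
The plan is to prove both directions, with the reverse implication ($m \le n$ entails $\Op(m \le n)$) being immediate from the unit $\eta_{\Op}$ of the extensional modality, so that the real content lies in $\Op(m \le n) \to (m \le n)$. Rather than attacking this implication head-on, I would prove the stronger statement that the proposition $m \le n$ is already $\Op$-modal, i.e.\ that $\eta_{\Op} : (m \le n) \to \Op(m \le n)$ is an equivalence; the biconditional then falls out at once, and the argument is reusable for any other extensional cost monoid, as the paper promises.

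To carry this out, I would first replace $\le$ by its characterization through $\Sigma$ and equality: over $\Nat$ one has $m \le n \simeq \Sigma\,\isof{k}{\Nat}.\,(m + k =_{\Nat} n)$, and since the constant-sheaf functor $\SET \to \Sh{\TopIdent{X}}$ is left exact, this equivalence transports to the interpreted relation on $\mathbb{C}$ in the $\SET^\to$ model. Second, I would record that the cost monoid $\mathbb{C}$, here instantiated at $\Nat$, is \emph{extensional}: its interpretation is the constant family $\Nat \xrightarrow{\mathrm{id}} \Nat$, on which the extensional modality—which sends a family $A_1 \to A_0$ to the identity $A_0 \to A_0$—acts as the identity, so $\mathbb{C} \cong \Op\mathbb{C}$ and $\mathbb{C}$ is $\Op$-modal.

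With these two inputs, the conclusion follows from closure properties of modal types. Since $\Op$ is a modality, modal types are closed under $\Sigma$; and since $\Op$ is left exact, by the cited result of \citet{rijke-shulman-spitters:2017}, the identity types of a modal type are again modal, so in particular $m + k =_{\Nat} n$ is modal for modal $\Nat$. Hence $\Sigma\,\isof{k}{\Nat}.\,(m + k =_{\Nat} n)$ is a dependent sum of modal types over a modal base and is therefore modal, whence $\eta_{\Op}$ on it is an equivalence and $\Op(m \le n) \iff (m \le n)$.

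The main obstacle I anticipate is not the final modal bookkeeping but justifying its two inputs: verifying that $\mathbb{C}$ is genuinely $\Op$-modal in the model (that the constant sheaf on $\Nat$ is the identity family fixed by $\Op$), and spelling out that left exactness licenses ``identity types of modal types are modal''—concretely, that $x =_A y$ is the pullback of $1 \to A \leftarrow 1$, that $\Op$ preserves this pullback, and hence $(x =_A y) \simeq \Op(x =_A y)$ when $A$ is modal. Once these are pinned down, the $\Nat$ case is routine and the template applies to any cost monoid with $\mathbb{C} \cong \Op\mathbb{C}$ whose order is presented by $\Sigma$ and equality. As a sanity check one can compute directly in $\SET^\to$: for global $m, n$ the proposition $m \le n$ is either $1 \to 1$ or $0 \to 0$, each fixed by $\Op$, in agreement with the modal argument.
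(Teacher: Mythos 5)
Your proposal is correct and follows essentially the same route as the paper's own proof: you present $m \le n$ as a $\Sigma$-type over equality of naturals, invoke the closure of extensional ($\Op$-modal) types under $\Sigma$ and equality from \citet{rijke-shulman-spitters:2017}, and combine this with the extensionality of $\Nat$ in the $\SET^\to$ model to conclude $(m \le n) \cong \Op(m \le n)$. The extra detail you supply—why the constant family $\Nat \to \Nat$ is fixed by $\Op$, and the pullback argument for identity types—is a faithful unfolding of the closure properties the paper cites rather than a different argument.
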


\begin{proof}

  Observe that we may present $m \le n$ as the type $\Sigma \isof{k}{\Nat}.\, n = \mathsf{suc}^k\prn{m}$.
  By standard results \citep{rijke-shulman-spitters:2017}
  we know that the property of being extensional
  is closed under equality and $\Sigma$ types. Combined with the fact that $\Nat$ is an extensional type
  in the $\SET^\to$ model of \calf{}, we conclude that $m \le n$ is also extensional, \ie $(m \le n) \cong \Op{(m \le n)}$.
\end{proof}

\begin{corollary}
  We have that $\vdash \Op{(m \le n)}$ if and only if $\vdash m \le n$ for all \isof{m,n}{\mathbb{\Nat}}.
\end{corollary}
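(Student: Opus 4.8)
The plan is to upgrade the internal biconditional of \cref{thm:open-inequality} to an external statement about derivability, using two ingredients. The first is soundness of the counting model: being an algebra for $\Sigma_{\calf}$, it induces a model functor out of the free lccc $\CCat_{\calf}$, so every judgement derivable in \calf{} is validated by the model. The second is that the model interprets $\nat$ and its order faithfully at ground type, so that semantic validity of an inequality between numerals coincides with the corresponding external arithmetic fact.

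The easy direction, $\vdash m \le n \Rightarrow \vdash \Op{(m \le n)}$, is purely syntactic and needs no appeal to the model: from any derivation of $m \le n$ we obtain one of $\Op{(m \le n)}$ by applying the unit $\eta_{\Op}$ of the extensional modality. For the converse, suppose $\vdash \Op{(m \le n)}$. By soundness the counting model validates $\Op{(m \le n)}$, and applying \cref{thm:open-inequality} internally shows that it also validates $m \le n$. It then remains to descend from semantic validity back to derivability: since the model interprets the numerals $m,n$ as the corresponding elements of $\Nat$ (via the constant sheaf $\SET \to \Sh{\TopIdent{X}}$) and interprets $\le$ as genuine inequality, validity of $m \le n$ holds exactly when $m \le n$ is true as a fact about natural numbers, and any true inequality between numerals is derivable in \calf{} from the cost-monoid order axioms. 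Chaining these equivalences — $\vdash \Op{(m \le n)}$, semantic validity of $m \le n$, the arithmetic fact, and $\vdash m \le n$ — yields the corollary.

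The main obstacle is precisely this final descent. Soundness alone only gives the implication from derivability to semantic validity, so recovering $\vdash m \le n$ from model-validity requires a local adequacy argument for the atomic order relation on numerals, establishing that the $\SET^\to$ interpretation does not collapse distinct numerals or spuriously identify inequalities at ground type. This is a small, self-contained instance of the adequacy theorem the paper defers to future work, needed here only for closed inequalities; once the faithful interpretation of $\nat$ and the derivability of true numeral inequalities are in hand, the rest of the argument is a routine chaining of equivalences.
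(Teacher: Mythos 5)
Your proposal is correct and takes essentially the same route the paper intends: the backward direction is the unit of $\Op$, and the forward direction chains soundness of the counting model, \cref{thm:open-inequality}, and the fact that the $\Nat$-instantiated cost order is interpreted faithfully (via the constant sheaf), so that model-validity of a closed inequality reflects back to an external arithmetic fact and thence to derivability. The ``final descent'' you flag is the one step the paper leaves implicit, but it is unproblematic rather than a deferred adequacy result: with $\mathbb{C}$ instantiated at $\Nat$, every true closed numeral inequality is derivable outright (in the implementation, by Agda computation on the decidable order), which is precisely why the paper states the corollary without further proof.
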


\section{Parallelism in \calf{}}\label{sec:parallelism}

Parallelism arises naturally in the setting of \calf{} via an equational presentation
of the profiling semantics of \citet{blelloch-greiner:1995}.
Here we present a version adapted from \citet{harper:2018:parallel} in which it is observed that
the source of parallelism can be isolated to the treatment of \emph{pairs} of computations:
a parallel computation of $A \times B$ is furnished by a new computation form \& that
conjoins two independent computations of $A$ and $B$:
\begin{align*}
  \& &: \impl{\isof{A,B}{\tpv}} \tmc{\F{A}} \to \tmc{\F{B}} \to \tmc{\F{A \times B}}
\end{align*}

One may think of a term \para{e}{f} as a computation in which
$e$ and $f$ are evaluated simultaneously.

\paragraph{Cost structure of parallelism}

\citet{blelloch-greiner:1995} characterize the complexity of a program in terms of two measures:
\emph{work}, which represents its sequential cost, and \emph{span}, which represents
its parallel cost.
In \calf{} this structure is recorded by the \emph{parallel cost monoid}
$\mathbb{C} \coloneqq (\Nat^2, \oplus, (0,0), \le_{\Nat^2})$
in which $\oplus$ and $\le_{\Nat^2}$ are component-wise extensions of addition and $\le$.
Parallel cost composition is then implemented by the operation
$(w_1,s_1) \otimes (w_2, s_2) \coloneqq (w_1 + w_2, \max{(s_1, s_2}))$
that takes the sum of the works and max of the spans.
This provides the required structure to assemble the cost of a completed parallel pair:
\begin{align*}
  \&_{\mathsf{join}} &: \impl{A,B,c_1,c_2,a,b}
    \para{\left(\mstep{c_1}{\ret{a}}\right)}{\left(\mstep{c_2}{\ret{b}}\right)} = \mstep{c_1 \otimes c_2}{\ret{(a,b)}}
\end{align*}

\paragraph{Nondegeneracy of parallel \calf{}}

Metatheoretic properties of parallel \calf{} follows directly from the counting model defined in \cref{sec:metatheory},
given that we can interpret parallel pairing. Because the new pairing operation is only defined on free algebras,
we may use $\otimes_{\Cl}$ (lift of $\otimes$ by the functorial action of $\Cl$) to define parallel pairing:
$\para{(c_1, a)}{(c_2, b)} = (c_1 \otimes_{\Cl} c_2, (a, b))$.

\paragraph{Parallel complexity of sorting}

We have verified the sequential and parallel complexity of insertion sort and merge sort under the comparison cost model.
As outlined above, we instantiate \calf{} with the parallel cost monoid $\Nat^2$ in which the first component
represents the sequential cost and the second component represents the parallel cost.
The analysis is parameterized by a \emph{comparable} type \isof{A}{\tpv} that is equipped with
a comparison operation \isof{\le^b}{\tmv{A} \to \tmv{A} \to \tmc{\F{\mathsf{bool}}}}.
Consequently, we may enforce the cost model by requiring the comparison operation $\le^b$ to be uniformly unit cost, \ie
$\isBounded{\mathsf{bool}}{x \le^b y}{(1,1)}$ for all \isof{x,y}{\tmv{A}}.
We have mechanized the following asymptotically tight cost bounds:

\begin{theorem}[\Mech{Examples.Sorting.Parallel.InsertionSort.sort≤sort/cost/closed}]
  For all \isof{l}{\listty{A}}, we have that \isBounded{\listty{A}}{\textit{isort}(l)}{(\len{l}^2,\len{l}^2)}.
\end{theorem}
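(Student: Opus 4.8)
The plan is to prove the bound by induction on $l$, after first establishing an auxiliary cost bound for the insertion subroutine. Recall that insertion sort is given by $\textit{isort}(\nilex) = \ret{\nilex}$ and $\textit{isort}(\consex{x}{xs}) = \bindcst{s}{\textit{isort}(xs)}{\textit{insert}(x,s)}$, where $\textit{insert}(x,s)$ walks down the already-sorted list $s$ performing one unit-cost comparison $\le^b$ per element until it finds the insertion point. Because insertion is purely sequential --- it uses no parallel pairing $\&$ --- its work and span always coincide, which is exactly why both components of the target bound $(\len{l}^2, \len{l}^2)$ are equal and why the $\max$ in $\otimes$ never enters the picture.

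First I would prove the insertion lemma, $\isBounded{\listty{A}}{\textit{insert}(x,l)}{(\len{l}, \len{l})}$ for all $x$ and $l$, by induction on $l$. The base case $\textit{insert}(x,\nilex) = \ret{\consex{x}{\nilex}}$ follows from \textsc{Return} and \textsc{Relax}. In the case $l = \consex{y}{ys}$, the leading comparison satisfies $\isBounded{\mathsf{bool}}{x \le^b y}{(1,1)}$ by the cost model; I would then apply \textsc{Bind}, relaxing both branches of the conditional continuation to the common bound $(\len{ys}, \len{ys})$ (the terminating branch contributes a cost-free $\ret{}$, the recursive branch uses the inductive hypothesis). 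Combining via $\oplus$ gives $(1,1) \oplus (\len{ys}, \len{ys}) = (\len{\consex{y}{ys}}, \len{\consex{y}{ys}})$, as required.

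Next I would run the main induction on $l$. The base case is \textsc{Return} followed by \textsc{Relax}. For $l = \consex{x}{xs}$, applying \textsc{Bind} to the inductive hypothesis $\isBounded{\listty{A}}{\textit{isort}(xs)}{(\len{xs}^2, \len{xs}^2)}$ and the insertion lemma yields the open-form bound $\bindcst{s}{\textit{isort}(xs)}{(\len{xs}^2, \len{xs}^2) \oplus (\len{s}, \len{s})}$, whose body still depends on the computed sorted list $s$. To collapse this to a closed form I would unfold the $\mathsf{hasCost}$ equation supplied by the inductive hypothesis, namely $\textit{isort}(xs) = \mstep{c'}{\ret{a}}$, and simplify the $\mathsf{bind}$ using the stepping laws so that the bound reduces to $c' \oplus (\len{a}, \len{a})$. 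Since the inequality obligation of $\mathsf{isBounded}$ lives under $\Op$, I may there import the behavioral invariant that sorting preserves length, $\Op(\len{a} = \len{xs})$, together with $\Op(c' \le_{\Nat^2} (\len{xs}^2, \len{xs}^2))$; the remaining obligation is the arithmetic inequality $\len{xs}^2 + \len{xs} \le (\len{xs}+1)^2$ in each component, discharged by \textsc{Relax}.

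The main obstacle is precisely the open-form bound emitted by \textsc{Bind}: the insertion cost depends on $\len{s}$, and $s = \textit{isort}(xs)$ is available only as a computation rather than as a statically known number. The crux is therefore to descend into the extensional fragment --- where the inequality component of $\mathsf{isBounded}$ resides --- and there identify $\len{s}$ with $\len{xs}$ via length preservation, reflecting the slogan that ``costs don't have cost'' from \cref{sec:quantitative-refinements}. Everything else is the mechanical application of the refinement lemmas of \cref{fig:refinement-lemmas} together with routine arithmetic over $\Nat^2$.
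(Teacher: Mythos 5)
Your proposal is correct and takes essentially the same route as the paper's (mechanized) proof: an auxiliary $(\len{l},\len{l})$ bound on the insertion subroutine, the syntax-directed refinement lemmas of \cref{fig:refinement-lemmas}, and---crucially---descending under $\Op$ to import the length-preservation invariant so that the open-form \textsc{Bind} bound collapses to a closed one, which is precisely the intended use of extensional inequality in $\mathsf{isBounded}$ that the paper advertises in \cref{sec:quantitative-refinements}. The only organizational difference is that the paper's development, following the recipe of \cref{sec:recursion}, factors the argument through an intermediate cost recurrence that is then refined to the closed form $(\len{l}^2,\len{l}^2)$, whereas you fuse both stages into a single induction; the mathematical content is the same.
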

Observe that sequential and parallel complexity coincide for insertion sort because there is
no opportunity for parallelism in the algorithm.
The standard merge sort algorithm enjoys a logarithmic speed up when the recursive calls are performed in parallel:
\begin{theorem}[\Mech{Examples.Sorting.Parallel.MergeSort.sort≤sort/cost/closed}]
  For all \isof{l}{\listty{A}}, we have that
  \isBounded{\listty{A}}{\textit{msort}(l)}{\left(\ceil*{\log_2{\len{l}}} \cdot \len{l}, 2 \cdot \len{l} + \ceil*{\log_2{\len{l}}}\right)}.
\end{theorem}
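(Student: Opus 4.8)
The plan is to follow the method-of-recurrence-relations recipe of \cref{sec:recursion}, instantiated at the parallel cost monoid $\Nat^2$. Since splitting a list in half and recursing on the halves is not structural recursion, I would first give a \emph{clocked} merge sort whose clock is instantiated by the recursion depth $\ceil*{\log_2 \len{l}}$, so that the very quantity governing termination is the one appearing in the cost bound. The closed-form bound is then obtained by reducing an $\mathsf{isBounded}$ refinement to the solution of two recurrences --- one for work and one for span --- read off from the component-wise structure of $\oplus$ and $\otimes$ on $\Nat^2$.

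The first substantive step is to isolate the cost of the \emph{merge} subroutine. Under the comparison cost model each comparison costs $(1,1)$, and since merge compares heads and recurses sequentially --- accumulating steps via \textsc{Bind} --- I expect to prove an $\mathsf{isBounded}$ refinement showing that merging two sorted lists of combined length $m$ costs $(m,m)$, linear in both work and span because merge admits no parallelism. Next I would analyze the recursive structure of \textit{msort}: the two recursive sorts of the halves are combined by the parallel pairing \& and then merged, so using \textsc{Bind} to sequence the split, the parallel sorts, and the merge, together with the join equation $\&_{\mathsf{join}}$ for the parallel pair, the cost $C(n)$ of sorting a list of length $n$ decomposes as (split cost) $\oplus\, \bigl(C(\lceil n/2\rceil) \otimes C(\lfloor n/2\rfloor)\bigr) \oplus$ (merge cost at $n$). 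Projecting onto the two coordinates and recalling that $\oplus$ adds both while $\otimes$ adds works but takes the \emph{max} of spans, this yields a work recurrence $W(n) = 2\,W(n/2) + O(n)$ and a span recurrence $S(n) = S(n/2) + O(n)$, since the two parallel calls on nearly equal halves have equal span and their max is just one of them.

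It remains to solve these. The work recurrence solves to the familiar $\ceil*{\log_2 n}\cdot n$; the span recurrence telescopes as $n + n/2 + n/4 + \cdots \approx 2n$, with the additive $\ceil*{\log_2 n}$ absorbing the per-level overhead summed over the $\ceil*{\log_2 n}$ levels. Assembling the two solutions into $\bigl(\ceil*{\log_2 \len{l}}\cdot \len{l},\; 2\cdot\len{l} + \ceil*{\log_2\len{l}}\bigr)$ and applying \textsc{Relax} to pass from the exact recurrence to the claimed closed form completes the proof.

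The main obstacle will be the recurrence solving, specifically the bookkeeping forced by the ceiling function for non-power-of-two lengths: the split is uneven, so the subproblems have sizes $\lceil n/2\rceil$ and $\lfloor n/2\rfloor$ rather than exactly $n/2$, and pinning down the precise constants (the $2\len{l}$ for the span, the exact $\ceil*{\log_2}$ factor for the work) requires a careful induction rather than a merely asymptotic argument. A secondary subtlety is that \& is defined only on free computations $\F{A}$, so the recursive sorts must be staged at the right type before $\&_{\mathsf{join}}$ applies, and the split and base cases must be charged correctly (or shown cost-free) so that they do not perturb the leading constants.
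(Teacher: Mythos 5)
Your proposal is correct and follows essentially the same approach as the paper: although the paper defers the details of this theorem to its Agda mechanization, the methodology it prescribes --- the clocked-recursion recipe of \cref{sec:recursion} with recursion depth $\ceil*{\log_2 \len{l}}$, the syntax-directed $\mathsf{isBounded}$ lemmas (\textsc{Bind}, \textsc{Step}, \textsc{Relax}), the comparison cost model charging $(1,1)$ per comparison, a sequential merge, and parallel recursive calls combined via $\&_{\mathsf{join}}$ over the parallel cost monoid --- is exactly what you lay out, including the resulting work/span recurrences and their closed-form solutions. The obstacles you flag (ceiling/floor bookkeeping in the induction, and staging the recursive sorts at type $\F{A}$ so that $\&_{\mathsf{join}}$ applies) are indeed where the real mechanized effort lies, and your observation that the sequential merge forces the linear span term $2\cdot\len{l}$ matches the paper's own remark that merging must itself be parallelized to obtain a sublinear span.
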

To obtain a sublinear bound on parallel complexity, one must modify merge sort to also perform the merging step in
parallel, an alteration that slightly increases the sequential complexity:
\begin{theorem}[\Mech{Examples.Sorting.Parallel.MergeSortPar.sort≤sort/cost/closed}]
  For all \isof{l}{\listty{A}}, we have that
  \isBounded{\listty{A}}{\textit{msortPar}(l)}{\left(\ceil*{\log_2{(\len{l} + 1)}}^2 \cdot \len{l}, \ceil*{\log_2{(\len{l} + 1)}}^3\right)}.
\end{theorem}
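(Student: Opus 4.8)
The plan is to reduce the $\mathsf{isBounded}$ goal to the syntax-directed refinement lemmas of \cref{fig:refinement-lemmas}, augmented by a derived rule for the parallel pairing $\&$, and then to discharge the resulting numeric obligation by well-founded induction on $\len{l}$. Since $\textit{msortPar}$ is defined as a clocked program following the recipe of \cref{sec:recursion}, I would instantiate the clock with the recursion depth and induct on $\len{l}$, using the defining equations to expose at each recursive step the split of $l$ into halves $l_1,l_2$, the parallel pair $\para{\textit{msortPar}(l_1)}{\textit{msortPar}(l_2)}$, and a parallel merge applied to the two sorted halves. The base cases $\len{l}\le 1$ are immediate from \textsc{Return} and \textsc{Relax}, the actual cost $(0,0)$ sitting below the nonnegative claimed bound.

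The first ingredient is a refinement rule for $\&$ analogous to \textsc{Bind}: from $\&_{\mathsf{join}}$, which equates $\para{(\mstep{c_1}{\ret{a}})}{(\mstep{c_2}{\ret{b}})}$ with $\mstep{c_1 \otimes c_2}{\ret{(a,b)}}$, I would derive that $\isBounded{A}{e}{c_1}$ and $\isBounded{B}{f}{c_2}$ entail $\isBounded{A}{\para{e}{f}}{c_1 \otimes c_2}$; the only extra fact needed is that $\otimes$ is monotone in both arguments, which holds since it pairs $+$ on works with $\max$ on spans. The second ingredient is an auxiliary cost lemma for the parallel merge: merging two sorted lists of combined length $m$ is bounded by $(m \cdot \ceil*{\log_2{(m+1)}},\, \ceil*{\log_2{(m+1)}}^2)$ — work $O(m\log m)$, span $O(\log^2 m)$ — proved by its own induction using \textsc{Step}, \textsc{Bind}, and the parallel rule above.

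Feeding the merge lemma and the two inductive hypotheses through \textsc{Bind} (which composes costs with the componentwise $\oplus$) and the parallel rule (which composes with $\otimes$) produces, writing $n = \len{l}$ and $k = \ceil*{\log_2{(n+1)}}$, a composite whose work satisfies $W(n) = W(\len{l_1}) + W(\len{l_2}) + n\,k$ and whose span satisfies $S(n) = \max(S(\len{l_1}), S(\len{l_2})) + k^2$. The subtle point is that the naive attempt to close the induction by asserting that each recursive log drops by one fails: when $n = 2^{b}-1$ the larger half has the same value of $\ceil*{\log_2{(\cdot+1)}}$ as $n$ itself, so no decrement is available. I would therefore prove a strengthened invariant that decouples recursion depth from per-level cost: letting $d(l)$ be the recursion depth, I would establish $S(l) \le d(l)\cdot k^2$ and $W(l) \le d(l)\cdot n \cdot k$ by induction. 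Both close cleanly using only monotonicity of $\ceil*{\log_2{(\cdot+1)}}$ in the list length — so every sublist's per-level cost is bounded by the top-level $k$ — together with the additivity $\len{l_1}+\len{l_2}=n$ for work and $d(l)=\max(d(l_1),d(l_2))+1$ for span.

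The main obstacle is precisely this strengthening and the final numeric weakening via \textsc{Relax}; note that by \cref{thm:open-inequality} the extensional inequalities hidden inside $\mathsf{isBounded}$ collapse to ordinary inequalities on $\Nat^2$, so the closing step is plain componentwise arithmetic. Once the invariant is in hand, the claim follows from the standard depth estimate $d(l)\le \ceil*{\log_2{(n+1)}}$ for merge sort, yielding $S(l)\le k^3$ and $W(l)\le n\,k^2$; the analogous facts for the merge lemma reduce to the same reasoning one exponent lower. The residual effort is the bookkeeping of the uneven split into $\ceil*{\log_2}$-sized subproblems and the verification of the degenerate small cases, which is exactly the part that \cref{fig:refinement-lemmas} cannot automate.
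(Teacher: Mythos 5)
The paper never writes out a proof of this theorem: the entire parallel sorting case study is deferred to the Agda mechanization in the supplementary material, and \cref{sec:parallelism} records only the statement itself, the parallel cost monoid $(\Nat^2, \oplus, (0,0), \le_{\Nat^2})$, the composition $\otimes$, and the $\&_{\mathsf{join}}$ axiom. Your proposal can therefore only be judged against the methodology the paper prescribes, and on that score it is a faithful instantiation: the derived refinement rule for $\para{e}{f}$ (obtained from $\&_{\mathsf{join}}$ plus monotonicity of $\otimes$), the auxiliary bound for the parallel merge, the clocked-recursion setup, and the final weakening via \textsc{Relax} with \cref{thm:open-inequality} collapsing the extensional inequalities are exactly the recipe of \cref{sec:recursion} and \cref{fig:refinement-lemmas} transported to the parallel cost monoid, and all costs are charged only at comparisons, consistent with the stated cost model. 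Two points in your write-up are genuinely right rather than merely plausible. First, your merge bound $\left(m\cdot\ceil*{\log_2{(m+1)}},\ \ceil*{\log_2{(m+1)}}^2\right)$ is the one consistent both with the stated theorem and with the paper's remark that parallelizing the merge ``slightly increases the sequential complexity'' relative to the $\ceil*{\log_2{\len{l}}}\cdot\len{l}$ work of ordinary merge sort. Second, the subtlety you flag is real: with the bound phrased via $\ceil*{\log_2{(\cdot+1)}}$, the larger half of a list of length $2^b-1$ has the same ceiling-log as the whole list, so the naive induction that decrements the logarithm at each level fails outright, and your strengthened invariant decoupling recursion depth from per-level cost ($W \le d\cdot n\cdot k$ and $S \le d\cdot k^2$, closed by $d(l) \le \ceil*{\log_2{(\len{l}+1)}}$) does close the induction; I checked the arithmetic and it is sound. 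The only caveat is that the defining equations of $\textit{msortPar}$ and the parallel merge lemma are assumed rather than derived---they depend on implementation details visible only in the supplementary material---so whether the mechanized proof organizes the bookkeeping identically cannot be determined from the paper; but nothing in your outline conflicts with the framework, and the numeric content matches the theorem exactly.
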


\section{Conclusion}

Three somewhat contradictory goals guide our type-theoretic approach to cost analysis:
\begin{enumerate}
  \item Expressiveness: the ability to codify the methods and results of informal algorithm analysis.
  \item Certification: programs and their cost bounds should bear their intended meaning.
  \item Composition: cost bounds should be composable.
\end{enumerate}

Most extant cost analysis frameworks excel at two out of three of the above.
Type systems defined by intrinsic cost-aware judgments \citep{hoffmann-aehlig-hofmann:2012, wang-wang-chlipala:2017, rajani-gaboardi-garg-hoffmann:2021}
are certified by soundness theorems and admit composition by construction but lack
expressiveness because cost bounds are often form-constrained and typing derivations
cannot exploit complex behavioral properties.
The traditional method for cost accounting using the writer monad \citep{handley-vazou-hutton:2019}
provides an expressive and compositional framework for cost analysis, but this transparent instrumentation
is not certified in the aforementioned sense because programs in the writer monad
do not necessarily accumulate cost faithfully (see \cref{sec:trans-abs}).
Lastly, frameworks for cost analysis in the setting of program logics \citep{atkey:2010, mevel-jourdan-pottier:2019} may be
transposed to type theory by working with a deep embedding of a programming language and its operational semantics
inside type theory. Although this can be developed into an expressive and certified framework in the sense above,
it is not compositional because one may speak about operational semantics only on closed terms
and must quantify over closing instances for open terms.

In this paper we show that the three goals may be achieved simultaneously.
First, the extensional fragment of \calf{} constituents an ordinary dependent type theory,
which furnished us a rich specification language to formulate two widely used algorithm
analysis techniques and illustrate each through detailed case studies.
Secondly, we see that \calf{} programs account for cost faithfully
because the type of free computations is \emph{abstract} in the theory,
leaving no opportunity for computations to spuriously abandon accumulated steps
or branch based on the cost component of an input.
Lastly, by axiomatizing cost as an effect in the base CBPV language, we obtain a
simple equational theory of cost that enables compositional cost analysis.
We conclude by suggesting two particularly pertinent directions for future investigations.

\paragraph{Automation}
In practice the usability of any verification framework may be greatly improved
by automating routine procedures or derivations. In the context of \calf{} there
are two immediate opportunities for automation. On the one hand,
the recurrence extraction step in the method of recurrence relations (as defined in \cref{sec:recursion})
may be automated in many cases by incorporating the mechanism of \citet{kavvos-morehouse-licata-danner:2019}.
On the other hand, proofs involving restricted forms of cost bounds may be automated either by recurrence solving (\eg
the Master theorem) or an automated system such as \raml{} \citep{hoffmann-aehlig-hofmann:2012}.

\paragraph{Full adequacy and partiality}
It would be interesting to prove an adequacy result of the form presented in \citet{kavvos-morehouse-licata-danner:2019}
in which one defines a cost-aware embedding of a source language (equipped with an operational semantics)
in the target language (in this case \calf{}) and proves that the image of any source program is assigned the same
cost as the cost of the source program that is induced by the operational semantics.
In many cases the source language of interest admits general recursion; consequently one must
arrange for \calf{} to faithfully interpret non terminating programs.
We conjecture that such an adequacy result may be proved by extending \calf{} with a version of
the partiality monad of \citeauthor{capretta:2005} and defining an embedding targeting the monadic fragment.

\nocite{niu-harper:2020}
\nocite{sterling-harper:2021}
\nocite{danner-licata-ramyaa:2015,kavvos-morehouse-licata-danner:2019}

\ifsubmission \else

\begin{acks}
  We are grateful to Carlo Angiuli and Alex Kavvos for productive discussions on the topic of
  this research, and to Tristan Nguyen at AFOSR for his support.

  This work was supported in part by AFOSR under grants MURI FA9550-15-1-0053 and
  FA9550-19-1-0216, in part by the National Science Foundation under award number CCF-1901381, and by AFRL through the NDSEG fellowship.
  Any opinions, findings and conclusions or recommendations
  expressed in this material are those of the authors and do not necessarily
  reflect the views of the AFOSR, NSF, or AFRL.
\end{acks}
\fi

\bibliography{references/refs-bibtex,bib}

\appendix
\section{Implementation in \agda{}}\label{sec:implementation}

We give a brief overview of the Agda implementation of \calf{};
the codebase and installation instructions are accessible through the supplementary materials.
We define \calf{} in Agda by postulating the constants in the signature $\Sigma_{\calf{}}$ (see \cref{fig:calf})
and animating the associated equations with the newly added rewriting facilities~\citep{cockx-tabareau-winterhalter:2021}.
For instance, the basic judgmental structure of \calf{} may be specified by the following \agda{} postulates and definitions:

\begin{minipage}[c]{0.25\textwidth}
\ExecuteMetaData[sections/code.tex]{lang-g1}
\end{minipage}
\begin{minipage}[c]{0.25\textwidth}
\ExecuteMetaData[sections/code.tex]{lang-g2}
\end{minipage}
\begin{minipage}[c]{0.25\textwidth}
\ExecuteMetaData[sections/code.tex]{lang-g3}
\end{minipage}
\begin{minipage}[c]{0.25\textwidth}
\ExecuteMetaData[sections/code.tex]{lang-g4}
\end{minipage}

An example showcasing the rewriting feature is the inversion principle for $\mathsf{bind}$:
\begin{center}
\ExecuteMetaData[sections/code.tex]{lang-g5}
\end{center}
Observe that the \agda{} implementation of \calf{} constitutes an algebra (see \cref{def:algebra})
$\mathcal{A}_{\text{Agda}}$ in which \AgdaPrimitiveType{Set} plays the role of the universe of judgments \jdg{}.

\paragraph{Importing Agda data types}
To maximize reuse of the Agda mathematical library, we axiomatize a new connective to import Agda data types into \calf{}:
\footnote{Note that we do not need to internalize the entire universe \AgdaPrimitiveType{Set}:
in actuality we only use \AgdaPrimitiveType{meta}
to import types pertaining directly to our case studies,
a situation that may be more objectively described by defining \AgdaPrimitiveType{meta} on a collection of
relevant \emph{type codes} that \emph{decode} into types in \AgdaPrimitiveType{Set}.
For the sake of simplicity, we elide this distinction in the implementation.}
\begin{center}
\ExecuteMetaData[sections/code.tex]{lang-meta}
\end{center}
For example, the type of \calf{} natural numbers maybe defined as
\AgdaPrimitiveType{nat} = \AgdaPrimitiveType{U (meta $\Nat$)}.
Note that we define \AgdaPrimitiveType{meta} as a connective that internalizes Adga data types as
\emph{computation} types, a design decision motivated by the fact that the connective is mostly
used in positions where a computation is expected, \eg when defining a function of the form
$\tmv{A} \to \tmc{\AgdaPrimitiveType{meta}(\Nat)}$.
Observe that \AgdaPrimitiveType{meta} generalizes the constants
$\widehat{\mathbb{C}}$ and $\widehat{\le}$ defined in \cref{fig:calf-cost}.

\paragraph{Uniform \emph{vs.} nonuniform cost models}
In \calf{}, cost analysis with respect to an \emph{algorithm-specific} cost model
may directly take advantage of the importing mechanism because it is trivial to instrument the
relevant operations with the desired cost \emph{after} they are defined in a cost-free manner in Agda.
For instance, in the analysis of Euclid's algorithm (see \cref{sec:gcd})
we may directly import the Agda type of natural numbers \AgdaPrimitiveType{$\Nat$}
and the associated modulus function, which we instrument in \calf{} as $\textit{mod}_{\mathsf{inst}}$.
In contrast, a \emph{uniform} cost model such as the one required in the amortized analysis of batched queues (see \cref{sec:queues})
precludes one from importing the required data type. In the case of batched queues,
it is not possible to define an Agda inductive type that charges $\mathsf{step}$'s for list iteration,
so we must axiomatize such a cost-aware list type in \calf{} (see \cref{fig:calf-types-1}).
\clearpage
\section{Complete definition of \calf{}}\label{sec:calf-def}

\begin{figure}[h]
  \small
  \begin{align*}
    \tpv &: \jdg\\
    \mathsf{tm}^+ &: \tpv \to \jdg\\
    \mathsf{U} &: \tpc \to \tpv\\
    \mathsf{F} &: \tpv \to \tpc\\
    \mathsf{tm}^{\ominus}(X) &\coloneqq \tmv{\UU{X}}\\
    \mathsf{ret} &: (\isof{A}{\tpv}, \isof{a}{\tmv{A}}) \to \tmc{\F{A}}\\
    \mathsf{bind} &: \impl{\isof{A}{\tpv}, \isof{X}{\tpc}} \tmc{\F{A}} \to (\tmv{A} \to \tmc{X}) \to \tmc{X}\\
    \mathsf{tbind} &: \impl{\isof{A}{\tpv}} \to \tmc{\F{A}} (\tmv{A} \to \tpc) \to \tpc\\
    \mathsf{dbind} &: \impl{\isof{A}{\tpv}, \isof{X}{\tmv{A} \tpc}}  (\isof{e}{\tmc{\F{A}}}) \to ((\isof{a}{\tmv{A}}) \to \tmc{X(a)}) \to
    \tmc{\tbind{e}{X}}
  \end{align*}
  \caption{Core \dcbpv{} calculus.}
  \label{fig:calf-cbpv}
\end{figure}

\begin{figure}[h]
  \small
  \begin{align*}
    \mathsf{bind}/\mathsf{ret} &: \impl{A,X} (\isof{a}{\tmv{A}}) \to (\isof{f}{\tmv{A} \to \tmc{X}}) \to
    \bind{\ret{a}}{f} = f(a)\\
    \mathsf{tbind}/\mathsf{ret} &: \impl{A} (\isof{a}{\tmv{A}}) \to (\isof{f}{\tmv{A} \to \tpc}) \to
    \tbind{\ret{a}}{f} = f(a)\\
    \mathsf{dbind}/\mathsf{ret} &: \impl{A,X}
      (\isof{a}{\tmv{A}}) \to (\isof{f}{\isof{a}{\tmv{A}} \to \tmc{X(a)}}) \to
      \dbind{\ret{a}}{f} = f(a)
      \\
    \mathsf{bind}/\mathsf{assoc} &: \impl{A,B,X} (\isof{e}{\tmc{\F{A}}}) \to\\
    &(\isof{f}{\tmv{A} \to \tmc{\F{B}}}) \to (\isof{g}{\tmv{B} \to \tmc{C}}) \to\\
    &\bind{\bind{e}{f}}{g} = \bind{e}{\lambda a.\, \bind{f(a)}{g}}\\
    \mathsf{tbind}/\mathsf{assoc} &: \impl{A, B, X} (\isof{e}{\tmc{\F{A}}}) \to
    (\isof{f}{\tmv{A} \to \tmc{\F{B}}}) \to\\
    &\tbind{\bind{e}{f}}{X} = \tbind{e}{\lambda a.\, \tbind{f(a)}{X}}
  \end{align*}
  \caption{Computation and associativity laws for sequencing.}
  \label{fig:calf-inversion}
\end{figure}

\begin{figure}[h]
  \small
  \begin{gather*}
    \begin{aligned}[t]
      \mathbb{C} &: \jdg\\
      0 &: \mathbb{C}\\
      + &: \mathbb{C} \to \mathbb{C} \to \mathbb{C}\\
      \le &: \mathbb{C} \to \mathbb{C} \to \jdg{}\\
      \mathsf{costMon} &: \mathsf{isCostMonoid}(\mathbb{C}, 0, +, \le)
    \end{aligned}
    \qquad
    \begin{aligned}[t]
      \widehat{\mathbb{C}} &: \tpc\\
      (\mathsf{out}_{\mathbb{C}}, \mathsf{in}_{\mathbb{C}}) &: \tmc{\widehat{\mathbb{C}}} \cong \mathbb{C}\\
      \widehat{\le} &: \widehat{\mathbb{C}} \to \widehat{\mathbb{C}} \to \tpc\\
      (\mathsf{out}_{\le}, \mathsf{in}_{\le}) &: \impl{c,c'}
        \tmc{c \mathrel{\widehat{\le}} c'} \cong (\mathsf{out}_{\mathbb{C}}(c) \le \mathsf{out}_{\mathbb{C}}(c'))
    \end{aligned}
    \\[8pt]
    \begin{aligned}[t]
      \mathsf{step} &: \impl{\isof{X}{\tpc}} \mathbb{C} \to \tmc{X} \to \tmc{X}\\
      \mathsf{step}_{0} &: \impl{X,e} \mstep{0}{e} = e\\
      \mathsf{step}_{+} &: \impl{X,e,c_1,c_2} \mstep{c_1}{\mstep{c_2}{e}} = \mstep{c_1 + c_2}{e}
    \end{aligned}
    \\[8pt]
    \begin{aligned}
      \mathsf{bind}_{\mathsf{step}} &: \impl{A,X,e,f,c} \bind{\mstep{c}{e}}{f} = \mstep{c}{\bind{e}{f}}\\
      \mathsf{tbind}_{\mathsf{step}} &: \impl{A,X,e,f,c} \tbind{\mstep{c}{e}}{f} = \tbind{e}{f}\\
      \mathsf{dbind}_{\mathsf{step}} &: \impl{A,X,e,f,c} \dbind{\mstep{c}{e}}{f} = \mstep{c}{\dbind{e}{f}}\\
    \end{aligned}
  \end{gather*}
  \caption{Cost structure and cost effect.}
  \label{fig:calf-cost}
\end{figure}

\begin{figure}[h]
  \small
  \begin{align*}
    \ExtOpn &: \jdg\\
    \ExtOpn/{\mathsf{uni}} &: \impl{\isof{u,v}{\ExtOpn}} u = v\\
    \ext{A} &\coloneqq \ExtOpn \to A\\
    \mathsf{step}/{\ExtOpn} &: \impl{X , e, c} \ext{\mstep{c}{e} = e}\\
    \Op^+ &: \tpv \to \tpv\\
    \_ &: \impl{A} \tmv{\Op^+{A}} \cong \Op(\tmv{A})\\\\
    \Cl &: \tpv \to \tpv\\
    \eta_{\Cl} &: \tmv{A} \to \tmv{\Cl{A}}\\
    \ast &: \ExtOpn \to \tmv{\Cl{A}}\\
    \_ &: \Pi \isof{a}{\tmv{A}}.\, \Pi \isof{u}{\ExtOpn}.\, \eta_{\Cl}(a) = \ast(u)\\
    \mathsf{ind}_{\Cl} &: \impl{A} (\isof{a}{\tmv{\Cl{A}}}) \to (\isof{X}{\tmv{\Cl{A}} \to \tpc}) \to\\
    &(\isof{x_0}{(\isof{a}{\tmv{A}}) \to \tmc{X(\eta_{\Cl}(a))}}) \to\\
    &(\isof{x_1}{(\isof{u}{\ExtOpn}) \to \tmc{X(\ast(u))}}) \to\\
    &((\isof{a}{\tmv{A}}) \to (\isof{u}{\ExtOpn}) \to x_0(a) = x_1(u)) \to\\
    &\tmc{X(a)}\\
    \mathsf{ind}_{\Cl}/\eta &:\impl{A} (\isof{a}{\tmv{A}}) \to (\isof{X}{\tmv{\Cl{A}} \to \tpc}) \to\\
    &(\isof{x_0}{(\isof{a}{\tmv{A}}) \to \tmc{X(\eta_{\Cl}(a))}}) \to\\
    &(\isof{x_1}{(\isof{u}{\ExtOpn}) \to \tmc{X(\ast(u))}}) \to\\
    &(\isof{h}{(\isof{a}{\tmv{A}}) \to (\isof{u}{\ExtOpn}) \to x_0(a) = x_1(u)}) \to\\
    &\mathsf{ind}_{\Cl}(\eta_{\Cl}(a), X, x_0, x_1 , h) = x_0(a)\\
    \mathsf{ind}_{\Cl}/\ast &:\impl{A} (\isof{u}{\ExtOpn}) \to (\isof{X}{\tmv{\Cl{A}} \to \tpc}) \to\\
    &(\isof{x_0}{(\isof{a}{\tmv{A}}) \to \tmc{X(\eta_{\Cl}(a))}}) \to\\
    &(\isof{x_1}{(\isof{u}{\ExtOpn}) \to \tmc{X(\ast(u))}}) \to\\
    &(\isof{h}{(\isof{a}{\tmv{A}}) \to (\isof{u}{\ExtOpn}) \to x_0(a) = x_1(u)}) \to\\
    &\mathsf{ind}_{\Cl}(\ast(u), X, x_0, x_1 , h) = x_1(u)
  \end{align*}
  \caption{Modal account of the phase distinction.}
  \label{fig:calf-phaseDistinction}
\end{figure}

\begin{figure}[h]
  \small
  \begin{align*}
    \Pi &: (\isof{A}{\tpv}, \isof{X}{\tmv{A} \to \tpc}) \to \tpc\\
    (\mathsf{ap}, \mathsf{lam}) &: \impl{A,X} \tmc{\Pi(A; X)} \cong (\isof{a}{\tmv{A}}) \to \tmc{X(a)}\\\\
    \Sigma^{++} &: (\isof{A}{\tpv}, \isof{B}{\tmv{A} \to \tpv}) \to \tpv\\
    (\mathsf{unpair}^{++}, \mathsf{pair}^{++}) &: \impl{A,B} \tmv{\sigpos{A}{B}} \cong \Sigma (\tmv{A}) (\lambda a.\, \tmv{B(a)})\\
    \Sigma^{+-} &: (\isof{A}{\tpv}, \isof{X}{\tmv{A} \to \tpc}) \to \tpc\\
    (\mathsf{unpair}^{+-}, \mathsf{pair}^{+-}) &: \impl{A,X} \tmc{\signeg{A}{X}} \cong \Sigma (\tmv{A}) (\lambda a.\, \tmc{X(a)})\\\\
    \mathsf{eq} &: (\isof{A}{\tpv}) \to \tmv{A} \to \tmv{A} \to \tpv\\
    \mathsf{self} &: \impl{A} (\isof{a,b}{\tmv{A}}) \to a =_{\tmv{A}} b \to \tmv{\eqty{A}{a}{b}}\\
    \mathsf{ref} &: \impl{A} (\isof{a,b}{\tmv{A}}) \to \tmc{\F{\eqty{A}{a}{b}}} \to a =_{\tmv{A}} b\\
    \mathsf{uni} &: \impl{A,a,b} (\isof{p,q}{\tmc{\F{\eqty{A}{a}{b}}}}) \to \Op{(p = q)}\\\\
    \unit &: \tpv\\
    \triv &: \tmv{\unit}\\
    \eta_{\unit} &: \impl{u,v} u =_{\tmv{\unit}} v\\\\
    + &: \tpv \to \tpv \to \tpv\\
    \mathsf{inl} &: \impl{A,B} \tmv{A} \to \tmv{\sumty{A}{B}}\\
    \mathsf{inr} &: \impl{A,B} \tmv{B} \to \tmv{\sumty{A}{B}}\\
    \mathsf{case} &: \impl{A,B} (\isof{s}{\tmv{\sumty{A}{B}}}) \to (\tmv{\sumty{A}{B}} \to \tpc) \to\\
    &((\isof{a}{\tmv{A}}) \to \tmc{X(\inl{a})}) \to\\
    &((\isof{b}{\tmv{B}}) \to \tmc{X(\inl{b})}) \to
    \tmc{X(s)} \\
    \mathsf{case}_{\mathsf{inl}} &: \impl{A,B,X,e_0,e_1} (\isof{a}{\tmv{A}}) \to
    \sumcase{\inl{a}}{X}{e_0}{e_1} = e_0(a)\\
    \mathsf{case}_{\mathsf{inr}} &: \impl{A,B,X,e_0,e_1} (\isof{b}{\tmv{B}}) \to
    \sumcase{\inr{b}}{X}{e_0}{e_1} = e_1(b)\\\\
    \nat &: \tpv\\
    \zero &: \tmv{\nat}\\
    \mathsf{suc} &: \tmv{\nat} \to \tmv{\nat}\\
    \mathsf{rec} &: (\isof{n}{\tmv{\nat}}) \to (\isof{X}{\tmv{\nat} \to \tpc}) \to \tmc{X(\zero)} \to\\
      &((\isof{n}{\tmv{\nat}}) \to \tmc{X(n)} \to \tmc{X(\suc{n})}) \to \tmc{X(n)}\\
    \mathsf{rec}/\mathsf{zero} &: \impl{X,e_0,e_1} \rec{\zero}{X}{e_0}{e_1} = e_0\\
    \mathsf{rec}/\mathsf{suc} &: \impl{n,X,e_0,e_1} \rec{\suc{n}}{X}{e_0}{e_1} = e_1(n)(\rec{n}{X}{e_0}{e_1})
  \end{align*}
  \caption{Types}
  \label{fig:calf-types}
\end{figure}

\begin{figure}[h]
  \small
  \begin{align*}
    \mathsf{L} &: \Alert{\mathbb{C}} \to \tpv \to \tpv\\
    \nilex &: \impl{c,A} \calist{c}{A}\\
    \mathsf{cons} &: \impl{c,A} \tmv{A} \to \tmv{\calist{c}{A}} \to \tmv{\calist{c}{A}}\\
    \mathsf{rec} &: \impl{c,A} (\isof{l}{\tmv{\calist{c}{A}}}) \to (\isof{X}{\tmv{\calist{c}{A}} \to \tpc}) \to\\
    &(\tmc{X(\nilex)}) \to \\
    &((\isof{a}{\tmv{A}}) \to (\isof{l}{\tmv{\calist{c}{A}}}) \to \tmc{X(l)} \to \tmc{X(\consex{a}{l})}) \to \\
    &\tmc{X(l)}\\
    \mathsf{rec}/\mathsf{nil} &: \impl{c,A,X,e_0,e_1} \listrec{\nilex}{X}{e_0}{e_1} = e_0\\
    \mathsf{rec}/\mathsf{cons} &: \impl{c,A,a,X,e_0,e_1} (\isof{l}{\tmv{\calist{c}{A}}}) \to\\
    &\listrec{\consex{a}{l}}{X}{e_0}{e_1} = \Alert{\mathsf{step}^c}(e_1(a)(l)(\listrec{l}{X}{e_0}{e_1}))
  \end{align*}
  \caption{Types, continued}
  \label{fig:calf-types-1}
\end{figure}

\begin{figure}[h]
  \small
  \begin{align*}
    \mathsf{lam}_{\mathsf{step}} &: \impl{A,X,c} (\isof{f}{(\isof{a}{\tmv{A}}) \to \tmc{X(a)}}) \to \mstep{c}{\lam{f}} = \lam{\mstep{c}{f}}\\
    \mathsf{pair}^{+-}_{\mathsf{step}} &: \impl{A,X,c} (\isof{e}{\Sigma (\tmv{A})(\lambda a.\, \tmc{X(a)})}) \to \mstep{c}{e} = (\fst{e} , \mstep{c}{\snd{e}})\\
    \mathsf{case}_{\mathsf{step}} &: \impl{A,B,X,e_0,e_1,c} (\isof{s}{\tmv{A + B}}) \to\\
    &\mstep{c}{\sumcase{s}{X}{e_0}{e_1}} = \sumcase{s}{X}{\lambda a.\, \mstep{c}{e_0(a)}}{\lambda b.\, \mstep{c}{e_1(b)}}
  \end{align*}
  \caption{Interaction of $\mathsf{step}$ with type structure.}
  \label{fig:calf-types-step}
\end{figure}

\begin{figure}[h]
  \small
  \begin{align*}
    \& &: \impl{\isof{A,B}{\tpv}} \tmc{\F{A}} \to \tmc{\F{B}} \to \tmc{\F{A \times B}}\\
    \&_{\mathsf{join}} &: \impl{A,B,c_1,c_2,a,b}
    \para{\left(\mstep{c_1}{\ret{a}}\right)}{\left(\mstep{c_2}{\ret{b}}\right)} = \mstep{c_1 \otimes c_2}{\ret{(a,b)}}
  \end{align*}
  \caption{Parallelism.}
  \label{fig:calf-para}
\end{figure}

\end{document}